\documentclass[a4paper,UKenglish,cleveref, autoref, thm-restate,nolineno]{socg-lipics-v2021}

\usepackage{amsthm,amsmath,amssymb,amsfonts}
\usepackage{todonotes}
\usepackage{framed}
\usepackage{cleveref}
\usepackage{boxedminipage}
 \usepackage{tikz}
 \usepackage{pgfmath}
\usepackage{dsfont}
\usepackage{xspace}
\usepackage{graphicx}
\usepackage[numbers]{natbib}
\usepackage{nicefrac}
\usepackage{todonotes} 
\setuptodonotes{inline}
\usepackage{xcolor} 
\usepackage{tikz}
\usetikzlibrary{calc}
\usepackage{boxedminipage}
\usepackage{framed}
\usepackage{thm-restate}
\usepackage{tabularx}
\usepackage{tcolorbox}
\usepackage{etoolbox}
\usepackage{xifthen}
\usepackage{listings}

\newtheorem{reduction}{Reduction Rule}

\newcommand{\shortversion}[1]{}
\newcommand{\deferProof}[0]{\shortversion{\rm{ ($\star$)}}}
\newcommand{\longversion}[1]{#1}

\newcommand{\mcm}[3]{\newcommand{#1}[#2]{{\ensuremath{#3}}}} 
\mcm{\Nbb}{0}{\mathbb{N}}
\mcm{\Zbb}{0}{\mathbb{Z}}
\mcm{\Rbb}{0}{\mathbb{R}}
\mcm{\Cbb}{0}{\mathbb{C}}
\mcm{\Qbb}{0}{\mathbb{Q}}
\mcm{\Acal}{0}{\cal A}
\mcm{\Bcal}{0}{\cal B}
\mcm{\Ccal}{0}{\cal C}
\mcm{\Dcal}{0}{\cal D}
\mcm{\Ecal}{0}{\cal E}
\mcm{\Fcal}{0}{\cal F}
\mcm{\Gcal}{0}{\cal G}
\mcm{\Hcal}{0}{\cal H}
\mcm{\Ical}{0}{\cal I}
\mcm{\Jcal}{0}{\cal J}
\mcm{\Kcal}{0}{\cal K}
\mcm{\Lcal}{0}{\cal L}
\mcm{\Mcal}{0}{\cal M}
\mcm{\Ncal}{0}{\cal N}
\mcm{\Ocal}{0}{{\cal O}}
\mcm{\Pcal}{0}{{\cal P}}
\mcm{\Qcal}{0}{{\cal Q}}
\mcm{\Rcal}{0}{{\cal R}}
\mcm{\Scal}{0}{{\cal S}}
\mcm{\Tcal}{0}{{\cal T}}
\mcm{\Ucal}{0}{{\cal U}}
\mcm{\Vcal}{0}{{\cal V}}
\mcm{\Wcal}{0}{{\cal W}}
\mcm{\Xcal}{0}{{\cal X}}
\mcm{\Ycal}{0}{{\cal Y}}
\mcm{\Zcal}{0}{{\cal Z}}

\newcommand{\Oh}{\mathcal{O}}
\newcommand{\koutl}{k_{O}}
\newcommand{\kmod}{k_{M}}
\newcommand{\woutl}{w_{O}}
\newcommand{\wmod}{w_{M}}
\newcommand{\bigoh}[0]{{\mathcal O}}

\newcommand{\opt}{\mathsf{Opt}}
\newcommand{\rank}{\mathsf{rank}}

\newcommand{\bfm}{\mathbf{m}}

\newcommand{\dist}{\operatorname{\rho}}

\newcommand{\WEEO}{\textsc{Euclidean Embedding Editing}\xspace}   

\newcommand{\EEO}{\textsc{Euclidean Embedding with Outliers}\xspace}   
\newcommand{\UEEO}{\textsc{UEEO}\xspace}

\newcommand{\EEDE}{\textsc{Euclidean Metric Violation Distance}\xspace}   

\DeclareMathOperator{\operatorClassNP}{NP}
\newcommand{\classNP}{\ensuremath{\operatorClassNP}\xspace}

\DeclareMathOperator{\operatorClassW}{W}
\newcommand{\classW}[1]{\ensuremath{\operatorClassW[#1]}}
\DeclareMathOperator{\operatorClassParaNP}{Para-NP\xspace}
\newcommand{\classParaNP}{\ensuremath{\operatorClassParaNP}\xspace}

\newlength{\RoundedBoxWidth}
\newsavebox{\GrayRoundedBox}
\newenvironment{GrayBox}[1]%
   {\setlength{\RoundedBoxWidth}{.93\textwidth}
    \def\boxheading{#1}
    \begin{lrbox}{\GrayRoundedBox}
       \begin{minipage}{\RoundedBoxWidth}}%
   {   \end{minipage}
    \end{lrbox}
    \begin{center}
    \begin{tikzpicture}%
       \node(Text)[draw=black!20,fill=white,rounded corners,%
             inner sep=2ex,text width=\RoundedBoxWidth]%
             {\usebox{\GrayRoundedBox}};
        \coordinate(x) at (current bounding box.north west);
        \node [draw=white,rectangle,inner sep=3pt,anchor=north west,fill=white] 
        at ($(x)+(6pt,.75em)$) {\boxheading};
    \end{tikzpicture}
    \end{center}}
    
\newenvironment{defproblemx}[2][]{\noindent\ignorespaces%
                                \FrameSep=6pt%
                                \parindent=0pt%
                \vspace*{-1.5em}
                \ifthenelse{\isempty{#1}}{%
                  \begin{GrayBox}{\textsc{#2}}%
                }{%
                  \begin{GrayBox}{\textsc{#2}  parameterized by~{#1}}%
                }
                \begin{tabular*}{\textwidth}{@{\hspace{.1em}} >{\itshape} p{1.8cm} p{0.8\textwidth} @{}}%
            }{
                \end{tabular*}%
                \end{GrayBox}%
                \ignorespacesafterend
            }  

\newcommand{\defproblem}[3]{
  \begin{defproblemx}{#1}
    Input:  & #2 \\
    Question: & #3
  \end{defproblemx}
}%
   
\usepackage[framemethod=tikz]{mdframed}

\definecolor{mycolor}{rgb}{0.122, 0.435, 0.698}
\newmdenv[innerlinewidth=0.02pt, roundcorner=4pt,linecolor=mycolor,innerleftmargin=6pt,
innerrightmargin=6pt,innertopmargin=6pt,innerbottommargin=6pt]{mybox}
\newmdenv[innerlinewidth=0.5pt, roundcorner=4pt,linecolor=black,innerleftmargin=6pt,
innerrightmargin=6pt,innertopmargin=6pt,innerbottommargin=6pt]{myboxblack}
\newmdenv[innerlinewidth=0.5pt, roundcorner=4pt,linecolor=mycolor,innerleftmargin=6pt,
innerrightmargin=6pt,innertopmargin=6pt,innerbottommargin=6pt]{myboxthick}

\bibliographystyle{plainurl}

\longversion{\title{When Distances Lie:  Euclidean Embeddings in the Presence of Outliers and Distance Violations}} 

\author{Matthias Bentert}{University of Bergen, Norway}{matthias.bentert@uib.no}{}{Supported by the European Research Council (ERC) under the European Union's Horizon 2020 research and innovation programme (grant agreement No. 819416)}

\author{Fedor V. Fomin}{University of Bergen, Norway}{fedor.fomin@uib.no}{https://orcid.org/0000-0003-1955-4612}{Supported by the Research Council of Norway under BWCA project (grant no.~314528)}

\author{Petr A. Golovach}{University of Bergen, Norway}{petr.golovach@uib.no}{https://orcid.org/0000-0002-2619-2990}{Supported by the Research Council of Norway under BWCA project (grant no.~314528)}

\author{M. S. Ramanujan}{University of Warwick, UK}{r.maadapuzhi-sridharan@warwick.ac.uk}
{https://orcid.org/0000-0002-2116-6048}{Supported by Engineering and Physical Sciences Research Council (EPSRC) grant EP/V044621/1.}

\author{Saket Saurabh}{Institute of Mathematical Sciences, Chennai, India and University of Bergen, Norway}{saket@imsc.res.in}{https://orcid.org/0000-0001-7847-6402}{Supported by the European Research Council (ERC) under the European Union's Horizon 2020 research and innovation programme (grant agreement No. 819416); and Swarnajayanti
Fellowship grant DST/SJF/MSA-01/2017-18.}

\authorrunning{M. Bentert, F. V. Fomin, P. A. Golovach, M. S. Ramanujan, S. Saurabh} 

\Copyright{Matthias Bentert, Fedor V. Fomin, Petr A. Golovach, M. S. Ramanujan, Saket Saurabh} 

\ccsdesc[500]{Mathematics of computing~Combinatorial algorithms}
\ccsdesc[500]{Theory of computation~Fixed parameter tractability}

\keywords{Parameterized Complexity, Euclidean Embedding, FPT-approximation} 

\category{} 

\relatedversion{} 





\EventEditors{John Q. Open and Joan R. Access}
\EventNoEds{2}
\EventLongTitle{42nd Conference on Very Important Topics (CVIT 2016)}
\EventShortTitle{}
\EventAcronym{}
\EventYear{}
\EventDate{December 24--27, 2016}
\EventLocation{Little Whinging, United Kingdom}
\EventLogo{}
\SeriesVolume{}
\ArticleNo{}

\hideLIPIcs

\begin{document}

\maketitle

\begin{abstract}
Distance geometry explores the properties of distance spaces that can be exactly represented as the pairwise Euclidean distances between points in $\mathbb{R}^d$ ($d \geq 1$), or equivalently, distance spaces that can be isometrically embedded in  $\mathbb{R}^d$.
In this work, we investigate whether a distance space can be isometrically embedded in  $\mathbb{R}^d$ after applying a limited number of modifications. Specifically, we focus on two types of modifications: \textit{outlier deletion} (removing points) and \textit{distance modification} (adjusting distances between points). The central problem, \WEEO, asks whether an input distance space on $n$ points can be transformed, using at most $k$ modifications, into a space that is isometrically embeddable in $\mathbb{R}^d$.

We present several fixed-parameter tractable (FPT) and approximation algorithms for this problem. Our first result is an algorithm that solves  \WEEO{} in time $(dk)^{\mathcal{O}(d+k)} + n^{\mathcal{O}(1)}$. The core subroutine of this algorithm, which is of independent interest, is a polynomial-time method for compressing the input distance space into an equivalent instance of \WEEO{} with $\mathcal{O}((dk)^2)$ points.

For the special but important case of \WEEO{} where only outlier deletions are allowed, we improve the parameter dependence of the FPT algorithm and obtain a running time of $\min\{(d+3)^k, 2^{d+k}\} \cdot n^{\mathcal{O}(1)}$. Additionally, we provide an FPT-approximation algorithm for this problem, which outputs a set of at most $2 \cdot \opt$ outliers in time $2^d \cdot n^{\mathcal{O}(1)}$. This 2-approximation algorithm improves upon the previous $(3+\varepsilon)$-approximation algorithm by Sidiropoulos, Wang, and Wang [SODA '17]. Furthermore, we complement our algorithms with hardness results motivating our choice of parameterizations.

The problem of embedding noisy distance data into Euclidean space has diverse applications, ranging from sensor network localization and molecular conformation to data visualization. To the best of our knowledge, apart from the work of Sidiropoulos, Wang, and Wang [SODA '17], our paper provides the first rigorous algorithmic analysis of this significant problem.

\end{abstract}

\section{Introduction}
 
The Euclidean Distance Matrix (EDM) is a matrix containing the squared Euclidean distances between points in a set. A central problem in Distance Geometry is determining whether a given matrix is an EDM \cite{blumenthal1970theory,liberti2017euclidean,dattorro2008convex,deza1997geometry}. 
That is, the task is to identify whether a  given distance space can be isometrically embedded into $\ell_2$-spaces, or equivalently, the pairwise Euclidean distances among points in $\mathbb{R}^d$ ($d \geq 1$). This problem  has a rich history, originating with Cayley~\cite{Cayley1841}, whose observations were formalized by Menger~\cite{Menger1928}, leading to Cayley-Menger determinants. Schoenberg~\cite{Schoenberg1935} further advanced the field by characterizing $\ell_2$-embeddable distances using negative type inequalities. Blumenthal's monograph~\cite{blumenthal1970theory} remains a foundational text, documenting the theoretical underpinnings of this area.

This problem has a wide range of applications, including sensor network localization \cite{patwari2005locating, doherty2001convex}, molecular conformation \cite{havel1985evaluation}, data visualization \cite{borg2005modern}, statistics \cite{everitt1997analysis}, psychology \cite{torgerson1958theory, shepard1962analysis}, learning image manifolds \cite{weinberger2004unsupervised}, handwriting recognition \cite{jain2004exploratory}, studying musical rhythms \cite{demaine2009distance}, and signal processing \cite{dokmanic2015euclidean}.
The complexity of determining whether a matrix is an EDM is well understood, and it can be efficiently addressed using Singular Value Decomposition (SVD) \cite{borg2005modern, torgerson1952multidimensional}.

In many practical applications involving EDMs, errors due to noise, missing values, or approximation inaccuracies are common. To address these challenges, several heuristic approaches based on Multidimensional Scaling, Low-Rank Matrix Approximation, and Semidefinite Programming have been developed to reconstruct the matrix and create an embedding that minimizes the impact of these errors. We refer readers to \cite{borg2005modern, dokmanic2015euclidean} for a comprehensive overview of the extensive literature on this topic.

Despite the importance and numerous practical applications of the EDM recognition problem with noise and errors, little was known about its computational complexity until recently. A notable exception is the work of Sidiropoulos, Wang, and Wang \cite{SidiropoulosWW17}, which initiates the study of EDM in the presence of outliers. The work of Sidiropoulos et al. serves as the initial foundation for our studies.

In this paper, we address the algorithmic question of minimizing the number of edits required to transform a given distance matrix into EDM. Specifically, we aim to apply the minimum number of modifications to a given \emph{distance space} \( (X, \dist) \) with distance function \( \dist \), such that the resulting distance space can be isometrically embedded into a \( d \)-dimensional Euclidean space \( \mathbb{R}^d \). We consider two types of editing operations.

The first operation is \emph{element deletion}. In matrix terms, this operation corresponds to deleting the row and column associated with the element we choose to remove. Following Sidiropoulos, Wang, and Wang \cite{SidiropoulosWW17}, we refer to the elements removed from the distance space as \emph{outliers}.

The second operation is \emph{distance modification}. Let \( X^{(2)} \) denote the set of unordered pairs of distinct elements in \( X \). For a pair of elements \( i, j \in X \), the modification operation alters the distance \( \dist(x, y) \). In terms of the distance matrix, this operation changes the ${ij}$-th and the ${ji}$-th entries. The problem of minimizing the number of modification operations to embed the resulting distance space in general metric spaces was studied in \cite{FanRB18,cohen2022fitting}, and embedding into ultrametric spaces was investigated in \cite{cohen2022fitting,charikar2024improved}.

Our primary algorithmic question is as follows: given integers \( \koutl \) and \( \kmod \), can a distance space \( (X, \dist) \) be transformed into a distance space that is embeddable in \( \mathbb{R}^d \) by removing at most \( \koutl \) outliers and performing at most \( \kmod \) modifications? In fact, we address an even more general weighted version of this problem, where each operation (outlier deletion or distance modification) is assigned a specific weight. More precisely, we study the following problem.

\medskip
\defproblem{\WEEO}
{Distance space $\Xcal=(X,\dist)$, weight functions $\woutl\colon X\rightarrow \mathbb{Z}_{\geq 0}$ and $\wmod\colon X^{(2)}\rightarrow \mathbb{Z}_{\geq 0}$, integers $W,\koutl,\kmod\geq 0$, and $d\geq 1$. }
{Is there a subset of at most $\koutl$ outliers $O\subseteq X$ and a set $D\subseteq X^{(2)}$ of at most $\kmod$ distances with $\woutl(O)+\wmod(D)\leq W$ such that the distance subspace $(X\setminus O, \dist')$ where $\dist'$ is obtained by modifying the values $\dist(x,y)$ for $\{x,y\}\in D$  is isometrically embeddable into $\mathbb{R}^d$?}

Two special cases of 
\WEEO  are of particular importance.
 The variant of the problem with $\kmod=0$, that is, 
of placing all but $\koutl$ outlier points of a distance space $(X,\dist)$ into a Euclidean space $\mathbb{R}^d$ of a given dimension  $d$ such that the Euclidean distance between any pair of points $x,$ is equal to  $\dist(x,y)$,  is called  \EEO \cite{SidiropoulosWW17}. For the variant with $\koutl=0$, that is, without deletion of outliers, following \cite{FanRB18}, we use the name \EEDE.

\medskip\noindent
EXAMPLE: The distance space $\Xcal=(X,\dist)$ with $X=\{1, \dots, 9\}$ and $\dist$ defined by distance matrix $D$, where the $ij$-th entry of $D$ is $\dist^2(i,j)$
\[
D=\begin{bmatrix}
        0 & \textcolor{red}{7} & 1 & 2 & 4 & 5 &\textcolor{blue}{1}& 4 & 5 \\
        \textcolor{red}{7} & 0 & 2 & 1 & 1 & 2 &\textcolor{blue}{10}& 5 & 4 \\
        1 & 2 & 0 & 1 & \textcolor{red}{11} & 4 &\textcolor{blue}{4}& 1 & 2 \\
        2 & 1 & 1 & 0 & 2 & 1 &\textcolor{blue}{8}& 2 & 1 \\
        4 & 1 & \textcolor{red}{11} & 2 & 0 & 1 &\textcolor{blue}{12}& 8 & 5 \\
        5 & 2 & 4 & 1 & 1 & 0 &\textcolor{blue}{7}& 5 & 2 \\
        \textcolor{blue}{1} & \textcolor{blue}{10} & \textcolor{blue}{4} & \textcolor{blue}{8} & \textcolor{blue}{12} & \textcolor{blue}{7 }&\textcolor{blue}{0}& \textcolor{blue}{8} & \textcolor{blue}{9} \\
        4 & 5 & 1 & 2 & 8 & 5 &\textcolor{blue}{8}& 0 & 1 \\
        5 & 4 & 2 & 1 & 5 & 2 &\textcolor{blue}{9}& 1 & 0 \\
        \end{bmatrix},
        \]
and  $\koutl=1$, $\kmod=2$, $d=2$, all weights equal to one, and $W=3$, is a yes-instance of \WEEO. Indeed, the 
 distance space $\Xcal'=(X',\dist')$ defined by distance matrix $D'$ in Fig.~\ref{fig:edm_embedding} that can be obtained from  
 $\Xcal$
by modifying distances between $\textcolor{red}{\{1,2\}}$ and 
$\textcolor{red}{\{3,5\}}$ and deleting outlier $\textcolor{blue}{7}$, is  isometrically embedable in $\mathbb{R}^2$.
  

 
 \begin{figure}[h!]
    \centering
    \begin{minipage}{0.45\textwidth}
        \[D'=
        \begin{bmatrix}
        {0} & 1 & 1 & 2 & 4 & 5 & 4 & 5 \\
        1 & 0 & 2 & 1 & 1 & 2 & 5 & 4 \\
        1 & 2 & 0 & 1 & 5 & 4 & 1 & 2 \\
        2 & 1 & 1 & 0 & 2 & 1 & 2 & 1 \\
        4 & 1 & 5 & 2 & 0 & 1 & 8 & 5 \\
        5 & 2 & 4 & 1 & 1 & 0 & 5 & 2 \\
        4 & 5 & 1 & 2 & 8 & 5 & 0 & 1 \\
        5 & 4 & 2 & 1 & 5 & 2 & 1 & 0 \\
        \end{bmatrix}
        \]
    \end{minipage}
    \hfill
    \begin{minipage}{0.5\textwidth}
        \centering
        \includegraphics[width=\textwidth]{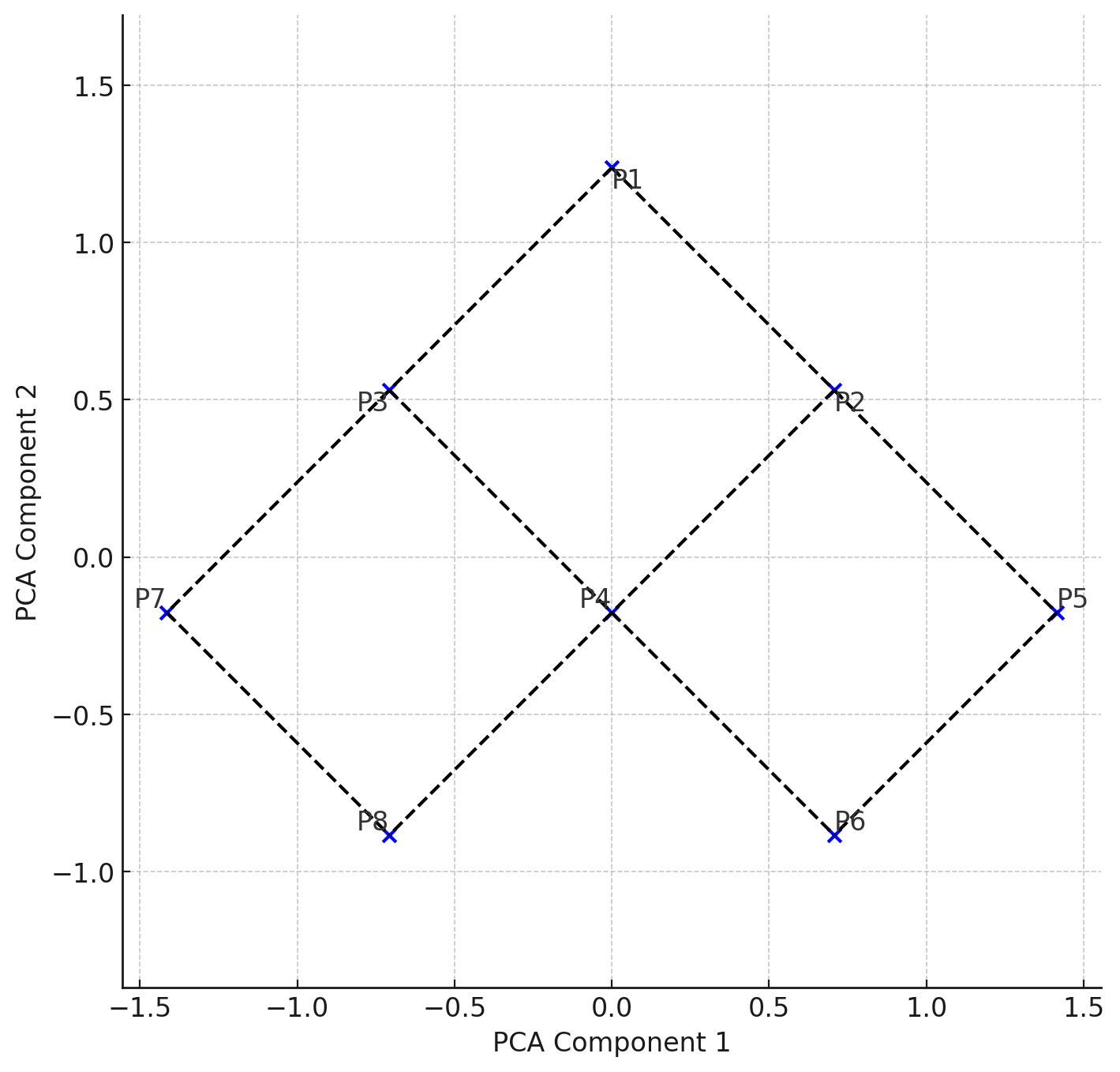}   
    \end{minipage}
     \caption{Matrix $D'$ and an embedding of its distance space   in $\mathbb{R}^2$.}
      \label{fig:edm_embedding}
\end{figure}

\subsection{Our results}
First, we present a compression  for \WEEO---a polynomial-time algorithm that reduces an instance of the problem to an equivalent instance with the number of points bounded by~$\Oh(k^2d^2)$ (\cref{thm:kernel}), where~$k = \koutl + \kmod$. As part of this compression, we also propose a~$(d+3)$-approximation algorithm for \EEO{}, which runs in polynomial time (\cref{lem:greedy}).
Using this compression algorithm, we design an FPT algorithm for \WEEO{} with a running time of~$(dk)^{\Oh(d+k)} + n^{\Oh(1)}$ (\cref{thmmodifFPT}). 

For \EEO{}, a particular but important special case of \WEEO{}---where the goal is to determine whether the distance space, excluding up to $k_O$ outliers, can be embedded in $\mathbb{R}^d$---we obtain better dependence on the parameters with a running time of~$\min\{(d+3)^{\koutl}, 2^{d+\koutl}\}\cdot n^{\Oh(1)}$ (\cref{thm:FPToutliers}). Furthermore, for this problem, we propose a 2-approximation algorithm (\cref{thm:2apprx}). This randomized algorithm, with a running time of~$2^d \cdot n^{\Oh(1)}$, guarantees a solution with at most $2 \cdot \opt$ outliers.
As is common in Computational Geometry, all our algorithms operate under the \emph{real RAM} computational model, assuming that basic operations over real numbers can be executed in unit time.

We also complement our algorithmic results with lower-bound proofs, establishing that both \EEO{} and \EEDE{} are \classNP-hard even when~$d=1$ (\cref{thm:outlier-d,thm:mod-d}). So, FPT algorithms for these problems parameterized by $d$ alone appears to be out of reach, motivating our FPT-approximation algorithm. Additionally, we prove that \EEO{} is W[1]-hard when parameterized by $\koutl$ alone (\cref{thm:w-hard}). These lower bounds indicate that to get an FPT algorithm for {\WEEO}, it is important that we parameterize by {\em both} $d$ and the solution size $\koutl+\kmod$. Importantly, our lower-bound results remain valid even in the unweighted case.

\subsection{Related work}
The computational complexity study of \EEO was initiated by Sidiropoulos, Wang, and Wang \cite{SidiropoulosWW17}. They demonstrated that, assuming the Unique Games Conjecture, the problem of computing a minimum outlier embedding into \(d\)-dimensional Euclidean space for any \(d \geq 2\) is NP-hard to approximate within a factor of \(2 - \varepsilon\) for any \(\varepsilon > 0\). On the algorithmic side, they showed that a \(2\)-approximation can be achieved in \(\Oh(n^{d+3})\) time. Additionally, they established that a \((3 + \varepsilon)\)-approximation is achievable in \((2/\varepsilon)^{d} d^{\Oh(1)} n^2 \log{n}\) time. For exact solutions, they noted an algorithm with a runtime of \(\Oh(n^{d+3}) + 2^{\koutl} n^2\).

They also obtained  results related to bicriteria approximation involving nonisomorphic embeddings, approximation of the number of outliers, and distortion.
The first result, is $(\Oh(\sqrt{\delta}),(2d+2)\koutl)$-relative outlier embedding in $\mathbb{R}^d$ in time $\frac{1}{\delta^{\Oh(1)}} 2^{\Oh(d)}n^2\log{n}$. The second result, $(\Oh(\sqrt{\delta}),2\koutl)$-relative  outlier embedding in time $\frac{1}{\delta^{\Oh(1)}} {\koutl}^{\Oh(d)}n^2$. 
(They define an algorithm as an  $\Oh(f (\delta), g(\koutl))$-relative outlier embedding of $\Xcal=(X,\dist)$ to $\mathbb{R}^d$, for some functions $f$ and $g$, if it either correctly decides that no embedding with distortion at most $\delta$ exists after removing $\koutl$ outliers, or outputs a set $Y$ of size $g(\koutl)$ such that there is an embedding $X\setminus Y$ into $\mathbb{R}^d$ 
of distortion $\Oh(f (\delta))$.) Since
allowing distortion could significantly decrease the number of outliers, 
our 2-approximation algorithm  for \EEO,  \Cref{thm:2apprx}, is incomparable with these results.

We are not aware of any algorithms with guaranteed performance for  \EEDE.  Parameterized and approximation algorithms for metric violation distance problems for 
general metric, tree distances  and ultrametric  could be found in 
 \cite{FanRB18,charikar2024improved,cohen2022fitting,gilbert2017sparse}.
More generally, embeddings of various metric spaces are a fundamental primitive in the design of algorithms \cite{Indyk01,indyk2004low,linial1995geometry,Linial02,arora2008euclidean,arora2009expander}, though prior work often focuses on minimizing embedding distortion.


\section{Preliminaries}\label{sec:prelim}

Let $X$ be a set. A function $\dist \colon X\times X \to \mathbb{R}_{\ge 0}$  is a \emph{distance} on $X$ if: 
(i) $\dist$ is symmetric, that is,  for any $x, y \in X$, $\dist(x, y) = \dist(y, x)$, and  
(ii) $\dist(x, x) = 0$ for all  $x\in X$.
Then, $ (X, \dist)$ is called a \emph{distance space}. 
%
If, in addition $\dist$ satisfies a triange inequality: $\dist(x, z) \le \dist(x, y) + \dist(y, z)$, for any $x, y, z \in X$, then $\dist$ is called a semimetric on $X$. And if $\dist(x,y)=0$ only for $x=y$, then $\dist$ is called a \emph{metric}, and $ (X,\dist)$ a \emph{metric space}.

More precisely, recall that 
for two points 
$p, q \in \mathbb{R}^d$, the Euclidean distance between $p$ and $q$ is  $\| p - q \|_2=\sqrt{\langle p,p\rangle+ \langle q,q\rangle - 2\langle p,q\rangle}$. 
We say that distance space $(X,\dist)$ is \emph{isometrically embeddable} into $\mathbb{R}^d$ if there is a map, called \emph{isometric embedding}, $\varphi \colon X\to \mathbb{R}^d$ such that 
$\dist(x,y)=\|\varphi(x)-\varphi(y)\|_2$
for all $x,y\in X$. Notice that we do not require $\varphi$ to be injective, that is, several points of $(X,\dist)$ may be mapped to the same point of $\mathbb{R}^d$.
Throughout the paper, whenever we mention an embedding, we mean an isometric embedding. Moreover, when we use the term $d$-embedding, we are referring to embedding into $\mathbb{R}^{d}$.  A $d$-embeddable  
  distance space is {\em strongly $d$-embeddable} if it is not $(d-1)$-embeddable. We use $X^{(2)}$ to denote the set of unordered pairs of two distinct elements of $X$. As convention, we assume that the empty set of points is $d$-embeddable 
for every $d$.

Let $\Xcal=(X,\dist)$ be a distance space where $X=\{x_1,\ldots,x_n\}$, and let $\dist_{i,j}=\dist(x_i,x_j)$ for all $i,j\in\{1,\ldots,n\}$. Then $\Xcal$ is equivalently defined by the \emph{distance matrix}  

\[
D(\rho)=\left(
 \begin{matrix}
0 & \dist_{1,2}^2 & \dist_{1,3}^2 & \dots & \dist_{1,n}^2 \\
\dist_{2,1}^2 & 0 & \dist_{2,3}^2 & \dots & \dist_{2,n}^2 \\
\dist_{3,1}^2 & \dist_{3,2}^2 & 0 & \dots & \dist_{3,n}^2 \\
\vdots & \vdots & \vdots & \ddots & \vdots \\
\dist_{n,1}^2 & \dist_{n,2}^2 & \dist_{n,3}^2 & \dots & 0 \\
\end{matrix} \right)
\]

We drop the explicit reference to $\rho$ when it is clear from the context and instead of $D(\rho)$, simply write $D$.
Suppose that $\Xcal$ is embeddable into $\mathbb{R}^d$. The ordered set 
$P=(p_1,\ldots,p_n)$ of points in $\mathbb{R}^d$ is said to be a \emph{realization} of $\Xcal$ if there is an embedding $\varphi \colon X\to \mathbb{R}^d$ such that $\varphi(x_i)=p_i$ for all $i\in\{1,\ldots,n\}$.
We use the well-known property (see e.g.~\cite{dokmanic2015euclidean}) that a realization is unique up to 
 \emph{rigid transformations} of $\mathbb{R}^d$, that is, distance-preserving transformations of Euclidean space (rotations, reflections, translations).

\begin{proposition}[\cite{dokmanic2015euclidean}]\label{prop:unique}
Let $(p_1,\ldots,p_n)$ and $(q_1,\ldots,q_n)$ be ordered set of points in $\mathbb{R}^d$. Then $\|p_i-p_j\|_2=\|q_i-q_j\|_2$ for all $i,j\in\{1,\ldots,n\}$ if and only if there is a rigid transformation of $\mathbb{R}^d$ mapping $p_i$ to $q_i$ for all $i\in\{1,\ldots,n\}$. 
\end{proposition}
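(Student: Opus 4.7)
The forward direction is immediate: by definition a rigid transformation is a composition of rotations, reflections, and translations, each of which preserves Euclidean distances, so $\|p_i - p_j\|_2 = \|q_i - q_j\|_2$ follows. For the converse direction, the plan is to first reduce to the case $p_1 = q_1 = \mathbf{0}$ by applying the translations $x \mapsto x - p_1$ and $x \mapsto x - q_1$ to the two sequences. Since translations are themselves rigid transformations, it suffices from this point on to produce an orthogonal linear map of $\mathbb{R}^d$ aligning the translated sequences.

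In this reduced setting the main computation uses the polarization identity
\[
\langle u, v\rangle = \tfrac{1}{2}\bigl(\|u\|_2^2 + \|v\|_2^2 - \|u-v\|_2^2\bigr).
\]
Combined with the distance hypothesis $\|p_i - p_j\|_2 = \|q_i - q_j\|_2$ (taking $j=1$ yields in particular $\|p_i\|_2 = \|q_i\|_2$), this gives $\langle p_i, p_j\rangle = \langle q_i, q_j\rangle$ for all $i,j \in \{1,\dots,n\}$. Arranging the points as columns of matrices $P, Q \in \mathbb{R}^{d \times n}$, these equalities say precisely that $P^\top P = Q^\top Q$.

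A standard linear-algebra fact then produces an orthogonal matrix $O \in \mathbb{R}^{d \times d}$ with $Q = OP$. The cleanest route is via the singular value decomposition: fix an SVD $P = U_P \Sigma V^\top$; then $P^\top P = Q^\top Q$ implies that the same $\Sigma$ and right-singular matrix $V$ work for $Q$, so $Q = U_Q \Sigma V^\top$ for some orthogonal $U_Q$, and setting $O := U_Q U_P^\top$ finishes the job. Composing $x \mapsto Ox$ with the two initial translations produces the claimed rigid transformation $x \mapsto O(x - p_1) + q_1$, mapping $p_i$ to $q_i$ for each $i$.

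The main technical care lies in the rank-deficient case (which is automatic whenever $n \le d$ or the points fail to affinely span $\mathbb{R}^d$): the columns of $U_P$ and $U_Q$ corresponding to zero singular values can be chosen as arbitrary orthonormal bases of the respective left null spaces, and it is exactly this freedom that lets one complete $O$ into a full $d \times d$ orthogonal matrix. This is also the point at which allowing reflections (not merely orientation-preserving rotations) in the definition of a rigid transformation becomes essential, since in general one cannot require $\det O = +1$.
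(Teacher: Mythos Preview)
Your proof is correct. The paper does not actually prove this proposition; it is quoted as a known fact with a citation to \cite{dokmanic2015euclidean}, so there is no in-paper argument to compare against. The route you take---translate to the origin, use polarization to convert the distance hypothesis into equality of Gram matrices $P^\top P = Q^\top Q$, and then extract an orthogonal $O$ via the SVD---is the standard one in the EDM literature (and is essentially what is sketched in the cited reference). Your remark about the rank-deficient case and the need to allow reflections is accurate and worth keeping.
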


A realization can be constructed (if it exists) in polynomial time.

\begin{proposition}[\cite{AlencarBLL15,SipplS85}]\label{prop:realization}
Given a distance space $\Xcal=(X,\dist)$ with $n$ points and a positive integer $d$, in $\Oh(n^3)$ time, it can be decided whether $\Xcal$ can be embedded into $\mathbb{R}^d$ and, if such an embedding exists, then a realization can be constructed in this running time.    
\end{proposition}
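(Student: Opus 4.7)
The plan is to use the classical Gram matrix reformulation of Euclidean embeddability. After fixing an anchor point $x_1$, I would define the matrix $G$ of size $(n-1)\times (n-1)$ by
\[
G_{ij}=\tfrac{1}{2}\bigl(\dist^2(x_1,x_i)+\dist^2(x_1,x_j)-\dist^2(x_i,x_j)\bigr)\quad \text{for } i,j\in\{2,\ldots,n\}.
\]
The key observation is that if $(p_1,\ldots,p_n)$ is any realization in $\mathbb{R}^d$ with $p_1$ placed at the origin, then by polarization $\langle p_i,p_j\rangle=G_{ij}$ for all $i,j\ge 2$. Conversely, any factorization $G=M M^T$ with $M\in\mathbb{R}^{(n-1)\times d}$ yields a realization by taking the rows of $M$ (together with $p_1=0$) as points. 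Hence $\Xcal$ admits a $d$-embedding if and only if $G$ is positive semidefinite and has rank at most $d$.

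Given this equivalence, the algorithm has three steps. First, compute $G$ from the input in $\Oh(n^2)$ arithmetic operations. Second, compute a symmetric eigendecomposition $G=V\Lambda V^T$ in $\Oh(n^3)$ time in the real RAM model using any standard algorithm, inspect the eigenvalues, and reject if any eigenvalue is negative or if more than $d$ eigenvalues are strictly positive. Third, if the test passes, output $p_1=0$ and let $p_i$ for $i\ge 2$ be the $i$-th row of $V\Lambda^{1/2}$, padded (or truncated) to length $d$; correctness follows since $(V\Lambda^{1/2})(V\Lambda^{1/2})^T=G$ gives the required inner products, and Proposition~\ref{prop:unique} guarantees that up to rigid motion there is no other realization to worry about.

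The only non-routine step is justifying the equivalence between embeddability and the positive semidefinite, rank-$\le d$ condition on $G$; this is the classical theorem of Schoenberg and underlies all the cited references. Everything else is bookkeeping: the distance-to-Gram conversion is explicit, the spectral check handles both existence and the dimension constraint in one pass, and eigendecomposition dominates the total cost at $\Oh(n^3)$. Note also that a non-injective embedding (which the paper explicitly allows) is handled automatically, since two rows of $V\Lambda^{1/2}$ that happen to coincide simply correspond to two points of $X$ mapped to the same point of $\mathbb{R}^d$.
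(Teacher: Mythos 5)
The paper does not prove this proposition; it is taken as a black box from \cite{AlencarBLL15,SipplS85}. Your route---form the centered Gram matrix $G$, characterize $d$-embeddability by $G$ being positive semidefinite with $\rank(G)\le d$, and recover a realization from a factorization $G=MM^T$ with $M$ having $d$ columns---is the classical Schoenberg/Torgerson approach that underlies both cited references, so the mathematical characterization is correct and appropriately cited.

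There is, however, one step that does not hold up as stated: a symmetric eigendecomposition is not a finite algorithm in the real RAM model. The eigenvalues of $G$ are roots of its characteristic polynomial, which admits no closed form once $n\ge 5$, and real RAM provides only arithmetic, comparisons, and (by the convention this paper adopts) square roots. So ``compute $G=V\Lambda V^T$ in $\Oh(n^3)$'' is not available as a primitive. The fix is routine and keeps the same running time: replace the eigendecomposition by symmetric Gaussian elimination with diagonal pivoting, producing $PGP^T=LDL^T$. This uses $\Oh(n^3)$ arithmetic operations and $\Oh(n)$ square roots, certifies $G\succeq 0$ exactly (reject if a pivot is negative, or if every remaining diagonal entry is zero while some off-diagonal entry in the remaining block is nonzero), reveals $\rank(G)$ as the number of positive pivots, and hands you the factor $LD^{1/2}$ directly; taking $p_1=0$ and letting the remaining points be the rows of $P^{T}LD^{1/2}$ restricted to its nonzero columns, padded with zeros to length $d$, gives the realization. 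With that substitution your argument is complete---the distance-to-Gram reduction, the Schoenberg equivalence, and the remark about non-injective embeddings all stand as written.
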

\newcommand{\mathbbr}[1]{\mathbb{R}^{#1}}

\begin{definition}[Metric basis]
    \label{def:metricBasis}
    Let $(X,\dist)$ be a $d$-embeddable distance space. A set $Y\subseteq X$ is a \emph{metric basis} if, given an isometric embedding $\varphi$ of $(Y,\dist)$ into $\mathbb{R}^d$, there is a unique way to extend $\varphi$ to an isometric embedding of $(X,\dist)$. Equivalently, if a 
realization of $(Y,\dist)$ is fixed then the embedding of any point of~$X\setminus Y$ in a $d$-embedding of $(X,\dist)$ 
is unique.  
\end{definition}

We will use the well-known characteristic of strong embeddability of a distance space into a Euclidean space. 
For $r+1$ points $x_0,x_1, \dots, x_r$ of distance space $(X, \dist)$
the \emph{Cayley-Menger determinant} is the determinant of the matrix obtained from the distance matrix by prepending a row and a column whose first element is zero and the other elements are one. Formally, let $\dist_{i,j}=\dist(x_i,x_j)$, $i,j\in \{0,\dots, r\}$. 
Then the Cayley–Menger determinant is  
\[
CM ( x_0,x_1, \dots, x_r) =  \det\left( \begin{matrix}
0 & 1 & 1 & 1 & \dots & 1 \\
1 & 0 & \dist_{0,1}^2 & \dist_{0,2}^2 & \dots & \dist_{0,r}^2 \\
1 & \dist_{0,1}^2 & 0 & \dist_{1,2}^2 & \dots & \dist_{1,r}^2 \\
1 & \dist_{0,2}^2 & \dist_{1,2}^2 & 0 & \dots & \dist_{2,r}^2 \\
\vdots & \vdots & \vdots & \vdots & \ddots & \vdots \\
1 & \dist_{0,r}^2 & \dist_{1,r}^2 & \dist_{2,r}^2 & \dots & 0 \\
\end{matrix} \right)
\]

\begin{proposition}[{\cite[Chapter~IV]{blumenthal1970theory}}]\label{thm:Blumental}
A distance space $\Xcal= (X, \rho)$ with $n$ points is strongly 
$d$-embeddable if and only if there exist $d+1$ points, say $X_d = \{ x_0, \ldots, x_d \}$, such that:
\begin{enumerate}
\item  $(-1)^{j+1}CM ( x_0,x_1, \dots, x_j)>0$ for $1\leq j\leq d$, and 
\item for any $x, y \in X\setminus X_d$,
\[CM ( x_0,x_1, \dots, x_d,x) =CM ( x_0,x_1, \dots, x_d,y)=CM ( x_0,x_1, \dots, x_d,x,y)=0.\] 
\end{enumerate}
Equivalently (see, for example,  \cite{SidiropoulosWW17}), 
$\Xcal$ is strongly 
$d$-embeddable if and only if there is a set of $d+1$ points $X_d=\{x_0,\ldots,x_d\}$ such that 
$(\{x_0,\ldots,x_j\},\dist)$ is strongly $j$-embeddable 
for all $j\in\{1,\ldots, d\}$, and
for every $x,y\in X\setminus X_d$, $(X_d\cup\{x\}\cup \{y\})$ is $d$-embeddable.
\end{proposition}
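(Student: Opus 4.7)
The plan is to base the proof on the classical geometric interpretation of the Cayley-Menger determinant: for any ordered tuple $x_0, \ldots, x_r$ that admits a realization $p_0, \ldots, p_r$ in some Euclidean space,
\[(-1)^{r+1} CM(x_0, \ldots, x_r) = 2^r (r!)^2 V_r^2,\]
where $V_r$ is the $r$-dimensional volume of the simplex spanned by $p_0, \ldots, p_r$. This identity follows from writing the squared distances in Gram-matrix form via $\|p_i - p_j\|_2^2 = \langle p_i - p_0, p_i - p_0\rangle + \langle p_j - p_0, p_j - p_0\rangle - 2\langle p_i - p_0, p_j - p_0\rangle$ and performing elementary row/column operations on the Cayley-Menger matrix to reduce it to a bordered Gram matrix. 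From this identity I would derive two consequences: (a) a set of $r+1$ points is strongly $r$-embeddable if and only if $(-1)^{r+1} CM > 0$, since strong $r$-embeddability amounts to realizability in $\mathbb{R}^r$ as an affinely independent simplex; and (b) given a strongly $d$-embeddable face $X_d$ of size $d+1$, any superset obtained by adding points is $d$-embeddable if and only if all $(d+2)$-point Cayley-Menger determinants of its subsets vanish — for the forward direction because $d+2$ points in $\mathbb{R}^d$ are affinely dependent, and for the converse by invoking Schoenberg's criterion, namely that the Gram matrix derived from the distance matrix has rank exactly $d$.

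Using (a) and (b), the equivalence of the two formulations stated in the proposition is immediate: the sign conditions on $CM(x_0, \ldots, x_j)$ for $j = 1, \ldots, d$ translate to strong $j$-embeddability of each prefix $\{x_0, \ldots, x_j\}$, and the three vanishing conditions $CM(x_0, \ldots, x_d, x) = CM(x_0, \ldots, x_d, y) = CM(x_0, \ldots, x_d, x, y) = 0$ translate to $d$-embeddability of $X_d \cup \{x\}$, $X_d \cup \{y\}$, and $X_d \cup \{x, y\}$, with the last of these being the content of the second formulation and the first two following as subsets. It thus remains to prove that a distance space $(X, \dist)$ is strongly $d$-embeddable if and only if there exists $X_d \subseteq X$ satisfying the second formulation.

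The forward direction is a greedy construction: fix a realization $\varphi : X \to \mathbb{R}^d$ whose image spans $\mathbb{R}^d$, and iteratively choose $x_0, \ldots, x_d$ so that $\varphi(x_0), \ldots, \varphi(x_j)$ are affinely independent for each $j \leq d$; every prefix is then strongly $j$-embeddable, and any $x, y \in X \setminus X_d$ together with $X_d$ sit inside $\mathbb{R}^d$ by construction. The main obstacle is the backward direction, where local $d$-embeddings of the sets $X_d \cup \{x, y\}$ must be glued into a single global isometric embedding of $(X, \dist)$. The strategy is to fix a realization $(p_0, \ldots, p_d)$ of $X_d$ in $\mathbb{R}^d$, and for each $x \in X \setminus X_d$ exploit the metric basis property of \cref{def:metricBasis}: since $p_0, \ldots, p_d$ are affinely independent, $d+1$ spheres centered at them in $\mathbb{R}^d$ meet in at most one point, so the $d$-embeddability of $X_d \cup \{x\}$ forces a unique $q_x \in \mathbb{R}^d$ with $\|q_x - p_i\|_2 = \dist(x, x_i)$ for all $i$. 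Pairwise consistency $\|q_x - q_y\|_2 = \dist(x, y)$ for $x \ne y$ then follows from the $d$-embeddability of $X_d \cup \{x, y\}$: by \cref{prop:unique}, any realization of this set in $\mathbb{R}^d$ can be aligned to $(p_0, \ldots, p_d)$ by a rigid transformation, and the uniqueness of the extensions $q_x$ and $q_y$ forces this realization to coincide with $(p_0, \ldots, p_d, q_x, q_y)$. Setting $\varphi(x_i) = p_i$ and $\varphi(x) = q_x$ yields the required isometric embedding, which is strongly $d$-embeddable because it contains the strongly $d$-embeddable subset $X_d$.
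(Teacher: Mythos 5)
The paper does not supply its own proof of \cref{thm:Blumental}; it is cited directly from Blumenthal's monograph, so there is no proof in the paper to compare against. Evaluating your argument on its own terms: the central part (the backward direction of the second formulation, gluing the unique extensions $q_x$ via the alignment given by \cref{prop:unique}) is the standard argument and is correct, as is the greedy construction for the forward direction.

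However, your claim (a) is false as a biconditional, and the stated justification only supports one direction. You assert that $r+1$ points are strongly $r$-embeddable \emph{if and only if} $(-1)^{r+1}CM>0$, citing the volume identity. The volume identity gives the forward direction; it says nothing about the converse, which indeed fails for distance spaces as defined in this paper (symmetric, zero diagonal, no triangle inequality assumed). Concretely, one can pick a $4$-point distance space in which the triple $(x_0,x_1,x_2)$ violates the triangle inequality (so the associated $2\times 2$ leading minor of the Gram matrix $B$ is negative and the set is not embeddable in any $\mathbb{R}^d$), yet $\det B>0$ and hence $(-1)^{4}CM(x_0,\ldots,x_3)>0$; this happens because $B$ can have signature $(1,2)$. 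The correct statement is Sylvester's criterion applied to $B$: strong $r$-embeddability of $\{x_0,\ldots,x_r\}$ is equivalent to the \emph{full nested chain} $(-1)^{j+1}CM(x_0,\ldots,x_j)>0$ for every $j=1,\ldots,r$, not to the top-level sign alone. The first formulation in the proposition imposes exactly this chain, so when you later write ``the sign conditions on $CM(x_0,\ldots,x_j)$ for $j=1,\ldots,d$ translate to strong $j$-embeddability of each prefix,'' the argument is repairable if read inductively (each step uses all lower-order signs), but claim (a) as written is a genuine gap. Your claim (b) is also imprecise: it refers only to the $(d+2)$-point Cayley--Menger determinants, whereas the proposition also requires the $(d+3)$-point determinant $CM(x_0,\ldots,x_d,x,y)=0$, and this is not implied by the $(d+2)$-point vanishings. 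The clean argument is the Schur-complement computation: with $A=B(X_d)\succ 0$ and $a=u^\top A^{-1}u$, $b=v^\top A^{-1}v$ forced by $CM(X_d,x)=CM(X_d,y)=0$, one has $\det B(X_d\cup\{x,y\})=-\det A\,(c-u^\top A^{-1}v)^2\le 0$, so the additional condition $CM(X_d,x,y)=0$ is precisely what forces the Schur complement to vanish and $B(X_d\cup\{x,y\})\succeq 0$ of rank $d$. You should replace (a) with the Sylvester chain and (b) with this Schur argument (or a direct citation of the classical embedding theorem); with those repairs, the rest of your proof is sound.
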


Thus, the embedding of the distance space $ \Xcal=(X, \rho)$ into $\mathbb{R}^d$ is essentially characterized by $d+1$ ``anchor'' points of $X$. Note that in a realization of $\Xcal$ that is strongly $d$-embeddable,  the anchor points correspond to a set of $d+1$ points in general position in $\mathbb{R}^d$.

Following~\cite{blumenthal1970theory}, we define sets of {\em independent points}.

\begin{definition}\label{def:independent}
Let $(X,\dist)$ be a 
distance space. For a nonnegative integer $r$, 
we say that $Y\subseteq X$ of size $r+1$ is \emph{independent} if $(Y,\dist)$ is strongly $r$-embeddable.      
\end{definition}
Then our  ``anchors'' are independent sets of size $d+1$. We use the following fact that the family of independent sets has matroid-like properties (implicit in~\cite[Chapter~IV]{blumenthal1970theory}).

\begin{proposition}[{\cite[Chapter~IV]{blumenthal1970theory}}]\label{prop:matroid}
Let $(X,\dist)$ be a strongly $d$-embeddable distance space. Then
\begin{itemize}
\item any single-element set is independent,
\item if $Y \subseteq X$ is independent, then any~$\emptyset \subset Z \subseteq Y$ is independent,
\item if $Y,Z \subseteq X$ are independent and $|Y|>|Z|$ then there is a~$y\in Y\setminus Z$ such that $Z\cup\{y\}$ is independent.
\end{itemize}
Furthermore, the maximum size of an independent set is $d+1$ and any independent set~$Y$ of size $d+1$ is a metric basis.
 \end{proposition}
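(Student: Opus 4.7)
The plan is to lift the entire statement to $\mathbb{R}^d$ via a realization of $(X,\dist)$ and reduce it to the well-known matroid structure of affine independence. Since $(X,\dist)$ is strongly $d$-embeddable, I fix at the outset a realization $\varphi\colon X\to\mathbb{R}^d$, which exists by \cref{prop:realization}.

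The main step is the following bridge lemma: a set $Y\subseteq X$ is independent in the sense of \cref{def:independent} if and only if $\varphi(Y)$ is affinely independent in $\mathbb{R}^d$. For the forward direction, if $Y$ of size $r+1$ is independent then $(Y,\dist)$ is strongly $r$-embeddable; were $\varphi(Y)$ affinely dependent, the image would lie in an affine subspace of dimension at most $r-1$, yielding an $(r-1)$-embedding of $(Y,\dist)$ and contradicting strong $r$-embeddability. For the backward direction, if $\varphi(Y)$ is affinely independent then $\varphi|_Y$ already provides an $(|Y|-1)$-embedding; no lower-dimensional embedding can exist, because any hypothetical $(|Y|-2)$-embedding of $(Y,\dist)$, viewed inside $\mathbb{R}^{|Y|-1}$ via a coordinate inclusion, would together with $\varphi|_Y$ give two realizations of $(Y,\dist)$ in $\mathbb{R}^{|Y|-1}$ related by a rigid transformation via \cref{prop:unique}, yet with different affine-hull dimensions---a contradiction since rigid transformations preserve affine dimension.

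With the lemma in hand, the first three assertions become familiar facts about the affine matroid on $\varphi(X)$. Singletons are trivially affinely independent; any nonempty subset of an affinely independent set remains so; and if $\varphi(Y),\varphi(Z)$ are affinely independent with $|Y|>|Z|$, a standard dimension count on affine spans produces some $y\in Y\setminus Z$ with $\varphi(y)$ outside the affine hull of $\varphi(Z)$, making $\varphi(Z\cup\{y\})$ affinely independent. The cardinality bound $|Y|\leq d+1$ is immediate from the dimension of $\mathbb{R}^d$, and it is attained because $\varphi(X)$ itself affinely spans $\mathbb{R}^d$ (otherwise $(X,\dist)$ would be $(d-1)$-embeddable, contradicting strong $d$-embeddability).

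For the metric-basis part, let $Y$ be an independent set of size $d+1$ and $\psi$ any embedding of $(Y,\dist)$ into $\mathbb{R}^d$. Applying \cref{prop:unique} to $\varphi|_Y$ and $\psi$ yields a rigid transformation $T$ with $T(\varphi(y))=\psi(y)$ for every $y\in Y$, and this $T$ is uniquely determined because $\varphi(Y)$ affinely spans $\mathbb{R}^d$; then $T\circ\varphi$ is a well-defined extension of $\psi$ to an embedding of $(X,\dist)$. Uniqueness of the extension follows because any other extension would still realize $(Y,\dist)$ identically, and the position of each remaining point $x$ is forced by its distances to the $d+1$ affinely independent anchors in $\varphi(Y)$. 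The main technical obstacle I anticipate is cleanly justifying the backward direction of the bridge lemma---translating the purely metric notion of strong $r$-embeddability into a statement about affine-hull dimension---after which the remainder is routine matroid bookkeeping.
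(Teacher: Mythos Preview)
The paper does not supply its own proof of this proposition; it is quoted as background from Blumenthal's monograph and used as a black box throughout. Your proposal therefore cannot be compared against a proof in the paper, but it is correct and is essentially the standard way to recover the statement: fix a realization $\varphi$ via \Cref{prop:realization} and identify independence in $(X,\dist)$ with affine independence of the tuple $(\varphi(y))_{y\in Y}$ in $\mathbb{R}^d$, after which the three matroid axioms, the rank bound, and the metric-basis claim all reduce to routine affine-geometry facts.

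Two small points are worth tightening. First, since $\varphi$ need not be injective (points at distance zero collapse), you should say explicitly that $\varphi(Y)$ is treated as an indexed family rather than a set when you speak of affine independence; with that reading both directions of your bridge lemma go through exactly as written, whereas the set reading would break the forward direction. Second, the closing uniqueness step---``the position of each remaining point $x$ is forced by its distances to the $d+1$ affinely independent anchors''---is true but deserves one line: subtracting the equations $\|q-\varphi(y_i)\|_2^2=\dist(x,y_i)^2$ pairwise yields a linear system in $q$ whose coefficient vectors are $\varphi(y_i)-\varphi(y_0)$, and these span $\mathbb{R}^d$ precisely because $\varphi(Y)$ is affinely independent of size $d+1$.
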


In particular, \Cref{prop:matroid} immediately implies the following statement.

\begin{lemma}\label{equivalence}
     Let $\Xcal=(X,\dist)$ be a distance space. For every $Z\subseteq X$, if $Z$ is independent  
     and  for every $x,y\in X\setminus Z$, $(Z\cup\{x,y\})$ is $(|Z|-1)$-embeddable, 
     then $\Xcal$ is strongly $(|Z|-1)$-embeddable.
\end{lemma}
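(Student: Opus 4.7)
The plan is to apply the second (equivalent) formulation of \Cref{thm:Blumental} with the candidate anchor set $X_d := Z$, where $d := |Z|-1$. Since $Z$ is independent of size $d+1$, the distance space $(Z,\dist)$ is strongly $d$-embeddable by \Cref{def:independent}, so $Z$ is a legitimate candidate for the role of $X_d$. It then remains to verify the two conditions of the characterization.

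First I would check the prefix condition. Fix an arbitrary ordering $Z = \{x_0, x_1, \ldots, x_d\}$. By \Cref{prop:matroid}, every non-empty subset of an independent set is independent, and in particular each prefix $\{x_0, \ldots, x_j\}$ is independent of size $j+1$. By \Cref{def:independent} this means that $(\{x_0,\ldots,x_j\},\dist)$ is strongly $j$-embeddable for every $j \in \{1, \ldots, d\}$, which is exactly the first clause required by \Cref{thm:Blumental}. Second, the hypothesis states that for every $x,y \in X\setminus Z$ the space $(Z \cup \{x,y\}, \dist)$ is $d$-embeddable; taking $X_d = Z$ this is verbatim the second clause (the case $x=y$ collapses to the single-extension statement $(Z \cup \{x\}, \dist)$ being $d$-embeddable, which is also covered). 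Invoking \Cref{thm:Blumental} then yields that $\Xcal$ is strongly $d$-embeddable, i.e., strongly $(|Z|-1)$-embeddable, as required.

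There is not really a hard step here: the work has already been done by the Blumenthal-type characterization and the matroid property of independent sets. The only mild point to be careful about is that the anchor set $X_d$ in \Cref{thm:Blumental} may be taken to be \emph{any} independent set of size $d+1$ (not necessarily one built up by a greedy exchange), and that the quantification ``for every $x,y \in X\setminus X_d$'' implicitly includes the case $x=y$; both are immediate from the proposition's statement. Hence the lemma follows with essentially no additional calculation.
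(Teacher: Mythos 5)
Your proof is correct and is essentially the argument the paper intends: apply the second (equivalent) formulation of \Cref{thm:Blumental} with anchor set $X_d := Z$, using the heredity property of independent sets from \Cref{prop:matroid} (applied inside the strongly $(|Z|-1)$-embeddable subspace $(Z,\dist)$) to verify the prefix clause, and reading off the pair clause directly from the hypothesis. The paper states the lemma as an immediate consequence without spelling this out, so your write-up simply supplies the missing details.
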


We also have the following  consequence of \Cref{thm:Blumental}.

\begin{lemma}\deferProof{}\label{lem:incrementDimension}
    Suppose $(X,\rho)$ is a $d$-embeddable distance space, $Y\subset X$ is independent and $x\in X\setminus Y$. Then, $(Y\cup\{x\},\rho)$ is $|Y|$-embeddable.
    \end{lemma}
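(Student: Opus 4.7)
The plan is to produce an isometric embedding of $(Y\cup\{x\},\rho)$ into $\mathbb{R}^{|Y|}$ by starting from any realization of $(X,\rho)$ in $\mathbb{R}^d$ and then restricting attention to the affine subspace spanned by the images of $Y\cup\{x\}$. Set $r=|Y|-1$. By hypothesis $(X,\rho)$ is $d$-embeddable, so \Cref{prop:realization} (or, equivalently, the definition of $d$-embeddability) supplies an isometric embedding $\varphi\colon X\to\mathbb{R}^d$; its restriction to $Y\cup\{x\}$ is automatically a realization of $(Y\cup\{x\},\rho)$ in $\mathbb{R}^d$.

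The next, and central, step is to pin down the dimension of the affine hull of $\varphi(Y)$. Because $Y$ is independent with $|Y|=r+1$, \Cref{def:independent} says $(Y,\rho)$ is strongly $r$-embeddable, i.e.\ $r$-embeddable but not $(r-1)$-embeddable. The restriction $\varphi|_Y$ is an isometric embedding of $(Y,\rho)$ into $\mathbb{R}^d$, so $\varphi(Y)$ consists of only $r+1$ points and hence spans an affine subspace of dimension at most $r$. On the other hand, if $\varphi(Y)$ were contained in an affine subspace of dimension $r-1$ inside $\mathbb{R}^d$, then composing $\varphi|_Y$ with a rigid transformation of $\mathbb{R}^d$ that maps that subspace into the coordinate copy of $\mathbb{R}^{r-1}$ would (using \Cref{prop:unique} to guarantee that rigid transformations preserve all pairwise distances) exhibit an isometric embedding of $(Y,\rho)$ into $\mathbb{R}^{r-1}$, contradicting strong $r$-embeddability. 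Thus the affine hull of $\varphi(Y)$ has dimension exactly $r$.

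From here, adding the single point $\varphi(x)$ can enlarge this affine hull by at most one dimension. Consequently $\varphi(Y\cup\{x\})$ is contained in an affine subspace $A\subseteq\mathbb{R}^d$ with $\dim A\leq r+1=|Y|$. I would then fix any rigid motion of $\mathbb{R}^d$ that sends $A$ into the coordinate subspace spanned by the first $|Y|$ axes and identify the latter with $\mathbb{R}^{|Y|}$; composing with $\varphi|_{Y\cup\{x\}}$ yields an isometric embedding of $(Y\cup\{x\},\rho)$ into $\mathbb{R}^{|Y|}$, as required.

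The one point that requires care is the dimension-lower-bound argument in the middle paragraph: one must verify that ``strong $r$-embeddability of $(Y,\rho)$'' really forces $\varphi(Y)$ to genuinely span an $r$-dimensional affine flat of $\mathbb{R}^d$, rather than permitting a lower-dimensional collapse inside the ambient space. This is where \Cref{prop:unique} does the essential work, since it reduces the question to the intrinsic dimension of the realization and rules out any ``accidental'' flattening. Everything else is routine linear algebra about affine hulls together with the observation that one extra point increases dimension by at most one.
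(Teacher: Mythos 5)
Your proof is correct, but it takes a genuinely different route from the paper's. The paper's argument is purely combinatorial: it observes that $(Y\cup\{x\},\rho)$ is $d$-embeddable (as a restriction of $(X,\rho)$), hence strongly $d'$-embeddable for some $d'$, and then invokes \Cref{thm:Blumental} to find an independent subset of $Y\cup\{x\}$ of size $d'+1$; since $|Y\cup\{x\}|=|Y|+1$, a bare cardinality count gives $d'\leq|Y|$. Your proof instead works directly inside $\mathbb{R}^d$: fix a realization $\varphi$, note that $\varphi(Y\cup\{x\})$ lies in an affine flat of dimension at most $|Y|$, and move that flat into the coordinate copy of $\mathbb{R}^{|Y|}$ by a rigid motion. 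Both are valid; yours is more concrete and geometric, while the paper's stays at the level of the abstract embeddability calculus built on \Cref{thm:Blumental}. Two remarks worth noting. First, your middle paragraph establishing that the affine hull of $\varphi(Y)$ has dimension \emph{exactly} $r=|Y|-1$ is more than is needed: only the upper bound $\leq r$ (automatic from having $r+1$ points) feeds into the final step, so the whole lower-bound argument via strong $r$-embeddability can be dropped, and with it the only place the independence of $Y$ is used. Second — and this applies equally to the paper's own proof — the hypothesis that $Y$ is independent is in fact superfluous for this lemma; both arguments really only use that $Y\cup\{x\}$ has $|Y|+1$ points and is $d$-embeddable.
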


\begin{proof}
    By definition, since $Y$ is independent, it follows that $(Y,\rho)$ is strongly $(|Y|-1)$-embeddable. Since $(X,\rho)$ is  $d$-embeddable, so is $(Y\cup\{x\},\rho)$. 
    Let $d'\leq d$ be such that $(Y\cup\{x\},\rho)$ is {\em strongly} $d'$-embeddable.   By  \Cref{thm:Blumental}, there exists a set $X_{d'}$ of $d'+1$ points in $Y\cup \{x\}$  such that  $(X_{d'},\rho)$ is strongly $d'$-embeddable. Since $X_{d'}\subseteq Y\cup \{x\}$, it follows that $d'+1\leq |Y|+1\implies d'\leq |Y|$. Hence, $(Y\cup\{x\},\rho)$ is $|Y|$-embeddable.
\end{proof}

To compress the weights in our compression algorithm in~\Cref{sec:kern}, we use the standard technique (see~\cite{EtscheidKMR17}) based on the algorithm of Frank and Tardos~\cite{FrankT87}.

\begin{proposition}[\cite{FrankT87}]\label{prop:FT}
There is an algorithm that, given a vector $w\in\mathbb{Q}^r$ and 
an integer~$N$, in (strongly) polynomial time finds a vector $\overline{w}\in\mathbb{Z}^r$ with $\|\overline{w}\|_{\infty}\leq 2^{4r^3}N^{r(r+2)}$ such that $\mathsf{sign}(w\cdot b)=\mathsf{sign}(\overline{w}\cdot b)$ for all vectors $b\in \mathbb{Z}^r$ with $\|b\|_1\leq N-1$.
\end{proposition}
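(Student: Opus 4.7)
The plan is to deduce this sign-preserving simultaneous Diophantine approximation from the LLL lattice-basis reduction algorithm. The key reduction is that sign preservation on every integer $b$ with $\|b\|_1 \leq N-1$ follows once we can exhibit a positive rational $\alpha$ and an integer vector $\overline{w} \in \mathbb{Z}^r$ with $\|\overline{w} - \alpha w\|_\infty \leq \varepsilon$ for a sufficiently small $\varepsilon$. Indeed, linearity gives $|\overline{w}\cdot b - \alpha w\cdot b| \leq \varepsilon(N-1)$, and since $\overline{w}\cdot b$ is an integer while $\mathsf{sign}(\alpha w\cdot b) = \mathsf{sign}(w\cdot b)$ (as $\alpha > 0$), taking $\varepsilon$ small enough forces $\overline{w}\cdot b$ to be the nearest integer to $\alpha w\cdot b$, and in particular to share its sign.

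To produce such $\alpha$ and $\overline{w}$, I would apply the LLL algorithm to a lattice $L \subset \mathbb{R}^{r+1}$ designed so that its short vectors encode simultaneous approximations of $w_1, \ldots, w_r$ by rationals with a common denominator $q$. A standard construction generates $L$ by the columns of a matrix whose first $r$ rows form a scaled identity and whose final row carries the entries $(w_1, \ldots, w_r, 1/T)$ for an approximation parameter $T \gg 1$. A reduced basis produced by LLL contains a nonzero lattice vector of norm at most $2^{r/2}\det(L)^{1/(r+1)}$, which unwinds to integers $q \geq 1$ and $\overline{w}_1, \ldots, \overline{w}_r$ with $\max_i |qw_i - \overline{w}_i|$ small and $|q|, |\overline{w}_i|$ bounded explicitly in terms of $T$ and $r$. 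Setting $\alpha := q$ supplies the required approximation, and LLL's polynomial-time guarantee, together with standard bit-length control for rational inputs, yields the strong polynomiality asserted in the proposition.

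Finally, I would calibrate $T$ of order $N^{r(r+2)}$ so that the resulting $\varepsilon$ satisfies both $\varepsilon(N-1) < 1/2$ and the stronger bound needed for the converse sign direction; substituting into the LLL size estimate produces the explicit norm bound $\|\overline{w}\|_\infty \leq 2^{4r^3} N^{r(r+2)}$. The main obstacle is precisely this converse direction: ruling out the case in which $w\cdot b$ is a tiny nonzero rational while $\overline{w}\cdot b = 0$. The Frank--Tardos insight is that a nonzero value of $\alpha w\cdot b$ must have magnitude at least $1/Q$, where $Q$ is an implicit common denominator absorbed into $\alpha$ by the lattice construction; quantifying $Q$ against the LLL norm bound is the core calculation, and it is exactly the need to dominate this hidden denominator uniformly in $b$ that drives the $N^{r(r+2)}$ factor in the final size estimate.
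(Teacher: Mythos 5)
The paper does not prove this statement---it is cited verbatim from Frank and Tardos, so there is no ``paper's own proof'' to compare against. On its own merits, your proposal contains a genuine gap.

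The fatal issue is the single-shot reduction to simultaneous Diophantine approximation. Suppose you have found $\alpha > 0$ and $\overline{w}\in\mathbb{Z}^r$ with $\|\overline{w}-\alpha w\|_\infty\le\varepsilon$. To conclude $\mathsf{sign}(\overline{w}\cdot b)=\mathsf{sign}(w\cdot b)$ when $w\cdot b\neq 0$, you need $\alpha|w\cdot b| > \varepsilon(N-1)$. But the smallest nonzero $|w\cdot b|$ over the relevant $b$ is governed by the common denominator $Q$ of the entries of $w$, which is part of the \emph{input} and is completely unconstrained by $N$ and $r$. For a concrete failure: with $r=2$, $N=2$, $w=(1,\epsilon)$ and $\epsilon>0$ tiny, any LLL-style rounding with denominator $q$ bounded polynomially in $N,r$ will produce $\overline{w}=(q,0)$ once $q\epsilon$ is below the approximation threshold, and then $\overline{w}\cdot(0,1)=0\neq\mathsf{sign}(\epsilon)$. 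Your final paragraph acknowledges the ``tiny nonzero $w\cdot b$'' obstruction, but the proposed remedy---``calibrate $T$ so that $\alpha$ dominates $1/Q$''---is incompatible with the conclusion: any $T$ large enough to defeat $Q$ would make $\|\overline{w}\|_\infty$ grow with the input bit-length, not with $N$ and $r$ alone, which is exactly what the proposition promises to avoid.

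What Frank--Tardos actually do is recursive, and the recursion is essential, not a detail. Normalize so $\|w\|_\infty=1$; apply LLL simultaneous Diophantine approximation (the one-shot lattice construction you describe) to obtain $q$ and $p\in\mathbb{Z}^r$ with $\|qw-p\|_\infty$ small and $q$ bounded in $N,r$; observe that the coordinate where $w_i=\pm1$ is rounded exactly, so the residual $w':=qw-p$ has strictly smaller support; recurse on $w'$ to get $\overline{w}'$; and return $\overline{w}:=Kp+\overline{w}'$ for a suitably large $K$ bounded in $N,r$. Sign correctness comes from a hierarchical argument: for a given $b$, either $p\cdot b\neq 0$, in which case the $Kp$ term dominates and its sign agrees with $\mathsf{sign}(w\cdot b)$; or $p\cdot b=0$, in which case $q(w\cdot b)=w'\cdot b$ and the recursion handles it. No level of the recursion ever needs to ``see'' the global denominator $Q$---each level's rounding is against the renormalized residual, and the recursion depth is at most $r$. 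The bound $2^{4r^3}N^{r(r+2)}$ is then the accumulated product of the per-level $q$'s and the separating constants $K$ over $\le r$ levels, not a uniform lower bound on $|w\cdot b|$. So while your lattice construction is indeed the LLL primitive used inside Frank--Tardos, the claim that a single application suffices, and that the exponent arises from ``dominating a hidden denominator uniformly in $b$,'' is not correct.
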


\section{Compression for \WEEO}\label{sec:kern}
In this section, we show that, given an instance of \WEEO, one can in polynomial time construct an equivalent instance where the number of points is upper-bounded by a polynomial of $k = \koutl+\kmod$ and $d$ and, moreover, the encoding of the weights is also polynomial in these parameters.
\longversion{The main result of this section is the following theorem.}
\begin{theorem}\label{thm:kernel}
There is a polynomial-time algorithm that, given an instance \linebreak $(\Xcal=(X,\dist),\woutl,\wmod,W,\koutl,\kmod,d)$ of \WEEO, either solves the problem or 
constructs an equivalent instance $(\Xcal'=(X',\dist),\woutl',\wmod',W',\koutl',\kmod',d)$ such that $X'\subseteq X$, and for $k=\koutl+\kmod$, it holds that $|X'|=\Oh((kd)^2)$, $W'=2^{\Oh((kd)^{12})}$, $\woutl'(x)=2^{\Oh((kd)^{12})}$ for $x\in X'$, 
and $\wmod'(x,y)=2^{\Oh((kd)^{12})}$ for all $\{x,y\}\in X'^{(2)}$.
\end{theorem}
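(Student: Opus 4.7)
My plan is to first compress the number of points to $\Oh((kd)^2)$ and then compress the weights separately.

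Starting point: I would run the $(d+3)$-approximation for \EEO{} (\Cref{lem:greedy}) on the unweighted distance space $(X,\dist)$, obtaining an outlier set $O^*$. The key observation linking the two versions is the following: for any feasible solution $(O, D)$ to \WEEO{}, the set $A = O \cup V(D)$, where $V(D)$ denotes the set of points incident to some pair in $D$, has size at most $\koutl + 2\kmod \le 2k$, and $(X \setminus A, \dist)$ is $d$-embeddable since it contains none of the modified distances. Hence the optimum number of outliers to make $(X,\dist)$ purely $d$-embeddable is at most $2k$, so $|O^*| \le 2(d+3)k = \Oh(kd)$. If the approximation returns more, I declare the instance a no-instance.

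Next, since $(X \setminus O^*, \dist)$ is $d$-embeddable, by \Cref{prop:matroid} I greedily compute a metric basis $Y \subseteq X \setminus O^*$ with $|Y| \le d+1$, and by \Cref{prop:realization} fix a realization $\varphi$ of $(X \setminus O^*, \dist)$. Every point $z \in X \setminus (O^* \cup Y)$ has a position $\varphi(z)$ uniquely determined by its distances to $Y$. I always retain the $\Oh(kd)$ points of $O^* \cup Y$. The remaining task is to identify $\Oh((kd)^2)$ further points---roughly $\Oh(kd)$ per point of $O^*$---sufficient to decide the fate of each $x \in O^*$ (whether $x$ is an outlier, or is retained at some embedding position reachable by modifying a subset of its distances to $Y$ and then to $O^*$). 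All other points $z$ are \emph{safe}: their original distances to each $x \in O^*$ are forced by $\varphi(z)$ in any extension of $\varphi$ that also places $x$, so any solution of the reduced instance can be lifted to the original by keeping $z$ as a non-outlier with its original distances.

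For the weight compression, I apply the Frank-Tardos algorithm (\Cref{prop:FT}) to the rational vector containing $W$ together with $\woutl(x)$ for $x\in X'$ and $\wmod(\{x,y\})$ for $\{x,y\}\in (X')^{(2)}$. This vector has length $r = \Oh(|X'|^2) = \Oh((kd)^4)$. Choosing $N = \Oh(k)$ (sufficient to preserve the sign of every budget inequality testing a sum of at most $k$ weights against $W$) yields an integer weight vector of $\ell_\infty$-norm at most $2^{4r^3}N^{r(r+2)} = 2^{\Oh((kd)^{12})}$, matching the bound in the statement.

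The main obstacle lies in the point-compression step---specifically, making precise the notions of ``witness'' and ``safe'' and establishing the $\Oh(kd)$ witness bound per point of $O^*$. This bound should follow from the rank-$(d+1)$ matroid structure of \Cref{prop:matroid} combined with the budget of $\kmod$ modifications: since $x$'s embedding position in any solution is determined by its (possibly modified) distances to $Y$ of which at most $\kmod$ can change, and since any genuine inconsistency against the fixed realization $\varphi$ must come from at most $\kmod$ modified distances to points outside $O^* \cup Y$, only $\Oh(d+k)=\Oh(kd)$ witnesses per $x$ can discriminate between the possible roles of $x$. The equivalence proof requires a careful exchange argument showing that any solution of the reduced instance can be extended to a solution of the original by placing each safe point at its $\varphi$-prescribed position without incurring any additional cost.
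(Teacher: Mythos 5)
Your overall architecture matches the paper's (greedy $(d+3)$-approximation, followed by a point-reduction step, followed by Frank--Tardos weight compression), and the weight-compression step is essentially correct (your choice $N=\Oh(k)$ is even slightly tighter than the paper's $N=n+\binom{n}{2}+2$, but both land at $2^{\Oh((kd)^{12})}$). However, the point-compression step, which you acknowledge as ``the main obstacle,'' has a genuine gap that the sketch does not close, and the single-metric-basis approach you propose is unlikely to be repairable without essentially reinventing the paper's machinery.

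The core problem with using \emph{one} metric basis $Y\subseteq X\setminus O^*$ and \emph{one} fixed realization $\varphi$ is that nothing prevents a solution $(O,D)$ of the (hypothetical) reduced instance from deleting a point of $Y$ or modifying a distance inside $Y$, in which case $\varphi$ no longer anchors anything. This is exactly why the paper partitions the whole of $X\setminus A$ greedily into a long sequence of metric bases $Y_1,\ldots,Y_\ell$, chooses a size class $C_{h^*}$ containing strictly more than $2(\koutl+2\kmod)$ of them, and then argues that \emph{some} basis in $C_{h^*}$ is untouched by any solution. That pigeonhole over many bases, not the structure of a single basis, is what makes the anchoring robust. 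A second issue is the lift-back direction of the equivalence: you assert that a ``safe'' $z$ can be kept at $\varphi(z)$ with its original distances, but if the reduced-instance solution retains some $x\in O^*$ at a position $p$ with $\dist(z,x)\neq\|\varphi(z)-p\|_2$, that lift is simply infeasible. The paper's fix is the marking of $x$-incompatible sets $Y_j$: any $z$ that could ``conflict'' with some surviving $x$ has been deliberately preserved as a witness in $X'$, so the reduced instance already forces the conflict to be paid for. Your sketch has no analogous mechanism. Finally, your count of ``$\Oh(kd)$ witnesses per $x$'' does not follow from the argument you give: the number of positions $p$ where $x$ might be embedded after $\leq\kmod$ modifications is not bounded by $\kmod$, and the points that could falsify a placement of $x$ are not confined to those incident to modified pairs. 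The paper reaches $\Oh((kd)^2)$ by a quite different route---the $x$-equivalence classes over the family $\{Y_i\}$, a large class $R_x$ per surviving $x$, and Reduction Rules 3 and 4 that delete $x$ whenever no sufficiently large consistent class exists. To repair your proposal you would essentially need to introduce multiple bases, the equivalence-class structure, and the explicit marking of incompatible witnesses, at which point you are reproducing the paper's argument.
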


We remark that \Cref{thm:kernel} does not provide a kernelization algorithm according to the standard definition~\cite{cygan2015parameterized} as the size of the output instance is not upper-bounded by a function of the parameter---the distances between the points remain the same as in the input instance.   Moreover, since {\WEEO} is a decision problem, the word `solve' in the statement of \Cref{thm:kernel} technically only refers to deciding correctly whether the instance is a yes-instance. However, if the instance is determined to be a yes-instance, then our algorithm can also output a solution that witnesses this, i.e., the outlier set and modified new distance values (if any).  This additional feature of our compression algorithm will be used later in our FPT algorithm for {\WEEO}. 

The proof of \Cref{thm:kernel} is conducted in three steps: \begin{enumerate}\item  First, we design a polynomial-time $(d+3)$-approximation algorithm (\Cref{subsect=bootstrapping}). This algorithm allows us, given an instance of the problem,  either find a set of at most $(d+3)k$
points $A$ such that $(X\setminus A,\dist)$ is $d$-embeddable or conclude that we have a no-instance. \item In the first case, we iteratively construct a sequence of metric bases $Y_1,\ldots,Y_\ell$ for $(Z,\dist)$ using~\Cref{prop:matroid}, where initially $Z:=X\setminus A$ and in each iteration, we delete the points of the constructed basis from $Z$. If 
$X\setminus A$ is sufficiently large, then there is a subsequence of bases of the same size with at least $2(\koutl+2\kmod)+1$ elements. We select the first subsequence with this property. The crucial property exploited by our algorithm is that this subsequence uniquely defines embeddings of the majority of the points assuming that we have a yes-instance of the problem. We use this to give a series of reduction rules that identify points in $A$ that should be outliers in any solution and irrelevant points of $X\setminus A$ that could be deleted. 
\longversion{Thus, in polynomial time, we reduce the number of points to $\Oh((kd)^2)$.} 
\item Finally, by making use of \Cref{prop:FT}, we reduce the weights. 
\end{enumerate}

\subsection{Bootstrapping the Compression through an Approximation}\label{subsect=bootstrapping}

For a distance space $\Xcal=(X,\dist)$, a {\em $d$-outlier set} is a set $A$ of points such that $(X\setminus A,\dist)$ is $d$-embeddable. \longversion{Note that when we delete points from $X$, the function $\dist$ is implicitly restricted to $X\setminus A$.~}In unweighted {\EEO} (which we call {\UEEO}), the function $\woutl$ assigns unit weight to every point. 
Sidiropoulos et al.~\cite{SidiropoulosWW17} observed that this problem 
admits a simple greedy \((d+3)\)-approximation running in \(n^{\Oh(d)}\). 
\longversion{This observation is based on the following fact from distance geometry. 
Following Blumenthal \cite{blumenthal1970theory}, the \emph{order of congruence} for a \(d\)-dimensional space refers to the minimum number of points required such that their pairwise distances uniquely determine their geometric configuration (up to rigid transformations like translation, rotation, and reflection). 
By the result of Menger \cite{Menger1928}, for a \(d\)-dimensional Euclidean space \(\mathbb{R}^d\), the order of congruence is 
$n = d + 2$. 
This implies that any distance space admits an embedding into $\mathbbr{d}$ if and only if every subset of $d+3$ points admits such an embedding \cite{blumenthal1970theory}.  
Hence, one could enumerate every subset of $d+3$ points, identify a subset that cannot be embedded into  $\mathbbr{d}$ (if it exists), remove the points in this set and recurse.
}
We show that this algorithm can be sped up to run in \(n^{\Oh(1)}\) time, i.e., independent of $d$ in the exponent.
To do so, we give a polynomial-time subroutine that produces a small hitting set for all minimal solutions. This subroutine will later be used as the first step of our compression algorithm for {\WEEO}. Moreover, it will also be used to obtain an FPT algorithm for {\EEO} in Section \ref{sec:outlierFPT}.

\begin{lemma}\label{lem:blackBoxObstruction}
There is a polynomial-time algorithm that, given a distance space $\Xcal=(X,\dist)$ and an integer $d\geq 1$, outputs a set of points $\hat A$ of size at most $(d+3)$
such that if $\Xcal$ is not $d$-embeddable, then $\hat A$ intersects every inclusionwise minimal $d$-outlier set of $\Xcal$.
\end{lemma}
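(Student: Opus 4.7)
The plan is to produce $\hat A$ as an inclusionwise-minimal subset of $X$ that fails to be $d$-embeddable, exploiting two observations. First, Menger's theorem on the order of congruence of Euclidean space (the $(d+3)$-point characterization already recalled in the preceding paragraph of the paper) implies that any inclusionwise-minimal non-$d$-embeddable distance space has at most $d+3$ points; this will deliver the size bound in the lemma. Second, if $O$ is any $d$-outlier set then $(X\setminus O,\dist)$ is $d$-embeddable, so no non-$d$-embeddable subset of $X$ can avoid $O$. Combining these, any inclusionwise-minimal non-$d$-embeddable subset of $X$ is automatically a hitting set of size at most $d+3$ for every $d$-outlier set, and in particular for every inclusionwise-minimal one.

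To construct such a set in polynomial time, I would first apply \Cref{prop:realization} to decide in $\Oh(n^3)$ time whether $\Xcal$ itself is $d$-embeddable. If so, the conclusion of the lemma is vacuous and I can return $\hat A=\emptyset$. Otherwise, initialize $\hat A:=X$ and process the points of $X$ one at a time in any fixed order: for the current point $x$, invoke \Cref{prop:realization} on $(\hat A\setminus\{x\},\dist)$ and, if this distance subspace is still not $d$-embeddable, replace $\hat A$ by $\hat A\setminus\{x\}$; otherwise leave $\hat A$ unchanged. The loop has at most $n$ iterations, each running in $\Oh(n^3)$ time, so the overall running time is $\Oh(n^4)$, which is polynomial as required.

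The correctness argument has two parts. For the size bound, I would argue that the final $\hat A$ is inclusionwise-minimal among non-$d$-embeddable subsets of $X$. Fix any $x\in\hat A$, and consider the iteration in which the algorithm processed $x$; the working set at that moment was some $\hat A'\supseteq\hat A$, and because we did \emph{not} delete $x$ we must have had $(\hat A'\setminus\{x\},\dist)$ already $d$-embeddable. Since the property of being $d$-embeddable is closed under taking subsets, $(\hat A\setminus\{x\},\dist)$ is likewise $d$-embeddable, proving minimality. Menger's $(d+3)$-point characterization then forces $|\hat A|\leq d+3$. For the hitting-set property, suppose for contradiction that some inclusionwise-minimal $d$-outlier set $O$ satisfies $\hat A\cap O=\emptyset$; then $\hat A\subseteq X\setminus O$, so $\hat A$ inherits $d$-embeddability from $X\setminus O$, contradicting the construction.

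The only mildly delicate point is verifying that a single forward pass already yields a set minimal with respect to the \emph{final} $\hat A$ and not merely with respect to the working set at the time of processing; this is precisely what the hereditariness of $d$-embeddability provides, so no further argument is required. Everything else is a direct application of \Cref{prop:realization} together with the $(d+3)$-point characterization, and no heavier distance-geometry machinery is needed.
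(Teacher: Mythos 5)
Your proof is correct, and it takes a genuinely different route from the paper. The paper builds up an independent set $A$ one point at a time (using the matroid-like structure of Cayley--Menger independence from \Cref{prop:matroid} and the characterization in \Cref{thm:Blumental}), and returns either $A\cup\{x\}$ when some $x$ makes $A\cup\{x\}$ non-$|A|$-embeddable, or $A\cup\{x,y\}$ when a pair violates the $(|A|-1)$-embeddability condition of \Cref{equivalence}. You instead run the standard minimal-obstruction deletion loop: start with $X$, and greedily delete any point whose removal keeps the set non-$d$-embeddable, relying only on (i) hereditariness of $d$-embeddability, (ii) polynomial-time $d$-embeddability testing via \Cref{prop:realization}, and (iii) Menger's order-of-congruence fact that a non-$d$-embeddable space contains a non-$d$-embeddable subset of at most $d+3$ points, which forces $|\hat A|\le d+3$. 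You correctly address the one subtle point (a single forward pass yields minimality with respect to the \emph{final} set because $d$-embeddability is closed under taking subsets), and your hitting-set argument is the same simple contradiction that underlies the paper's. The tradeoff is that your proof is more elementary and self-contained, whereas the paper's construction deliberately exercises the independent-set machinery (\Cref{prop:matroid}, \Cref{lem:incrementDimension}, \Cref{equivalence}) that is reused verbatim in the compression algorithm of \Cref{thm:kernel} and in {\sf Alg-2} of \Cref{thm:FPToutliers}; the paper's version also yields a slightly better concrete running time (roughly $\Oh(n^2\mathrm{poly}(d))$ embeddability tests on $\Oh(d)$-sized sets versus your $\Oh(n)$ tests on $\Oh(n)$-sized sets), though both are $n^{\Oh(1)}$, which is all the lemma demands.
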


\begin{proof}
Assume that the distance space $\Xcal=(X,\dist)$ does not admit a $d$-embedding. Note that, in particular, this means that $|X|\geq 3$. We then do the following. 

\begin{mybox}
\begin{enumerate}\item Set $A:=\{p\}$ for an arbitrary point $p\in X$.

\item While 

$|A|\leq d$, do the following:
\begin{enumerate}
 \item If there is a point $x\in X\setminus A$ such that $(A\cup \{x\},\dist)$ is not $|A|$-embeddable,  then set $\hat A:=A\cup\{x\}$ and return $\hat A$.

\item If there is a point $x\in X\setminus A$ such that $A\cup\{x\}$ is independent, then set $A:=A\cup\{x\}$ and continue the loop, else exit the loop.
\end{enumerate}
\item If there are two distinct points $x,y\in X\setminus A$ such that $(A\cup \{x,y\},\dist)$ is not $(|A|-1)$-embeddable, then $\hat A:=A\cup\{x,y\}$ and return $\hat A$.
\end{enumerate}
\end{mybox}

Clearly, the algorithm runs in polynomial time. Since the while loop has at most $d$ iterations, the set $A$ when the loop is exited has size at most $d+1$ and so, the returned set~$\hat A$ has size at most $d+3$. 
\longversion{

Let us now argue that the algorithm is correct. 

Suppose that the set $\hat A$ is returned in Line 2(a). Since the loop invariant is that $A$ is independent, it follows that if $(\hat A,\dist)$ is $d$-embeddable, it must be $|A|$-embeddable (by \Cref{lem:incrementDimension}), which is not the case by construction. Since embeddability of a distance space into $\mathbb{R}^d$ is closed under deletion of points, we have that $(\hat A,\dist)$ is not $d$-embeddable
and therefore, $\hat A$ must intersect every minimal $d$-outlier set of $\Xcal$.

Now, assume that $\hat A$ is not returned in Line 2(a). 
When we exit the while loop, $A$ is independent. Moreover, either $|A|=d+1$ or for every point $x\in X\setminus A$, $(A\cup \{x\},\dist)$ is 
$|A|$-embeddable but not strongly $|A|$-embeddable  
and so, it is strongly $(|A|-1)$-embeddable. Furthermore, recall that $(X,\dist)$ is not $d$-embeddable. Then, it is also not $(|A|-1)$-embeddable since $A$ has size at most $d+1$.  
So, by \Cref{equivalence}, there exist points $x$ and $y$ that satisfy the check in Line 3 and so, the set $\hat A$ will be returned in this line.

Consider a minimal $d$-outlier set $O$ of $\Xcal$. It is sufficient to argue that if $\hat A$ is disjoint from $O$, then $(X\setminus O,\dist)$ is  $(|A|-1)$-embeddable. This would be a contradiction to our construction of $\hat A$ and \Cref{equivalence}. Suppose that $\hat A$ is disjoint from $O$ and $(X\setminus O,\dist)$ is not $(|A|-1)$-embeddable. Then, it must be the case that $(X\setminus O,\dist)$ is strongly $d'$-embeddable for some $|A|\leq d'\leq d$ and  we would have extended the set $A$ to a larger independent set in Line 2(b) (by  \Cref{prop:matroid}). 
}
\end{proof}

We can now use \Cref{lem:blackBoxObstruction} as a subroutine in the greedy approximation for the minimization variant of {\em unweighted} {\EEO}.

\begin{lemma}\deferProof{}\label{lem:greedy}
There is a polynomial-time algorithm that, given a distance space $\Xcal=(X,\dist)$ and an integer $d\geq 1$, outputs a set of points $A$ of size at most $(d+3)\cdot \opt$ such that $(X\setminus A,\dist)$ is embeddable into $\mathbb{R}^d$ where $\opt$ is the size of a smallest $d$-outlier set. 
\end{lemma}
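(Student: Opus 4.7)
The plan is to run a greedy loop that repeatedly invokes the subroutine from \Cref{lem:blackBoxObstruction} until the remaining distance space is $d$-embeddable. Initialize $A:=\emptyset$, and at each iteration test in polynomial time whether $(X\setminus A,\dist)$ is $d$-embeddable via \Cref{prop:realization}; if so, return $A$. Otherwise, apply \Cref{lem:blackBoxObstruction} to $(X\setminus A,\dist)$ to obtain a set $\hat A$ of size at most $d+3$ that meets every inclusion-minimal $d$-outlier set of the current space, and update $A:=A\cup \hat A$. When the loop terminates, $(X\setminus A,\dist)$ is $d$-embeddable by construction, so $A$ is a valid $d$-outlier set of $\Xcal$.

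For the approximation ratio, I would fix any optimal $d$-outlier set $O^*$ with $|O^*|=\opt$ and show that every iteration strictly decreases $|(X\setminus A)\cap O^*|$. Let $Y_i:=X\setminus A$ at the start of iteration $i$. Since $X\setminus O^*$ is $d$-embeddable, so is its subset $Y_i\setminus O^*$, meaning that $O^*\cap Y_i$ is itself a $d$-outlier set of $(Y_i,\dist)$. Choose any inclusion-minimal $d$-outlier set $M_i\subseteq O^*\cap Y_i$ of $(Y_i,\dist)$. By \Cref{lem:blackBoxObstruction}, the returned set $\hat A_i$ meets $M_i$, and in particular $\hat A_i\cap O^*\cap Y_i\neq\emptyset$. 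Hence $|(Y_i\setminus \hat A_i)\cap O^*|<|Y_i\cap O^*|$, so after at most $\opt$ iterations the intersection becomes empty and the remaining space is a subset of $X\setminus O^*$, which is $d$-embeddable, and the loop stops.

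Combining the two observations gives $|A|\leq (d+3)\cdot \opt$. Every iteration runs in polynomial time (both the embeddability test and the subroutine of \Cref{lem:blackBoxObstruction}), and there are at most $\opt\leq n$ iterations, so the entire procedure runs in polynomial time. There is no real obstacle to overcome here: this is the standard reduction from a small hitting set for minimal obstructions to a greedy approximation, and the only nontrivial ingredient, namely that polynomially many candidate points suffice to intersect every minimal $d$-outlier set, has already been supplied by \Cref{lem:blackBoxObstruction}.
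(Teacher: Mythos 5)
Your proposal is correct and follows essentially the same greedy approach as the paper: repeatedly invoke \Cref{lem:blackBoxObstruction} on the surviving points, accumulate the returned sets of size at most $d+3$, and observe that each such set must intersect any fixed optimal outlier set, giving at most $\opt$ iterations. The only difference is cosmetic — you spell out explicitly (via a minimal subset $M_i\subseteq O^*\cap Y_i$) why each iteration consumes a point of the optimum, a step the paper states more tersely, but the argument is identical in substance.
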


\longversion{
\begin{proof}
Using \Cref{lem:blackBoxObstruction}, we iteratively construct disjoint sets $A_1,A_2,\ldots$ of at most $d+3$ points each such that each set should contain at least one outlier, that is, a point of any feasible solution to the instance and finally output the union of these sets.
Precisely, set $A_0=\emptyset$ and for every $i>0$, to compute $A_i$, we delete $A_1,\dots, A_{i-1}$ from $X$ and invoke \Cref{lem:blackBoxObstruction} on the remaining points. Because each set $A_i$ ($i>0$) contains at least one outlier and has size at most $d+3$, we obtain that the algorithm outputs a set of vertices $A$ of size at most $(d+3)\cdot \opt$ such that $(X\setminus A,\dist)$ is embeddable into $\mathbb{R}^d$. Because embeddability can be checked in polynomial time by~\Cref{prop:realization} and the algorithm of \Cref{lem:blackBoxObstruction} runs in polynomial time, we conclude that the overall running time of the algorithm is polynomial. This completes the proof. 
\end{proof}
}

\subsection{Compression}\label{subseccompression}

We are now ready to give the compression algorithm for {\WEEO}.

\begin{proof}[Proof of \Cref{thm:kernel}]
Let $(\Xcal=(X,\dist),\woutl,\wmod,W,\koutl,\kmod,d)$ be an instance of \WEEO. First, we preprocess the instance to reduce the number of points and then we explain how to reassign the weights.

Notice that if $(\Xcal,\woutl,\wmod,W,\koutl,\kmod,d)$ is a yes-instance, then it is possible to delete ${k=\koutl+\kmod}$ points to obtain a distance space embeddable in $\mathbb{R}^d$. 
In the first step of our algorithm, we call the algorithm from \Cref{lem:greedy} for the instance $(\Xcal,k,d)$ of \EEO{} {\em without} weights. 
This algorithm outputs a set of points $A$ of size at most $(d+3)\cdot \opt$ such that $(X\setminus A,\dist)$ is embeddable into $\mathbb{R}^d$ where $\opt$ is the minimum number of outliers. If $|A|>(d+3)k$ then we conclude that $(\Xcal,\woutl,\wmod,W,\koutl,\kmod,d)$ is a no-instance of \WEEO and stop. From now on, we assume that~$|A|\leq (d+3)(\koutl+\kmod)$. 

Let $Y=X\setminus A$.
We first show that if $Y$ has bounded size, then we can return the original instance of the problem and stop.

\begin{reduction}\label{red:small}
If $|Y|\leq 2(\koutl+2\kmod)(d+1)^2$ then return  $(\Xcal,\woutl,\wmod,W,\koutl,\kmod,d)$\longversion{ and stop}. \end{reduction}

We can apply this rule because if $|Y|\leq 2(\koutl+2\kmod)(d+1)^2$ then $|X|=|Y|+|A|\leq 
2(\koutl+2\kmod)(d+1)^2+(\koutl+\kmod)(d+3)\leq 5(\koutl+\kmod)(d+3)^2 = \Oh((kd)^2)$.
From now on, we assume that $|Y|>2(\koutl+2\kmod)(d+1)^2$.

Because $A$ is a feasible solution to \EEO, $(Y,\dist)$ can be embedded into $\mathbb{R}^d$. In the subsequent steps, we may delete some points in $A$ and reduce the parameter $\koutl$.  This may result in a trivial instance which we solve by applying the following rule whenever possible.

\begin{reduction}\label{rule:stop}
If $\koutl<0$ then return a no-answer and stop. If $\koutl\geq 0$ and $A=\emptyset$ then return a yes-answer and stop. 
\end{reduction}

We greedily partition $Y$ into sets $Y_1,\ldots,Y_\ell$ of size at most $d+1$ such that each~$Y_i$ is an inclusion-maximal independent set of points of $Y\setminus\bigcup_{j=1}^{i-1}Y_j$, that is, $Y_i$ is an inclusion-maximal subset such that $(Y_i,\dist)$ is strongly $(|Y_i|-1)$-embeddable. 

Assume that $i\geq 1$ and the sets $Y_1,\ldots,Y_{i-1}$ are already constructed. We set ${Z:=Y\setminus \bigcup_{j=1}^{i-1}Y_j}$. If $Z=\emptyset$, then the partition is constructed. 
If $Z\neq\emptyset$, then we do the following:
\begin{enumerate}
    \item To initiate the construction of $Y_i$, we set $Y_i=\{x\}$ for an arbitrary point $x\in Z$ and set $h=1$.
\item While there is a point $y\in Z\setminus Y_i$ such that 
$Y_i\cup\{y\}$ is independent, 
set $Y_i:=Y_i\cup\{y\}$ and $h:=h+1$.
\end{enumerate}
Because $Y$ is $d$-embeddable, $Z$ is strongly $t$-embeddable for some $t\leq d$ and contains an independent set of size $t$ by \Cref{thm:Blumental}. By~\Cref{prop:matroid}, the described procedure will construct a set $Y_i$ of size $t+1\leq d+1$ such that $(Y_i,\dist)$ is strongly $t$-embeddable.
Notice that by~\Cref{prop:matroid}, $Y_i$ is a metric basis for $(Z,\dist)$.

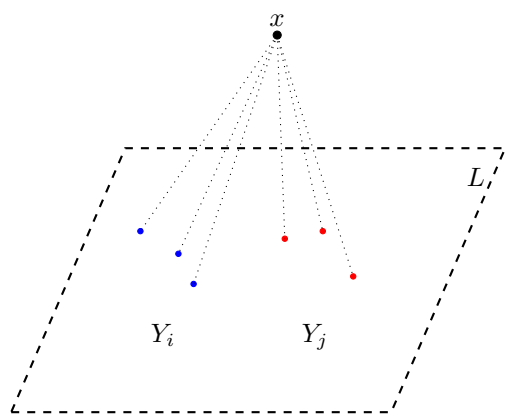
\begin{figure}[t!]
\centering
    \begin{tikzpicture}

    \coordinate (ClusterA) at (2, 2);
    \coordinate (ClusterB) at (4, 2);

    \node[below=20pt] at (ClusterA) {$Y_i$};
    \node[below=20pt] at (ClusterB) {$Y_j$};

    \coordinate (A1) at ($(ClusterA) + (0.2, 0.1)$);
    \coordinate (A2) at ($(ClusterA) + (-0.3, 0.4)$);
    \coordinate (A3) at ($(ClusterA) + (0.4, -0.3)$);
    \filldraw[blue] (A1) circle (1pt);
    \filldraw[blue] (A2) circle (1pt);
    \filldraw[blue] (A3) circle (1pt);

    \coordinate (B1) at ($(ClusterB) + (0.5, -0.2)$);
    \coordinate (B2) at ($(ClusterB) + (-0.4, 0.3)$);
    \coordinate (B3) at ($(ClusterB) + (0.1, 0.4)$);
    \filldraw[red] (B1) circle (1pt);
    \filldraw[red] (B2) circle (1pt);
    \filldraw[red] (B3) circle (1pt);

    \coordinate (A) at (0, 0);
    \coordinate (B) at (5.0, 0);
    \coordinate (C) at (6.5, 3.5);
    \coordinate (D) at (1.5, 3.5);

    \draw[thick, dashed, black] (A) -- (B) -- (C) -- (D) -- cycle;

    \node[below left=2mm of C] {$L$}; 

    \coordinate (X) at (3.5, 5);
    \node[above] at (X) {$x$};
    \filldraw[black] (X) circle (1.5pt);

    \foreach \p in {A1, A2, A3, B1, B2, B3} {
        \draw[dotted, thin] (X) -- (\p);
    }
\end{tikzpicture}
\caption{A pair $(Y_i,Y_j)$ of $x$-compatible sets. Notice that if 
$j\geq i$ and $|Y_j|\leq |Y_i|=t+1$ then $(Y_i\cup Y_j,\dist)$ admits a strong embedding into $\mathbb{R}^{t}$ by~\Cref{prop:matroid} because $Y_i$ is a metric basis for  $(Y_i\cup Y_j,\dist)$. Furthermore, given a realization of $(Y_i,\dist)$ such that the points are mapped into an affine subspace $L$ of dimension $t$, the points of $Y_j$ are mapped into $L$ and the mapping is unique. If $(Y_i\cup\{x\},\dist)$ is strongly $t$-embeddable then $x$ is embedded in $L$ and the embedding is unique. If $(Y_i\cup\{x\},\dist)$ is strongly $(t+1)$-embeddable, 
that is, $Y_i\cup\{x\}$ is independent then by~\Cref{prop:unique}, the distance space is embedded into a subspace $L'$ of dimension $t+1$ containing $L$ and the embedding of $x$ is unique up to the reflection with respect to $L$.}\label{fig:comp}
\end{figure}

Let $x\in A$. We say that $Y_i$ for $i\in\{1,\ldots,\ell\}$ is \emph{$x$-compatible} if $(Y_i\cup\{x\},\dist)$ is embeddable in $\mathbb{R}^d$. Otherwise, $Y_i$ is \emph{$x$-incompatible}. Similarly, for two distinct sets $Y_i,Y_j$ for $i,j\in\{1,\dots,\ell\}$, the pair $(Y_i,Y_j)$ is 
\emph{$x$-compatible} if $(Y_i\cup Y_j\cup\{x\},\dist)$ is embeddable in $\mathbb{R}^d$, and 
$(Y_i,Y_j)$ is \emph{$x$-incompatible} otherwise (see~\Cref{fig:comp}).

Notice that if $Y_i$ is $x$-incompatible then for each solution to the instance either $x$ is an outlier, or $Y_i$ contains an outlier, or at least one distance between the points $Y_i\cup\{x\}$ should be modified. 



For each $i\in\{1,\ldots,l\}$, $1\leq |Y_i|\leq d+1$. We partition the family~$\{Y_1,\ldots,Y_\ell\}$ of sets into parts $C_1,\ldots,C_{d+1}$ according to their size (some parts may be empty). Formally, for $h\in\{1,\ldots,d+1\}$, $C_h=\{Y_i\mid 1\leq i\leq \ell\text{ and }|Y_i|=h\}$.

Consider the part $C_h$ for some $h\in\{1,\ldots,d+1\}$. For a point $x\in A$, we say that two $x$-compatible sets $Y_i,Y_j\in C_h$ are \emph{$x$-equivalent} if the pair $(Y_i,Y_j)$ is $x$-compatible,  
that is, $(Y_i\cup Y_j\cup\{x\},\dist)$ is embeddable into $\mathbb{R}^d$.
We show the following claim.

\begin{claim}\deferProof{}\label{cl:equiv}
$x$-equivalency is an equivalence relation on the family of $x$-compatible sets in $C_h$. Furthermore, if $Y_i,Y_j\in C_h$ are $x$-equivalent and the pair $(Y_j,Y_s)$ for $Y_s\in C_{h'}$ with $h'\leq h$ is $x$-compatible then $(Y_i,Y_s)$ is $x$-compatible.      
\end{claim}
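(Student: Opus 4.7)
The plan is to prove the claim via realization arguments, leveraging \Cref{prop:unique}, \Cref{prop:matroid}, and the greedy construction of the $Y_i$'s. Reflexivity (since $Y_i$ being $x$-compatible is the definition of being $x$-equivalent to itself) and symmetry (immediate from the symmetric definition) are straightforward, so the substantive content is transitivity, packaged together with the second assertion that allows $Y_s \in C_{h'}$ with $h' \le h$. I will prove both via a single argument.

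The first step is a structural lemma: for any two distinct sets $Y_a, Y_b$ from the greedy partition with $|Y_a| \le |Y_b| = h = t+1$, the distance space $(Y_a \cup Y_b, \dist)$ is strongly $t$-embeddable, and $Y_b$ is a metric basis for it. Indeed, whichever of $Y_a, Y_b$ was chosen first by the greedy procedure was inclusion-maximal independent in a set containing the other, so by the matroid exchange property of \Cref{prop:matroid} no independent subset of $Y_a \cup Y_b$ has size exceeding $t+1$; since $Y_b$ itself witnesses this bound, the strong dimension is exactly $t$, and $Y_b$ (being of size $t+1$) is a metric basis by \Cref{prop:matroid}.

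Next, fix a realization $P$ of $(Y, \dist)$ in $\mathbb{R}^d$ and write $P_i, P_j, P_s$ for the induced placements of $Y_i, Y_j, Y_s$. From the two hypotheses, obtain realizations $\varphi$ of $(Y_i \cup Y_j \cup \{x\}, \dist)$ and $\psi$ of $(Y_j \cup Y_s \cup \{x\}, \dist)$ in $\mathbb{R}^d$. Using \Cref{prop:unique}, compose each with an appropriate rigid transformation so that $\varphi(Y_i) = P_i$, $\varphi(Y_j) = P_j$, and $\psi(Y_j) = P_j$. Applying the structural lemma to $(Y_j \cup Y_s, \dist)$ shows that $Y_j$ is a metric basis for this subspace, which forces $\psi(Y_s) = P_s$. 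Set $x_1 := \varphi(x)$ and $x_2 := \psi(x)$; both sit at the prescribed distances from $P_j$. The goal is to construct a realization $\eta$ of $(Y_i \cup Y_s \cup \{x\}, \dist)$ with $\eta(Y_i) = P_i$, $\eta(Y_s) = P_s$, and $\eta(x) \in \{x_1, x_2\}$.

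The main obstacle is that $x_1$ and $x_2$ can genuinely differ when $Y_j \cup \{x\}$ is itself independent of size $t+2$; then $x$ has two possible positions symmetric about the affine span $L_j$ of $P_j$. I resolve this by a case split. If $Y_j \cup \{x\}$ is not independent, then $(Y_j \cup \{x\}, \dist)$ is strongly $t$-embeddable, $Y_j$ is a metric basis for it by \Cref{prop:matroid}, and therefore $x_1 = x_2$; choosing $\eta(x) = x_1$ satisfies all distance requirements. If $Y_j \cup \{x\}$ is independent, then by \Cref{prop:unique} the points $x_1$ and $x_2$ are reflections of each other across $L_j$; the structural lemma applied to $(Y_i \cup Y_j, \dist)$ guarantees $P_i \subseteq L_j$, so $\|y - x_1\|_2 = \|y - x_2\|_2$ for every $y \in P_i$. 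Setting $\eta(x) = x_2$ then yields the correct distances to both $P_s$ (directly from $\psi$) and $P_i$ (by this reflection symmetry). In either case $(Y_i \cup Y_s \cup \{x\}, \dist)$ is $d$-embeddable, yielding $x$-compatibility of $(Y_i, Y_s)$ and completing the proof.
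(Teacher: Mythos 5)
Your proof is correct and takes essentially the same route as the paper: isolate the structural fact that for two greedy pieces $Y_a, Y_b$ with $|Y_a|\le|Y_b|=t+1$ the union is strongly $t$-embeddable with $Y_b$ a metric basis (the paper uses this implicitly), and then split on whether $Y_j\cup\{x\}$ is independent, invoking uniqueness of the metric-basis extension in the non-independent case and a ``one extra degree of freedom for $x$'' argument in the independent case. The main presentational difference is that you fix a global realization $P$ of $(Y,\dist)$ in $\mathbb{R}^d$ and align $\varphi,\psi$ to it by rigid motions, while the paper fixes a realization of $Y_j$ (resp.\ $Y_j\cup\{x\}$) in $\mathbb{R}^t$ (resp.\ $\mathbb{R}^{t+1}$) and uses uniqueness of the extension; both are valid.

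One small inaccuracy worth flagging: in the independent case you write that ``by \Cref{prop:unique} the points $x_1$ and $x_2$ are reflections of each other across $L_j$.'' In the ambient space $\mathbb{R}^d$ with $d>t+1$ this is not forced; what \Cref{prop:unique} gives is a rigid transformation $T$ of $\mathbb{R}^d$ with $T(x_1)=x_2$ that fixes $P_j$, hence fixes the affine span $L_j$ pointwise, but on the orthogonal complement of $L_j$ it can be an arbitrary element of $O(d-t)$, not just a reflection. (The paper's ``reflection'' phrasing is safe because it works inside a fixed $(t+1)$-dimensional subspace $L'$.) Fortunately, the property you actually use, namely $\|y-x_1\|_2=\|y-x_2\|_2$ for all $y\in L_j\supseteq P_i$, holds for every isometry fixing $L_j$ pointwise, so the conclusion and the proof stand; just replace ``reflections of each other across $L_j$'' by ``related by an isometry of $\mathbb{R}^d$ that fixes $L_j$ pointwise.''
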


\longversion{
\begin{proof}[Proof of~\Cref{cl:equiv}]
Reflexivity and symmetry of the $x$-equivalency relation are trivial. To show transitivity, assume that $Y_i,Y_j\in C_h$ and $Y_j,Y_s\in C_h$ are $x$-equivalent. Recall that by the construction of the sets, $Y_i,Y_j,Y_s$ are inclusion-maximal independent sets of the same size $t+1$. Furthermore, by the construction of the sets $Y_i,Y_j,Y_s$, $(Y_i\cup Y_j\cup Y_s,\dist)$ is strongly $t$-embeddable. 
We fix a realization of $(Y_j,\dist)$ in $\mathbb{R}^t$. Then, 
by~\Cref{prop:matroid}, $Y_j$ is a metric basis 
and 
the realization of $(Y_i\cup Y_j\cup Y_s,\dist)$ is unique. 
Suppose that $(Y_j\cup\{x\},\dist)$ is strongly $t$-embeddable.  
Then the realization of $(Y_j\cup\{x\},\dist)$ is unique by \Cref{prop:matroid}. This means that  
$(Y_i\cup Y_s\cup\{x\},\dist)$ is $t$-embeddable, that is, $Y_i$ and~$Y_s$ are $x$-equivalent. Assume that $(Y_j\cup\{x\},\dist)$ is strongly $(t+1)$-
that is, $Y_j\cup\{x\}$ is independent. Then we extend the realization of $(Y_j,\dist)$ to the realization of $(Y_j\cup\{x\},\dist)$ and fix this realization. Again applying \Cref{prop:matroid}, we conclude that the realizations of $(Y_i\cup Y_j\cup\{x\},\dist)$ and 
$(Y_j\cup Y_s\cup\{x\})$ are unique and their union gives the realization of 
$(Y_i\cup Y_j\cup Y_s\cup \{x\},\dist)$. Thus, $Y_i$ and $Y_s$ are $x$-equivalent.

Notice that the same arguments prove the second part of claim because $(Y_j,\dist)$ is strongly $(|Y_j|-1)$-embeddable 
and, given a realization of  $(Y_j,\dist)$, the realization of  $(Y_i\cup Y_j\cup Y_s,\dist)$ in $\mathbb{R}^{|Y_j|-1}$ is unique because $h'\leq h$.
This completes the proof.
\end{proof}
}

The definition of $x$-equivalency implies the following property. 

\begin{claim}\label{cl:choice}
Let $R_1$ and $R_2$ be two distinct classes of $x$-equivalent $x$-compatible sets.
Then for any solution $(O, D)$ to the considered instance, 
either (i)~$x\in O$, or (ii)~for each set $Y_i$ in $R_1$ there is a~$y \in Y_i$ such that $y \in O$ or there exists a~$z$ such that~$\{y,z\} \in D$, or (iii)~for each set $Y_j$ in $R_2$ there is a~$y \in Y_j$ such that $y \in O$ or there exists a~$z$ such that~$\{y,z\} \in D$.
\end{claim}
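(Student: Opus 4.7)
The plan is to prove Claim \ref{cl:choice} by contradiction. Assume that there is a solution $(O,D)$ violating all three alternatives. The violation of (i) gives $x\notin O$. The violation of (ii) yields a specific set $Y_i\in R_1$ which is ``untouched'' by the solution, meaning $Y_i\cap O=\emptyset$ and no pair in $D$ has an endpoint in $Y_i$. Analogously, the violation of (iii) yields such a $Y_j\in R_2$.

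I would then zoom in on the subspace induced by $S:=Y_i\cup Y_j\cup\{x\}$ and show that the solution preserves all distances inside $S$. Since every pair $\{a,b\}\subseteq S$ with $a\neq b$ contains at least one element of $Y_i\cup Y_j$ (as $x$ is the only point of $S$ outside that union), our untouched-ness assumption guarantees $\{a,b\}\notin D$. Consequently, if $\dist'$ denotes the modified distance function produced by the solution, then $\dist'(a,b)=\dist(a,b)$ for every $\{a,b\}\in S^{(2)}$. Moreover, $S\subseteq X\setminus O$ since none of the points of $Y_i$, $Y_j$, or $x$ lies in $O$.

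Since $(X\setminus O,\dist')$ is $d$-embeddable by the validity of the solution, its restriction $(S,\dist')$ is also $d$-embeddable; combining this with the equality $\dist'\vert_S=\dist\vert_S$ shown above yields that $(S,\dist)=(Y_i\cup Y_j\cup\{x\},\dist)$ is $d$-embeddable.

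Finally, I would derive the contradiction from the hypothesis that $R_1$ and $R_2$ are distinct equivalence classes: this means $Y_i$ and $Y_j$ are not $x$-equivalent, and since both sets are $x$-compatible (they belong to the same part $C_h$ of the partition and to classes of $x$-compatible sets), the pair $(Y_i,Y_j)$ must be $x$-incompatible by the definition of the equivalence, i.e., $(Y_i\cup Y_j\cup\{x\},\dist)$ is \emph{not} $d$-embeddable, contradicting what we just derived. The argument is structurally short; the only subtle step is to verify carefully that every pair within $S$ lies outside $D$, which relies on the fact that the failure of (ii) and (iii) gives untouched sets $Y_i$ and $Y_j$ in the strong sense that \emph{no} modification pair touches them at all (not merely that no modification occurs within them).
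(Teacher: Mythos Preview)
Your proof is correct and is essentially the argument the paper has in mind: the paper does not spell out a proof of this claim, merely noting that it ``follows from the definition of $x$-equivalency,'' and your contradiction argument is exactly the natural way to unpack that remark. Your careful verification that the negations of (ii) and (iii) yield sets untouched by $D$ in the strong sense (no modification pair has even one endpoint in them) is precisely the point that makes the argument go through.
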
  

We use this property in the following rule. Because $|Y|>2(\koutl+2\kmod)(d+1)^{2}$ and each $Y_i$ is of size at most $d+1$,
there is $h\in\{1,\ldots,d+1\}$ with $C_h$ being of size at least $2(\koutl+2\kmod)+1$. Let $h^*$ be the maximum value of $h\in\{1,\ldots,d+1\}$ with this property.

\begin{reduction}\label{red:compatible}
If there is $x\in A$ such that each $x$-equivalence class $R$ of $x$-compatible sets of $C_{h^*}$ has size at most $|C_{h^*}|-(\koutl+2\kmod)-1$, then set $X:=X\setminus\{x\}$, $A:=A\setminus \{x\}$, and $\koutl:=\koutl-1$.
\end{reduction}

To show that the rule is safe, we claim that for any solution $(O,D)$ to the considered instance, $x\in O$ if the rule could be applied. Assume for the sake of contradiction that there is a solution $(O,D)$ such that $x\notin O$.  Because $|C_{h^*}|\geq 2(\koutl+2\kmod)+1$, there is~$Y_i\in C_{h^*}$ such that $Y_i\cap O=\emptyset$ and $\{y,z\}\cap Y_i=\emptyset$ for every $\{y,z\}\in D$. Then $Y_i$ is in some $x$-equivalency class $R$ of $x$-compatible sets. By \Cref{cl:choice}, for every set $Y_j$ from another class, either $Y_j$ contains an outlier or there is $\{y,z\}\in D$ such that $\{y,z\}\cap Y_j\neq\emptyset$.
Furthermore, for any $x$-incompatible set $Y_j\in C_{h^*}$, either $Y_j$ contains an outlier from $O$ or there is $\{y,z\}\in D$ with both $y,z\in Y_j\cup\{x\}$.
As $|R|\leq |C_{h^*}|-(\koutl+2\kmod)-1$, we obtain that for at least~$\koutl+2\kmod+1$ sets~$Y_j$ in~$C_h$,
either $Y_j$ contains an outlier or there is $\{y,z\}\in D$ with~$\{y,z\}\cap Y_j\neq\emptyset$.
However, this contradicts that $(O,D)$ is a solution as~$|O|+2|D|\leq \koutl+2\kmod$. The obtained contradiction proves safeness.

After the exhaustive application of \Cref{red:compatible}, we have that for each~$x\in A$, there is an $x$-equivalence class $R_x$ of $x$-compatible sets in~$C_{h^*}$ that contains at least $|C_{h^*}|-\koutl-2\kmod\geq (\koutl+2\kmod)+1$ sets
and all other classes in~$C_{h^*}$ contain at most $\koutl+2\kmod$ elements combined. We say that $R_x$ is \emph{large}.
We exhaustively apply the following rule using the fact that $R_x$ contains at least $(\koutl+2\kmod)+1$ sets.

\begin{reduction}\label{red:incompatible}
If there is $x\in A$ such that there are at least $\koutl+\kmod+1$ sets $Y_j\in C_h$ for any~$h\leq h^*$ such that $(Y_i,Y_j)$ is $x$-incompatible for some~$Y_i\in R_x$, then set $X:=X\setminus\{x\}$, $A:=A\setminus \{x\}$, and $\koutl:=\koutl-1$.
\end{reduction}

To argue safeness, we prove that $x\in O$ for any solution $(O,D)$ to the considered instance whenever the rule can be applied. For the sake of contradiction, assume that $x\notin O$ for some solution.
Note that by \Cref{cl:equiv}, if $(Y_i,Y_j)$ is $x$-incompatible for some $Y_i\in R_x$ then~$(Y_s,Y_j)$ is $x$-incompatible for all $Y_s\in R_x$. Because $R_x$ is large, there is $Y_s\in R_x$ such that~$O\cap Y_s=\emptyset$ and~$\{y,z\}\cap Y_s=\emptyset$ for all $\{y,z\}\in D$. Then, for every set $Y_j$ such that~$(Y_s,Y_j)$ is $x$-incompatible either $O\cap Y_j\neq \emptyset$ or there is $\{y,z\}\in D$ with both $y,z\in Y_j\cup\{x\}$. However, the number of such sets $Y_j$ is at least $(\koutl+\kmod)+1$ contradicting that 
$(O,D)$ is a solution. This proves that $x\in O$ and the rule is safe.

In the next crucial step of our algorithm, we identify \emph{important} sets  $Y_i$ for $\in\{1,\ldots,\ell\}$ and delete all other sets that are irrelevant.
\longversion{}
We apply the following \emph{marking} procedure that labels important sets $Y_i$ for $i\in\{1,\ldots,\ell\}$. For every $x\in A$, we do the following: 
\begin{itemize}
\item mark every $Y_i$ in $C_h$ for $h>h^*$,
\item for each $x\in A$, mark arbitrary $(\koutl+2\kmod)+1$ sets of $R_x$,
\item for each $x\in A$ and each $h\in\{1,\ldots,h^*\}$, mark each set $Y_j\in C_h$ such that there is a set~$Y_i\in R_x$ such that the pair $(Y_i,Y_j)$ is $x$-incompatible. 
\end{itemize}


We set $I=\{i\mid 1\leq i\leq \ell\text{ and }Y_i\text{ is marked}\}$ and $Y'=\bigcup_{i\in I}Y_i$, and define $X'=A\cup Y'$. 
\longversion{We prove the following crucial claim.}

\begin{claim}\deferProof{}\label{cl:irrelevant}
The instance $(\Xcal=(X,\dist),\woutl,\wmod,W,\koutl,\kmod,d)$ of \WEEO{} is equivalent to the instance~$(\Xcal'=(X',\dist),\woutl,\wmod,W,\koutl,\kmod,d)$.
\end{claim}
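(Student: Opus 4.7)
The proof will establish equivalence in both directions; the harder direction is the reverse. The forward direction is immediate: if $(O,D)$ is a solution for $(\Xcal,\ldots)$ with corresponding modified distance $\dist'$ (agreeing with $\dist$ outside $D$), then $(O\cap X',D\cap X'^{(2)})$ has no larger weight or cardinality, and $(X'\setminus (O\cap X'),\dist'|_{X'})$ is a subspace of the $d$-embeddable space $(X\setminus O,\dist')$, hence $d$-embeddable.

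For the reverse direction, let $(O',D')$ be a solution for the compressed instance and let $\varphi'\colon X'\setminus O'\to \mathbb{R}^d$ be a $d$-embedding of $(X'\setminus O',\dist')$, where now $\dist'$ is the modification of $\dist$ on the pairs of $D'$. If $A\subseteq O'$, then the simpler pair $(A,\emptyset)$ solves $(\Xcal,\ldots)$: the residual $(X\setminus A,\dist)=(Y,\dist)$ is $d$-embeddable (this was guaranteed when \Cref{lem:greedy} was applied), $|A|\leq|O'|\leq \koutl$, and $\woutl(A)\leq \woutl(O')+\wmod(D')\leq W$. Assume henceforth $A\not\subseteq O'$. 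I claim that $(O',D')$ itself is a solution for $(\Xcal,\ldots)$, which reduces to exhibiting a $d$-embedding of $(X\setminus O',\dist')$ extending $\varphi'$.

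The plan is to anchor the extension using untouched marked structure. Fix $x^*\in A\setminus O'$. The marking rule places at least $(\koutl+2\kmod)+1$ sets from $R_{x^*}$ into $Y'$, while at most $|O'|+2|D'|\leq \koutl+2\kmod$ of them can contain a point of $O'$ or an endpoint of a pair in $D'$. Hence some marked $Y_i^{x^*}\in R_{x^*}$ is \emph{untouched}, and $\varphi'$ restricted to $Y_i^{x^*}\cup\{x^*\}$ is an embedding of the unmodified distance space $(Y_i^{x^*}\cup\{x^*\},\dist)$. Next I take a realization $\psi\colon Y\to\mathbb{R}^d$ of $(Y,\dist)$ (which exists by the output of \Cref{lem:greedy}) and, via \Cref{prop:unique}, apply a rigid motion to $\psi$ so that $\psi|_{Y_i^{x^*}}=\varphi'|_{Y_i^{x^*}}$. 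I then define the candidate embedding $\varphi$ by $\varphi|_{X'\setminus O'}=\varphi'$ and $\varphi|_{Y\setminus Y'}=\psi|_{Y\setminus Y'}$.

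The verification that $\varphi$ realizes $(X\setminus O',\dist')$ splits according to where the endpoints of a pair lie. Because $D'\subseteq X'^{(2)}$, no pair with an endpoint in $Y\setminus Y'$ lies in $D'$, so distances inside $Y\setminus Y'$ are handled by $\psi$ and distances inside $X'\setminus O'$ by $\varphi'$. The essential identity to establish is $\|\psi(p)-\varphi'(q)\|=\dist(p,q)$ for $p$ in some unmarked $Y_j$ and $q\in X'\setminus O'$. The workhorse is that for every unmarked $Y_j$ and every $x\in A$, the pair $(Y_i^x,Y_j)$ is $x$-compatible (otherwise $Y_j$ would have been marked), so $(Y_i^x\cup Y_j\cup\{x\},\dist)$ is $d$-embeddable; combined with the greedy construction of the $Y_\ell$'s, \Cref{prop:matroid} implies that $Y_i^x$---or $Y_i^x\cup\{x\}$ when $x$ leaves the affine span of $Y_i^x$---is a metric basis for this subspace, forcing the extension of $\varphi'|_{Y_i^x\cup\{x\}}$ to $Y_j$ to coincide with $\psi|_{Y_j}$. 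I expect the main obstacle to be the sub-case in which $q\in Y'\cap(X\setminus O')$ is itself an endpoint of a modified pair, since \emph{a priori} one could have $\varphi'(q)\ne\psi(q)$; the argument there will exploit the fact that every $Y_k\in C_h$ with $h>h^*$ is marked, so $Y'$ contains a basis of $(Y,\dist)$, and the unmodified distances from $q$ to the untouched portion of this basis will pin $\varphi'(q)$ down to $\psi(q)$ after the alignment, closing the verification.
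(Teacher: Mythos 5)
Your overall architecture mirrors the paper's proof: both directions split on whether $A\subseteq O'$, and in the nontrivial case one anchors the extension of $\varphi'$ to $Y\setminus Y'$ on a marked member of $C_{h^*}$ that is untouched by $O'$ and $D'$, then verifies distances using per-$x$ compatibility. The paper extends $\varphi'$ directly via the anchor rather than aligning a separate realization $\psi$ of $(Y,\dist)$, but since points of $Y\setminus Y'$ lie in the residual and hence in the affine span $L$ of the anchor, the two constructions produce the same map on $Y\setminus Y'$; that part is fine.

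However, there are two genuine gaps. First, for the ``Case 3b'' verification ($p\in Y\setminus Y'$, $q=x\in A\setminus O'$), you switch from the anchor $Y_i^{x^*}$ to a different untouched set $Y_i^{x}\in R_x$ and implicitly use $\psi|_{Y_i^{x}}=\varphi'|_{Y_i^{x}}$ -- but you aligned $\psi$ with $\varphi'$ only on $Y_i^{x^*}$. This identity does hold (both $Y_i^{x^*}$ and $Y_i^{x}$ lie in the residual, which is strongly $t$-embeddable with $Y_i^{x^*}$ as a metric basis, so the position of $Y_i^{x}$ is pinned down by $Y_i^{x^*}$ and unchanged distances), but it must be argued. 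Second, and more seriously, your resolution of the ``Case 3a'' sub-case ($q\in Y'\setminus O'$ an endpoint of a modified pair) is a dead end. You claim that $Y'$ contains an untouched basis of $(Y,\dist)$ that will force $\varphi'(q)=\psi(q)$. This fails on two counts: when $h^*=d+1$ the top-dimensional $Y_1$ is in $C_{h^*}$ and need not be marked, and even when a basis is marked, its members may be hit by $O'$ or $D'$. Moreover, $\varphi'(q)=\psi(q)$ is both unattainable in general (when $h^*<d+1$, the unchanged distances to $Y_i^{x^*}$ only constrain $\varphi'(q)$ to the orbit of $\psi(q)$ under rigid motions fixing $L$ pointwise) and unnecessary. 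The correct closing observation -- which the paper makes via a case split on whether $\varphi'(q)\in L$ -- is that $\psi(p)$ lies in $L$ for every $p\in Y\setminus Y'$, and any rigid motion of $\mathbb{R}^d$ fixing $L$ pointwise preserves distances from points of $L$; hence $\|\psi(p)-\varphi'(q)\|=\|\psi(p)-\psi(q)\|=\dist(p,q)$ even though $\varphi'(q)$ need not equal $\psi(q)$. Without this insight the verification does not close.
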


\longversion{
\begin{proof}[Proof of \Cref{cl:irrelevant}]
If $(O,D)$ is a solution to the instance $(\Xcal,\woutl,\wmod,W,\koutl,\kmod,d)$ then, trivially, $(O',D')$, where 
$O'=O\cap X'$ and $D'=D\cap X'^{(2)} $ is a solution to the instance \sloppypar$(\Xcal',\woutl,\wmod,W,\koutl,\kmod,d)$. Thus, we have to show the opposite implication.

Suppose that $(O,D)$ is a solution to $(\Xcal',\woutl,\wmod,W,\koutl,\kmod,d)$. We assume that the solution is inclusion-minimal. In particular, $D\subseteq (X'\setminus O)^{(2)}$. We claim that $(O,D)$ is a solution to $(\Xcal,\woutl,\wmod,W,\koutl,\kmod,d)$ as well. 

If $A\subseteq O$ then $(O,D)$ is a solution to $(\Xcal,\woutl,\wmod,W,\koutl,\kmod,d)$ because $(Y,\dist)$ is embeddable into $\mathbb{R}^d$. Assume that this is not the case and $\hat{A}=A\setminus O\neq \emptyset$. 
Let $\hat{Y}=Y'\setminus O$. We have that $(\hat{A}\cup\hat{Y},\dist')$ is $d$-embeddable where $\dist'$ is obtained from $\dist$ by modifying the values $\dist(y,z)$ for $\{x,y\}\in D$. 
We fix a realization $P$ of $(\hat{A}\cup\hat{Y},\dist')$ in $\mathbb{R}^d$ and denote by~$\varphi$ the corresponding isometric mapping of the distance space into $\mathbb{R}^d$.  
We prove that this realization can be extended to a realization $P'$ of $(X\setminus O,\dist')$ in $\mathbb{R}^d$.

The realization $P$ induces the realization $Q$ of $(\hat{Y},\dist')$. We show that $Q$ can be extended to a realization $Q'$ of $(Y\setminus O,\dist')$.

Observe that we marked at least $(\koutl+2\kmod)+1$ sets $Y_i\in C_{h^*}$. Then there is a marked set $Y_i\in C_{h^*}$ such that $O\cap Y_i=\emptyset$ and $\{y,z\}\cap Y_i=\emptyset$ for every 
$\{y,z\}\in D$.
Notice that $\varphi$ maps the points of $Y_i$ into $t$-dimensional affine subspace $L$ for $t=|Y_i|-1$ and recall that the embedding of $(Y_i,\dist)$ into $\mathbb{R}^t$ is strong. Denote by $S$ the realization of $(Y_i,\dist)$ induced by $Q$. 

Consider an arbitrary $y\in Y\setminus \hat{Y}$. Because $y\in C_h$ for some $h\leq h^*$, $(Y_i\cup\{y\},\dist)$ is strongly $t$-embeddable
by the construction of the sets $Y_1,\ldots,Y_\ell$. Then $y$ is embedded in $L$ and the embedding is unique by \Cref{prop:matroid}. This allows us to extend $S$ to the realization $S'$ of $(Y_i\cup(Y\setminus Y'),\dist)$ in a unique way. Then the union of $Q$ and $S'$ defines $Q'$. We extend $\varphi$ to the points of $Y\setminus Y'$ using $Q'$ and denote the obtained mapping $\psi$. 

To see that $\psi$ maps $(Y\setminus O,\dist')$ into $\mathbb{R}^d$ isometrically, consider two arbitrary points $y,z\in Y\setminus O$. If $y,z\in \hat{Y}$ then $\|\psi(y)-\psi(z)\|_2=\|\varphi(y)-\varphi(z)\|_2=\dist'(y,z)$.
If $y,z\in Y\setminus Y'$ then $\|\psi(y)-\psi(z)\|_2=\dist(y,z)=\dist'(y,z)$ by the construction of $Q'$ and the definition of the procedure constructing $Y_1,\ldots,Y_\ell$ as $(\bigcup_{h=1}^{h^*},\dist)$ is strongly $t$-embeddable. Suppose that $y\in Y\setminus Y'$ and $z\in \hat{Y}$. Recall that $\dist'(z,v)=\dist(z,v)$ for all $v\in Y_i$. If $\psi(z)=\varphi(z)\in L$, then the embedding of $z$ is unique. Since $(Y_i,\dist)$ is strongly $t$-embeddable in  and $(Y_i\cup\{y,z\},\dist)$ is $t$-embeddable, $\|\psi(y)-\psi(z)\|_2=\dist(y,z)=\dist'(y,z)$. Finally, assume that $\psi(z)=\varphi(z)\notin L$. Then, $(Y_i\cup \{z\},\dist)$ is strongly $(t+1)$-embeddable. Furthermore, $\varphi$ maps $z$ into $(t+1)$-dimensional subspace $L'$ containing $L$ and the mapping of $z$ is unique up to the reflection of $L'$ with respect to $L$ by \Cref{prop:unique}. Since the embedding of $y$ in $L$ is unique and 
$(Y_i\cup\{y,z\},\dist)$ is $(t+1)$-embeddable, $\|\psi(y)-\psi(z)\|_2=\dist(y,z)=\dist'(y,z)$.
This proves that $\psi$ is an isometric embedding of $(Y\setminus O,\dist')$ into $\mathbb{R}^d$.

Now we define the realization $P'$ of $(X\setminus O,\dist')$ in $\mathbb{R}^d$ as the union of $P$ and $Q'$. Recall that $\psi$ which is obtained by extending $\varphi$ to the points of $Y\setminus Y'$ maps the points of $X\setminus O$ to $\mathbb{R}^d$ according to $P'$. We show that   
$\psi$ maps $(X\setminus O,\dist')$ into $\mathbb{R}^d$ isometrically. 

We already proved that for any $y,z\in Y\setminus O$, $\|\psi(y)-\psi(z)\|_2=\dist'(y,z)$. Now we consider $x\in \hat{A}$ and $y\in Y\setminus O$. If $y\in \hat{Y}$ then $\|\psi(y)-\psi(z)\|_2=\|\varphi(x)-\varphi(y)\|_2=\dist'(y,z)$ by the definition of $\varphi$. Assume that $y\in Y\setminus Y'$.

By the definition of the marking procedure, there are $(\koutl+2\kmod)+1$ marked sets $Y_i\in R_x\subseteq C_{h^*}$. 
Then there is $Y_i\in R_x$ such that $O\cap (Y_i\cup\{x\})=\emptyset$ and $\{y,z\}\cap Y_i=\emptyset$ for all  
$\{y,z\}\in D$.  Because $y\in Y_j\in C_h$ for some $h\leq h^*$ where $Y_j$ is unmarked, the pair $(Y_i,Y_j)$ is $x$-compatible.   
Then can use the same arguments as above. We have that $\varphi$ maps the points of $Y_i$ into $t$-dimensional affine subspace $L$ for $t=|Y_i|-1$ and the embedding of $(Y_i,\dist)$ into $\mathbb{R}^t$ is strong. We fix the corresponding realization $S$ of $(Y_i,\dist)$.  
Then $(Y_i\cup\{y\},\dist)$ is strongly $t$-embeddable, $y$ is embedded in $L$, and the embedding is unique by \Cref{prop:matroid}. 
If $(Y_i\cup\{x,y\})$ is strongly $t$-embeddable, $\varphi$ maps $x$ into $L$ and this is a unique mapping by \Cref{prop:matroid}. Because 
$(Y_i,Y_j)$ is $x$-compatible, we have that  
$\|\psi(x)-\psi(y)\|_2=\dist(x,y)=\dist'(x,y)$. 
Suppose that $(Y_i\cup\{x\},\dist)$ is strongly $(t+1)$-embeddable. Then, $\varphi$ maps $x$ into $(t+1)$-dimensional subspace $L'$ containing $L$ and the mapping of $x$ is unique up to the reflection of $L'$ with respect to $L$ by \Cref{prop:unique}. Since the embedding of $y$ in $L$ is unique and $(Y_i,Y_j)$ is $x$-compatible, $\|\psi(x)-\psi(y)\|_2=\dist(x,y)=\dist'(x,y)$.
 This proves the claim.
\end{proof}
}

We have that the original instance $(\Xcal=(X,\dist),\woutl,\wmod,W,\koutl,\kmod,d)$ and the obtained instance $(\Xcal'=(X',\dist),\woutl,\wmod,W,\koutl,\kmod,d)$ of \WEEO are equivalent. We prove that $|X'|=\Oh(k^2d^2)$. 

\begin{claim}\deferProof{}\label{cl:size}
$|X'|\leq 9(\koutl+\kmod)^2(d+3)^2$.
\end{claim}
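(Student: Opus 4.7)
The plan is to bound $|X'|=|A|+|Y'|$ directly from the marking procedure, using that each $Y_i$ has at most $d+1$ points. The approximation-based preprocessing guarantees $|A|\leq (d+3)k$, since otherwise a no-answer would already have been returned. It thus remains to bound $|Y'|=\sum_{i\in I}|Y_i|$, which in turn reduces to counting $|I|$, the number of marked sets, and multiplying by the per-set bound $d+1$.

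I would split $|I|$ according to the three bullets of the marking procedure. The first bullet marks every $Y_i \in C_h$ with $h>h^*$; by the maximality of $h^*$, every such $C_h$ has at most $2(\koutl+2\kmod)\leq 4k$ sets, and there are at most $d$ such values of $h$, contributing at most $4dk$ sets. The second bullet marks at most $(\koutl+2\kmod)+1\leq 2k+1$ sets from $R_x$ for each $x\in A$, giving at most $|A|(2k+1)\leq 3(d+3)k^2$ sets in total. The third bullet marks, for each $x\in A$, those sets in $\bigcup_{h\leq h^*}C_h$ which are $x$-incompatible with some $Y_i \in R_x$; after Reduction Rule~\ref{red:incompatible} has been exhaustively applied, this collection has size at most $\koutl+\kmod = k$ for every single $x$, so at most $|A|\cdot k\leq (d+3)k^2$ further sets are marked.

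Summing the three contributions and using $4dk\leq 4(d+3)k^2$ for $k\geq 1$, we obtain $|I|\leq 4(d+3)k^2+3(d+3)k^2+(d+3)k^2 = 8(d+3)k^2$, whence $|Y'|\leq (d+1)\cdot 8(d+3)k^2\leq 8(d+3)^2 k^2$. Combining this with $|A|\leq (d+3)k\leq (d+3)^2 k^2$ yields the claimed inequality $|X'|\leq 9(\koutl+\kmod)^2(d+3)^2$. The one subtlety, which I expect to be the main bookkeeping point, is the third bullet: Reduction Rule~\ref{red:incompatible} must be read as capping the number of bad sets totalled over all $h\leq h^*$ (rather than per $h$), because that is precisely what keeps the step-three contribution linear in $d$ rather than quadratic and prevents the final bound from acquiring an extra factor of $d+1$.
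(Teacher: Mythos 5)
Your proof is correct and follows the same approach as the paper's: split the marked sets by the three bullets of the marking procedure, bound each contribution using the bounds $|A|\leq (d+3)k$, $|C_h|\leq 2(\koutl+2\kmod)$ for $h>h^*$, and the non-applicability of the reduction rules, then multiply by $d+1$ points per set and add $|A|$. The arithmetic simplifications differ slightly (you use $4dk+4(d+3)k^2 \le 8(d+3)k^2$ where the paper sums three terms directly to $8(\koutl+\kmod)^2(d+3)$), but the decomposition and underlying argument are identical; your "subtlety" about Reduction Rule~\ref{red:incompatible} being a total-over-$h$ cap is indeed how the paper reads it.
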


\longversion{
\begin{proof}[Proof of \Cref{cl:size}]
Observe that there are at most $d$ parts $C_h$ with $h>h^*$. By the choice of $h^*$, every such part contains at most $2(\koutl+2\kmod)$ sets. Thus, we have at most~$2(\koutl+2\kmod)d$ sets in total in  these parts. For each $x\in A$, we marked $\koutl+2\kmod+1$ sets in $R_x$. Then we marked at most 
$(\koutl+2\kmod+1)|A|\leq (\koutl+2\kmod+1)(\koutl+\kmod)(d+3)$ such sets. Finally, for each $x\in A$ and each $h\leq h^*$, we marked sets $Y_j$ such that $(X_i,Y_j)$ is $x$-incompatible.  Because \Cref{red:incompatible} is not applicable, we have that for each $x\in A$, we marked at most $\koutl+\kmod$ such sets, and $(\koutl+\kmod)^2(d+3)$ sets in total. Summarizing, we obtain that the total number of marked sets is at most
\begin{equation*}
2(\koutl+2\kmod)d+(\koutl+2\kmod+1)(\koutl+\kmod)(d+3)+(\koutl+\kmod)^2(d+3)\leq 8(\koutl+\kmod)^2(d+3).
\end{equation*}
Each set $Y_i$ contains at most $d+1$ points. Taking into account that $|A|\leq (\koutl+\kmod)(d+3)$,
we have that
\begin{equation*}
|X'|\leq 8(\koutl+\kmod)^2(d+3)(d+1)+(\koutl+\kmod)(d+3)\leq 9(\koutl+\kmod)^2(d+3)^2.    
\end{equation*}
This proves the claim.
\end{proof}
}

\begin{claim}\deferProof{}
 The overall running time of our algorithm is polynomial.
\end{claim}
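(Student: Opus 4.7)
The plan is to walk through the steps of the compression algorithm and verify that each is executable in time polynomial in $n$, $k$, and $d$, and that the number of times each step is performed is also polynomially bounded.

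First, the initial invocation of \Cref{lem:greedy} on the unweighted input runs in polynomial time and produces the set $A$; the subsequent cardinality check $|A|\leq (d+3)k$ and \Cref{red:small} are trivially polynomial. Next I would bound the cost of constructing the greedy partition $Y_1,\ldots,Y_\ell$. Each extension step of a set $Y_i$ requires testing whether $Y_i\cup\{y\}$ is independent for some candidate $y\in Z\setminus Y_i$; by \Cref{prop:realization}, deciding strong $|Y_i|$-embeddability of $(Y_i\cup\{y\},\dist)$ is polynomial (it suffices to run the realization algorithm on $Y_i\cup\{y\}$ and read off the actual embedding dimension, or equivalently compute the relevant Cayley--Menger determinant). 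Since $|Y|\leq n$ and each $Y_i$ has size at most $d+1$, the total work for building the partition is bounded by $\Oh(n)$ embeddability tests, hence polynomial.

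Second, for every $x\in A$ I would precompute the lists of $x$-compatible sets and the $x$-equivalence classes in $C_{h^*}$. Determining whether a single $Y_i$ is $x$-compatible, or whether a pair $(Y_i,Y_j)$ is $x$-compatible, reduces to one call to \Cref{prop:realization} on a distance space of size at most $2(d+1)+1$, hence polynomial. Since $|A|\leq (d+3)k$ and $\ell\leq n$, this pre-computation takes polynomial time overall. By \Cref{cl:equiv} the $x$-equivalence relation really is an equivalence relation, so its classes can be computed with the standard union--find bookkeeping. Checking the hypothesis of \Cref{red:compatible} or of \Cref{red:incompatible} for a given $x$ then reduces to counting entries in these lists, which is polynomial. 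Each successful application of \Cref{red:compatible} or \Cref{red:incompatible} removes a point from $A$ and decrements $\koutl$, so the total number of applications is at most the initial size of $A$, i.e.\ $\Oh(kd)$. After each application the auxiliary information is updated in polynomial time (in fact only the data for that $x$ needs to be discarded).

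Third, the marking procedure iterates over $x\in A$ and over the parts $C_h$, selecting at most a polynomial number of sets per $x$; all the relevant compatibility information is already available, so this step is polynomial. Forming $X'=A\cup Y'$ is then immediate, and the concluding weight-compression step uses the Frank--Tardos algorithm from \Cref{prop:FT}, which runs in strongly polynomial time on the weight vectors $\woutl$ and $\wmod$ restricted to $X'$. Summing all the contributions, the overall running time is polynomial in the input size, which proves the claim. The main subtlety, and the only place where one must be careful, is to notice that the expensive primitive---testing $d$-embeddability of small subsets---can be performed via \Cref{prop:realization} in polynomial time, and that each reduction rule shrinks the parameters, so the overall number of invocations remains polynomially bounded.
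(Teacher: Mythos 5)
Your proof is correct and follows essentially the same route as the paper's: invoke \Cref{lem:greedy} for the initial approximation, use \Cref{prop:realization} as the polynomial-time primitive for all embeddability and independence checks (hence for building the partition $Y_1,\ldots,Y_\ell$ and the $x$-compatibility/equivalence data), and observe that the reduction rules, marking step, and the Frank--Tardos weight compression are all polynomial. You simply spell out the bookkeeping in more detail than the paper does; the only minor imprecision is the ``$\Oh(n)$ embeddability tests'' bound for building the partition (a more careful count gives roughly $\Oh(n^2)$ tests, since each failed extension attempt also costs a test), but this does not affect the polynomiality conclusion.
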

\longversion{\begin{proof}
To evaluate the running time, recall that the algorithm from \Cref{lem:greedy} runs in  polynomial time. Further, the partition of $Y$ into the sets $Y_1,\ldots,Y_\ell$ can be done in polynomial time by making use of \Cref{prop:realization}. Then the $x$-equivalence classes for $C_{h^*}$ and all $x\in A$ can be constructed in polynomial time by \Cref{prop:realization}.  
This implies that \Cref{red:compatible}, \Cref{red:incompatible}, and the marking procedure can be done in polynomial time. Thus, the overall running time is polynomial. This concludes the first part of our algorithm where we reduce the number of points.
\end{proof}}

To reduce weights, we use \Cref{prop:FT}. We encode the weight functions and $W$ as an  
$m$-element vector where $m=n+\binom{n}{2}+1$ and $n$ is the number of points in $X'$. Then, we consider vectors $b\in\mathbb{Z}^m$ whose elements are from 
$\{-1,0,1\}$. Then $\|b\|_1\leq m$ and we set $N=m+1$. The algorithm from \Cref{prop:FT} gives weight functions $\woutl'\colon X'\rightarrow\mathbb{Z}_{\geq 0}$, $\wmod'\colon X'^{(2)}\rightarrow \mathbb{Z}_{\geq 0}$, and an integer $W'\geq 0$
 such that for every set of points $S\subseteq X$ and for every set of pairs of points $R\subseteq X'^{(2)}$, $\woutl(S)+\wmod(R)\leq W$ if and only if $\woutl'(S)+\wmod'(R)\leq W'$.  
Since~$n\leq 9(\koutl+\kmod)^2(d+3)^2$ by \Cref{cl:size}, we have that 
$W=2^{\Oh((kd)^{12})}$, $\woutl'(x)=2^{\Oh((kd)^{12})}$ for each~$x\in X'$, 
and $\wmod'(x,y)=2^{\Oh((kd)^{12})}$ for all $\{x,y\}\in X'^{(2)}$. 

This completes the proof. \end{proof} 

\section{FPT algorithm for \WEEO parameterized by solution size and dimension}\label{sec:fullFPT}

In this section, we give our FPT algorithm for \WEEO.

We prepare for it by recalling the relevant definitions and facts from \cite{BasuPR06}. In what follows, let $R$ be a real closed field and $\ell\in {\mathbb N}$. Let $\Pcal_\ell\subset R[X_1,\dots,X_t]$ be a finite set of $s$ polynomials each of degree at most $\ell$. 

\begin{definition}[\cite{BasuPR06}]\label{def:polynomialFormulas}
    A {\em $\Pcal_\ell$-atom} is one of $P=0$,$P\neq 0$, $P\ge 0$, $P\le 0$, where $P$ is a polynomial in $\Pcal_\ell$ and a {\em quantifier-free $\Pcal_\ell$-formula} is a formula constructed only from $\Pcal_\ell$-atoms together with the logical connectives $\wedge$, $\vee$ and $\neg$. 
\end{definition}

\begin{proposition}[Theorem 13.13, \cite{BasuPR06}]\label{prop:basuPolynomialModelFindingTheorem}
Let 
    $(\exists X_1)\dots  (\exists X_t) F(X_1,\dots,X_t),$
%
    be a sentence, where $F(X_1,\dots,X_t)$ is a quantifier free $\Pcal_{\ell}$-formula. There exists an algorithm to decide the truth of the sentence with complexity
    \footnote{The measure of complexity here is the number of arithmetic operations in the domain $D$.} 
    $s^{t+1}\cdot \ell^{\Oh(t)}$ in $D$ where $D$ is the ring generated by the coefficients of the polynomials in $\Pcal_{\ell}$. 
\end{proposition}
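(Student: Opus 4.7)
Since this is Theorem 13.13 of the textbook \cite{BasuPR06}, the full proof is intricate; I will only outline the high-level strategy. The plan is to reduce deciding the existential sentence to producing at least one sample point in each \emph{realizable sign condition} of $\Pcal_\ell$, i.e., each nonempty set of the form $\{x\in R^t : P(x)\,\sigma(P)\,0 \text{ for all } P\in \Pcal_\ell\}$ for a sign assignment $\sigma\colon \Pcal_\ell \to \{<,=,>\}$. Once such a finite collection of sample points is in hand, the quantifier-free formula $F$ is evaluated at each sample, and the sentence is true iff $F$ holds at some sample.

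The argument has two inputs. The combinatorial input is a bound of $s^t\cdot \ell^{\Oh(t)}$ on the number of realizable sign conditions, obtained from Oleinik--Petrovsky / Thom--Milnor sum-of-Betti-number bounds applied after a suitable infinitesimal perturbation. The algorithmic input is the \emph{critical points method}: after working in an extension $R\langle \varepsilon_1,\ldots,\varepsilon_r\rangle$ of algebraic Puiseux series and perturbing the polynomials by infinitesimals, each realizable cell is represented by a smooth, bounded, zero-dimensional variety cut out by at most $t$ polynomials of degree $\Oh(\ell)$. Sample points are produced as critical points of a generic linear projection, computed via (sub)resultant chains or u-resultants at a cost of $\ell^{\Oh(t)}$ per system; multiplying the number of cells by the per-cell sampling cost yields the claimed $s^{t+1}\cdot \ell^{\Oh(t)}$ bound.

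The main obstacle in executing this plan is degeneracy: the real set defined by $\Pcal_\ell$ need not be smooth, bounded, or equidimensional, and a naive projection may have no critical points at all or an unbounded critical locus. The fix in \cite{BasuPR06} is a layered infinitesimal deformation---bounding the domain by a ball $X_1^2+\dots+X_t^2 \leq 1/\varepsilon$, perturbing each polynomial by an independent infinitesimal to enforce transversality, and then invoking the Tarski--Seidenberg transfer principle to specialize $\varepsilon \to 0$ while preserving every sign condition. This deformation machinery, rather than the combinatorial accounting, is the technically hardest part of the proof; the final complexity bound is obtained by noting that only $\Oh(t)$ auxiliary infinitesimals are introduced and that all operations happen in the ring $D$ extended by these parameters.
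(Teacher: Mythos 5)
The paper does not prove this proposition; it is quoted verbatim as a black-box citation to Theorem~13.13 of \cite{BasuPR06} and used as a subroutine in \Cref{thmmodifFPT}. There is therefore no in-paper proof to compare your outline against. That said, your sketch is a faithful high-level account of how the cited theorem is established in the textbook: reduce satisfiability of the existential sentence to sampling each realizable sign condition of $\Pcal_\ell$, bound the number of such conditions by $s^t \cdot \ell^{\Oh(t)}$ via Oleinik--Petrovsky/Thom--Milnor, produce sample points by the critical points method over a ring of Puiseux series with a constant number of auxiliary infinitesimals (one to bound the domain, others to enforce transversality), and transfer back via Tarski--Seidenberg. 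Multiplying the cell count by the $\ell^{\Oh(t)}$ per-system cost of the algebraic sampling routine (resultant/subresultant computations) gives the stated $s^{t+1}\cdot\ell^{\Oh(t)}$ bound. One small caveat worth flagging: the $s^{t+1}$ factor in the final bound arises from the bookkeeping of which subsets of $\Pcal_\ell$ one perturbs and samples, not purely from the sign-condition count (which is $s^t\cdot\ell^{\Oh(t)}$); this distinction is blurred in your write-up but does not affect the asymptotics. For the purposes of this paper no more detail is needed, since the result is invoked as an opaque primitive with $s$, $t$, $\ell$ all bounded by functions of $k$ and $d$.
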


\longversion{We remind the reader that in the statement of \Cref{prop:basuPolynomialModelFindingTheorem}, $s$ is the number of polynomials in $\Pcal_\ell$. }


\begin{definition}
Consider\longversion{ $r+1$} points $x_0,x_1, \dots, x_r$ of distance space $(X, \dist)$ and a set $Z$ of pairs in $X$. For every $i<j\in \{0,\dots, r\}$, if $\{x_i,x_j\}\notin Z$, then $\hat \dist_{i,j}=\dist(x_i,x_j)$, otherwise $\hat \dist_{i,j}=z_{i,j}$ where $z_{i,j}$ is an indeterminate. 
Then the {\em $Z$-Augmented Cayley–Menger determinant} is  obtained from the Cayley-Menger determinant by replacing each $\dist_{i,j}$ with $\hat \dist_{i,j}$. That is, 
\longversion{
\[
CM_Z ( x_0,x_1, \dots, x_r) =  \det\left( \begin{matrix}
0 & 1 & 1 & 1 & \dots & 1 \\
1 & 0 & \hat \dist_{0,1}^2 & \hat \dist_{0,2}^2 & \dots & \hat\dist_{0,r}^2 \\
1 & \hat\dist_{0,1}^2 & 0 & \hat\dist_{1,2}^2 & \dots & \hat\dist_{1,r}^2 \\
1 & \hat\dist_{0,2}^2 & \hat\dist_{1,2}^2 & 0 & \dots & \hat\dist_{2,r}^2 \\
\vdots & \vdots & \vdots & \vdots & \ddots & \vdots \\
1 & \hat\dist_{0,r}^2 & \hat\dist_{1,r}^2 & \hat\dist_{2,r}^2 & \dots & 0 \\
\end{matrix} \right)
\]}

\end{definition}

\begin{observation}\label{obs:structureOfAugmentedCM}
Consider $r+1$ points $x_0,x_1, \dots, x_r$ of distance space $(X, \dist)$ and a set $Z$ of pairs in $X$. 
The $Z$-Augmented Cayley-Menger determinant is a multi-variate polynomial with real coefficients, over the set  $\{z_{i,j}\mid i<j,~\{x_i,x_j\}\in Z\}$ of indeterminates and where each monomial has degree at most $2\cdot (r+1)$. 
\end{observation}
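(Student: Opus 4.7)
The plan is a direct application of the Leibniz formula for the determinant. The $Z$-Augmented Cayley-Menger matrix $M$ is an $(r+2) \times (r+2)$ matrix in which every entry is either a constant (namely $0$, $1$, or $\dist(x_i,x_j)^2 \in \mathbb{R}$ when $\{x_i,x_j\} \notin Z$) or the square of an indeterminate (namely $z_{i,j}^2$ when $\{x_i,x_j\} \in Z$). In particular, each entry is itself a polynomial of degree at most $2$ in the indeterminates $\{z_{i,j} \mid i<j,~\{x_i,x_j\} \in Z\}$ with real coefficients.

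Leibniz's formula expresses the determinant as $\det M = \sum_{\sigma} \mathrm{sgn}(\sigma) \prod_{i=0}^{r+1} M_{i,\sigma(i)}$, a sum of signed products in which each term contains exactly $r+2$ matrix entries. Expanding these products and collecting like monomials in the $z_{i,j}$'s yields a polynomial over the indeterminate set as stated. The coefficients are finite sums of products of $\pm 1$ (from the signs of permutations) and constant matrix entries, all of which lie in $\mathbb{R}$, so the coefficients of the resulting polynomial are real.

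For the degree bound I would exploit the structure of the first row and column. The first row of $M$ is $(0,1,1,\ldots,1)$ and the first column is its transpose, so any permutation $\sigma$ contributing a nonzero term must satisfy $\sigma(0) \neq 0$ and $\sigma^{-1}(0) \neq 0$; the corresponding two factors $M_{0,\sigma(0)}$ and $M_{\sigma^{-1}(0),0}$ then both equal $1$ and contribute no degree. Only the remaining $r$ factors can possibly be indeterminate squares, each of degree $2$, so every monomial has degree at most $2r \leq 2(r+1)$, which is the claimed bound.

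The statement is essentially bookkeeping, so I do not anticipate any genuine obstacle; the only care needed is to correctly handle the first row and column in the degree estimate, and to note that the constant entries $\dist(x_i,x_j)^2$ (for $\{x_i,x_j\} \notin Z$) contribute to the real coefficients but not to the degree in the indeterminates.
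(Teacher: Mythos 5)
The paper gives no explicit proof for this observation; it is intended to follow directly from inspection of the Leibniz expansion of the determinant, which is precisely what you do. Your argument is correct, and your finer accounting of both the all-constant first row and first column (so that two of the $r+2$ factors in every nonvanishing permutation term equal $1$) in fact yields the slightly sharper bound $2r$, which trivially implies the stated $2(r+1)$.
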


 \begin{theorem}\label{thmmodifFPT}\label{thm:FPTWeeo}\WEEO  is solvable in~$(d(\koutl + \kmod))^{\Oh(d+ \kmod+\koutl )} + n^{\Oh(1)}$ time.
 \end{theorem}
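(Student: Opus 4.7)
The plan is to first apply the compression of \Cref{thm:kernel} to reduce the input in polynomial time to an equivalent instance on $n'=\Oh((kd)^2)$ points (where $k=\koutl+\kmod$) with weights of polynomial bit-length in $kd$. This absorbs the $n^{\Oh(1)}$ term, so it remains to solve the compressed instance in time $(kd)^{\Oh(d+k)}$. On the compressed instance I brute-force enumerate: (i) the outlier set $O\subseteq X'$ with $|O|\leq \koutl$, contributing $(n')^{\koutl}=(kd)^{\Oh(\koutl)}$ choices; (ii) the set $D\subseteq (X'\setminus O)^{(2)}$ of at most $\kmod$ pairs whose distance is to be modified, contributing $(n')^{2\kmod}=(kd)^{\Oh(\kmod)}$ choices; (iii) the target strong dimension $d'\in\{0,\ldots,d\}$; and (iv) a candidate independent anchor tuple $(x_0,\ldots,x_{d'})$ from $X'\setminus O$, contributing $(n')^{d+1}=(kd)^{\Oh(d)}$ choices. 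Guesses violating the cheap arithmetic check $\woutl(O)+\wmod(D)\leq W$ are discarded immediately.

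For every surviving guess $(O,D,d',x_0,\ldots,x_{d'})$, the task reduces to deciding whether there exist real values for the $|D|\leq \kmod$ indeterminates $\{z_{i,j}:\{x_i,x_j\}\in D\}$ such that $(X'\setminus O,\dist')$ is strongly $d'$-embeddable with $\{x_0,\ldots,x_{d'}\}$ as anchors. By \Cref{thm:Blumental}, this is precisely the conjunction of the $d'$ inequalities $(-1)^{j+1}CM_D(x_0,\ldots,x_j)>0$ for $1\leq j\leq d'$, together with the equalities $CM_D(x_0,\ldots,x_{d'},x)=CM_D(x_0,\ldots,x_{d'},y)=CM_D(x_0,\ldots,x_{d'},x,y)=0$ for every pair of distinct $x,y\in X'\setminus(O\cup\{x_0,\ldots,x_{d'}\})$, where $CM_D$ denotes the $D$-Augmented Cayley-Menger determinant. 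By \Cref{obs:structureOfAugmentedCM}, each such polynomial has degree $\ell=\Oh(d)$, and the total number of polynomial atoms is $s=\Oh(n'^2)=\Oh((kd)^4)$.

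Feeding the resulting existentially quantified sentence over $t=|D|\leq \kmod$ variables into the algorithm of \Cref{prop:basuPolynomialModelFindingTheorem} costs $s^{t+1}\cdot\ell^{\Oh(t)}=(kd)^{\Oh(\kmod)}$ per guess. Multiplying by the total number of guesses gives the claimed bound of $(kd)^{\Oh(d+\koutl+\kmod)}$, plus the polynomial cost of compression.

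The main delicate point is arguing that this enumeration captures every genuine solution. For a yes-instance, after removing outliers and modifying distances, the resulting space is $d$-embeddable and hence, by \Cref{prop:matroid}, strongly $d'$-embeddable for some $d'\leq d$ and admits an independent anchor set of size $d'+1$ that serves as a metric basis; so some guess of $(d',x_0,\ldots,x_{d'})$ matches the structure of the solution, and the algebraic system then captures exactly the realizability constraints the witness must satisfy. A secondary point is to verify that each $CM_D$ has degree $\Oh(d)$ in the $z_{i,j}$'s \emph{independently} of $n'$---this is what keeps the exponent in Basu's bound at $\kmod+1$ rather than blowing up with the size of the compressed instance.
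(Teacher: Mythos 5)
Your proposal is correct and follows essentially the same route as the paper: compress via \Cref{thm:kernel}, brute-force the outlier set, modified pair set, target dimension, and anchor tuple, then express the remaining realizability question over the $\kmod$ indeterminate distances as a system of (augmented) Cayley--Menger conditions via \Cref{thm:Blumental} and \Cref{obs:structureOfAugmentedCM}, and decide it with the existential-theory-of-the-reals algorithm of \Cref{prop:basuPolynomialModelFindingTheorem}. The only cosmetic difference is that the paper also adds the atoms $z_{i,j}\geq 0$ to the formula, which you omit; since the indeterminates enter only through $z_{i,j}^2$, this does not affect correctness.
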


\begin{proof}
Our FPT algorithm for {\WEEO} works as follows. We begin by running  the compression algorithm (\Cref{thm:kernel}). Following this, based on insights from  \Cref{thm:Blumental}, we reduce the task of solving the resulting instance to the task of testing a bounded number (in terms of $k,d$) of sentences with purely existential quantifications as described in \Cref{prop:basuPolynomialModelFindingTheorem}, where $s,\ell,t$ are all bounded by functions of $k$ and $d$. 

We now formalize this idea. 
Let $(\Xcal=(X,\dist),\woutl,\wmod,W,\koutl,\kmod,d)$ be the given instance of \WEEO. We run the algorithm of \Cref{thm:kernel} on this instance. Recall that this algorithm either solves the instance or produces an equivalent instance $(\Xcal'=(X',\dist),\woutl',\wmod',W',\koutl',\kmod',d)$ such that $X'\subseteq X$, and for $k=\koutl+\kmod$ and a bound $\tau(k,d)=2^{\Oh((kd)^{12})}$, we have that  $|X'|=\Oh((kd)^2)$. 
\longversion{Moreover, $W'\leq \tau(k,d)$, $\woutl'(x)\leq \tau(k,d)$ for $x\in X'$, and $\wmod'(x,y)\leq \tau(k,d)$ for all $\{x,y\}\in X'^{(2)}$.}
Note that since $X'$ has size bounded by $\Oh((kd)^2)$, we may assume that $\koutl'$ is bounded by $\bigoh((kd)^2)$ and $\kmod'$ is bounded by $\bigoh((kd)^4)$.

Since we now have a bounded number of points in $X'$, it is straightforward to guess a set $Z_O\subseteq X'$ of at most $\koutl'$ outliers and a subset $Z_M$ of pairs from $X''=X'\setminus Z_O$ of size at most $\kmod'$ such that (i) $\woutl'(Z_O)+\wmod'(Z_M)\leq W'$ and (ii) if there is a solution to the given instance, then there is a modification of the pairs in $Z_M$ such that $(X'',\rho'')$ is embeddable into $\mathbbr{d}$, where $\rho''$ differs from $\rho$ (restricted to $X''$) only among the pairs in $Z_M$. We next guess the embedding dimension of $(X'',\rho'')$, denoted by $r$. Note that $r\leq d$. 

 By \Cref{thm:Blumental}, we know that $(X'',\rho'')$ is $r$-embeddable if and only if there is a subset $\hat X=\{x_0,\dots,x_r\}$ of $X''$ such that  
 (i) $(-1)^{j+1}CM ( x_0,x_1, \dots, x_j)>0$ for $1\leq j\leq r$, and (ii)
 for any $x, y \in X''\setminus \hat X$,
$CM ( x_0,x_1, \dots, x_r,x) =CM ( x_0,x_1, \dots, x_r,y)=CM ( x_0,x_1, \dots, x_r,x,y)=0$, where the matrices are derived from the distance matrix $D(\rho'')$.  

We next guess $\hat X$. It remains to verify that there is a modification of the distances between each pair in $Z$ such that the resulting space $(X'',\rho'')$ is $r$-embeddable, i.e., satisfies the above two properties. To test this, we construct the formula $F$ obtained by taking the conjunction of the atoms below: 

\begin{enumerate}
	\item  $(-1)^{j+1}CM_{Z_{M}}( x_0,x_1, \dots, x_j)>0$, where $1\leq j\leq r$. 
	\item $CM_{Z_{M}}(x_0,x_1, \dots, x_r,x)=0$ for every $x\in X''\setminus \hat X$. 
	\item $CM_{Z_{M}}(x_0,x_1, \dots, x_r,x,y)=0$ for every $x,y\in X''\setminus\hat X$.
	\item $z_{i,j}\geq 0$ for each indeterminate defined by $Z_M$. 
\end{enumerate}

From \Cref{obs:structureOfAugmentedCM}, it follows that $F$ is a quantifier-free $\Pcal_{2(d+3)}$-formula, where $\Pcal_{2(d+3)}$ is a finite set of  polynomials, each of degree at most $2(d+3)$. 
From \Cref{thm:Blumental}, it follows that the sentence $\phi$  obtained by prepending $F$ with existential quantifications for every indeterminate $\{z_{i,j}\mid i<j,~\{x_i,x_j\}\in Z\}$
 is true if and only if there is a modification to the specified distances between pairs in $Z_M$ that results in $\rho''$ as required.
 Hence, we invoke the algorithm of \Cref{prop:basuPolynomialModelFindingTheorem} on $\phi$ and return ``yes'' if and only if this invocation returns ``yes''. 

We have already argued the correctness, so let us now analyze the running time of this algorithm.  Recall that we assume the Real-RAM model.

We have $\binom{(kd)^2}{\koutl}$ possibilities for $Z_O$ and at most $\binom{(kd)^4}{\kmod}$ choices for $Z_M$. We have at most $\binom{(kd)^2}{d}$ choices for $\hat X$, so ultimately, the number of tuples $(Z_O,Z_M,\hat X)$  we go over (and the number of invocations of \Cref{prop:basuPolynomialModelFindingTheorem})  is at most: \[\binom{(kd)^2}{\koutl}\cdot \binom{(kd)^4}{\kmod}\cdot \binom{(kd)^2}{d}=(kd)^{\bigoh(\koutl+\kmod+d)} =(kd)^{\bigoh(k+d)}.\]
The complexity of computing the formula $F$ defined by each choice of $Z_M$ and $\hat X$ is dominated by the time required to compute the determinants of $\bigoh((kd)^4)$ matrices, each of dimension at most $d+3$, and this is bounded by $(kd)^{\bigoh(d)}$. 
Each execution of the algorithm of \Cref{prop:basuPolynomialModelFindingTheorem} is on a sentence built from atoms contained in a set of most $\bigoh((kd)^4)$ polynomials, each of degree at most $2(d+3)$ and over the set $\{z_{i,j}\mid i<j,~\{x_i,x_j\}\in Z\}$ comprising at most $\kmod$ indeterminates (see \Cref{obs:structureOfAugmentedCM}). Plugging these bounds into \Cref{prop:basuPolynomialModelFindingTheorem}, we conclude that the number of field operations required for each execution is bounded by $(kd)^{\bigoh(\kmod)}\cdot d^{\bigoh(\kmod)}=(kd)^{\bigoh(\kmod)}$. 
  Since we take polynomial time to run the algorithm of \Cref{thm:kernel} at the beginning, the overall running time of our algorithm for {\WEEO} is as claimed: 
  $(d(\koutl + \kmod))^{\Oh(d+ \kmod+\koutl )} + n^{\Oh(1)}$
 \end{proof}

\section{Improving the parameter dependence for \EEO}\label{sec:outlierFPT}
In this section, we focus on \EEO{}, which is the special case of \WEEO{} where the goal is only to compute outliers.

\begin{theorem}\deferProof{}\label{thm:FPToutliers}
    \EEO can be solved in time $n^{O(1)}\cdot min\{(d+3)^k,2^{d+k}\}$. 
\end{theorem}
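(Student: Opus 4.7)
The plan is to prove the two bounds via two separate algorithms; the overall procedure runs both in parallel and returns whichever answer is produced first.

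For the bound $n^{\Oh(1)}\cdot(d+3)^k$, we use a direct branching based on \Cref{lem:blackBoxObstruction}. First check whether $(X,\dist)$ is $d$-embeddable via \Cref{prop:realization}; if so, accept. Otherwise, if $k=0$, reject; else, invoke \Cref{lem:blackBoxObstruction} to obtain in polynomial time a set $\hat A\subseteq X$ of size at most $d+3$ that intersects every inclusionwise-minimal $d$-outlier set. Any valid solution must contain some element of $\hat A$, so we branch: for each $x\in\hat A$, recurse on $(X\setminus\{x\},k-1)$ and take the disjunction. The search tree has depth at most $k$ and branching factor at most $d+3$, which yields the claimed bound.

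For the bound $n^{\Oh(1)}\cdot 2^{d+k}$, we use a more refined branching that maintains two disjoint sets $I,O\subseteq X$ with $I$ independent and $|I|\leq d+1$. Initially $I=O=\emptyset$; at each recursive call we reject if $|O|>k$ and otherwise distinguish two cases.
\begin{itemize}
\item[(i)] If $|I|<d+1$ and some $x\in X\setminus(I\cup O)$ makes $I\cup\{x\}$ independent, branch on $x$: one branch sets $I:=I\cup\{x\}$, the other sets $O:=O\cup\{x\}$, and we recurse on each.
\item[(ii)] Otherwise, if there exist $y,z\in X\setminus(I\cup O)$ such that $(I\cup\{y,z\},\dist)$ is not $(|I|-1)$-embeddable, then branch on which of $y,z$ to add to $O$ and recurse. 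If no such pair exists, then by \Cref{equivalence} the space $(X\setminus O,\dist)$ is strongly $(|I|-1)$-embeddable, hence $d$-embeddable, and we accept.
\end{itemize}
Every recursive step strictly increases $|I|+|O|$ by $1$, and since $|I|\leq d+1$ and $|O|\leq k+1$, any root-to-leaf path has length at most $d+k+2$. With branching factor~$2$ and polynomial work per node (using \Cref{prop:realization} to test embeddability), the running time is $2^{d+k}\cdot n^{\Oh(1)}$.

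The main obstacle is proving correctness of the second algorithm, i.e., that whenever a $d$-outlier set $O^*$ with $|O^*|\leq k$ exists, the search tree contains an accepting leaf. The plan is to exhibit a guided accepting path using an optimal pair $(O^*,I^*)$ where $I^*\subseteq X\setminus O^*$ is a metric basis given by \Cref{prop:matroid}. Whenever the algorithm branches on $x$ in case (i), we take the branch adding $x$ to $I$ if $x\notin O^*$ and the branch adding $x$ to $O$ if $x\in O^*$; in case (ii), we add to $O$ whichever of $\{y,z\}$ lies in $O^*$. These choices preserve the invariants $I\subseteq X\setminus O^*$ (with $I$ still independent) and $O\subseteq O^*$ (so $|O|\leq k$). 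When the guided path enters case (ii), $I$ is a maximal independent set in $X\setminus O^*$, hence a basis of $(X\setminus O^*,\dist)$ by \Cref{prop:matroid}, and therefore $(X\setminus O^*,\dist)$ is strongly $(|I|-1)$-embeddable. In particular, any pair $y,z\in X\setminus O^*$ satisfies that $(I\cup\{y,z\},\dist)$ is $(|I|-1)$-embeddable, so any inconsistent pair flagged in case (ii) must contain an element of $O^*$, which justifies the guided choice. Iterating, the guided path eventually reaches the accepting branch of case (ii).
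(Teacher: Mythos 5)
Your $(d+3)^k$ branching is the same as the paper's first algorithm ({\sf Alg-1}), and your overall two-algorithm structure matches the paper's. The $2^{d+k}$ algorithm, however, has a soundness gap in case (ii). As written, case (ii) tests only \emph{distinct} pairs $y,z$, and so it can fail to detect a single remaining point that cannot coexist with $I$. Concretely, take $X=\{a,b,c\}$ with $\dist(a,b)=\dist(b,c)=1$ and $\dist(a,c)=10$, with $d=1$ and $k=0$. This is a no-instance (the triangle inequality is violated, so $X$ embeds into no $\mathbb{R}^r$). Yet your algorithm can reach $I=\{a,b\}$, $O=\emptyset$: case (i) is blocked because $|I|=d+1$, and case (ii) finds no distinct pair in $X\setminus(I\cup O)=\{c\}$, so the algorithm accepts. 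The issue is also reflected in your appeal to \Cref{equivalence}: its hypothesis ``for every $x,y\in X\setminus Z$'' must be read as including $x=y$ (otherwise the lemma is false on exactly this example), so checking only distinct pairs does not license the accept.

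The paper's {\sf Alg-2} avoids this via two ingredients you omitted: a preliminary test of whether $(X,\dist)$ itself is $d$-embeddable (its Line~2, which doubles as the accept condition), and a dedicated non-branching step (its Line~3) that forces any $z$ with $(Z\cup\{z\},\dist)$ not $|Z|$-embeddable into the outlier set. You can repair your version by inserting such a forced-deletion rule before case (ii), or equivalently by allowing $y=z$ in case (ii), or by explicitly checking $d$-embeddability of $(X\setminus O,\dist)$ via \Cref{prop:realization} before accepting; each fix preserves the invariant that $|I|+|O|$ grows by one per step and hence the $2^{d+k}\cdot n^{\Oh(1)}$ bound. (You also implicitly work with the unweighted decision version; as the paper observes, carrying the weight budget $W$ along with $k$ whenever a point enters $O$ handles weights without affecting the running time.)
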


%

\begin{proof}

We present two algorithms with different performance guarantees depending on the relation between $k$ and $d$. In both cases, we give a {\em decision} algorithm for the {\em unweighted version}, but it will be straightforward to see that a solution can be computed in the same running time and that the algorithms extend to {\EEO} (i.e., with weights).
We describe the first algorithm ({\sf Alg-1}).
\vspace{5 pt}
 \begin{mybox}
\underline{{\sf Alg-1}$(X,\rho,k,d)$}:
\begin{enumerate}
    \item If $k<0$, then return ``no''.
    \item If $X$ is $d$-embeddable, 
    then return ``yes''. 
    \item Compute the set $\hat A$ given by \Cref{lem:blackBoxObstruction}. 
\item Return ``yes'' if and only if {\sf Alg-1}$(X\setminus \{x\},\rho,k-1,d)$ returns yes for some $x\in \hat A$.   
\end{enumerate}
 \end{mybox}
Since there are at most $(d+3)$ recursive calls made by each invocation to {\sf Alg-1} and the budget $k$ strictly decreases in each recursion, it follows that {\sf Alg-1}  runs in time $(d+3)^{k}\cdot n^{O(1)}$. The correctness follows from \Cref{lem:blackBoxObstruction}, the fact that the search is exhaustive and the fact that $d$-embeddability is closed under deletion of points.

We now describe the second algorithm ({\sf Alg-2}).
It takes as input a tuple $(X,\rho,k,d,Z)$, where $(X,\rho,k,d)$ is an instance of the problem and $Z\subseteq X$ is independent. 
The algorithm decides whether there is a set $A\subseteq X$ of size at most $k$ that is {\em disjoint} from $Z$ such that  $(X\setminus A,\rho)$ is $d$-embeddable. The initial invocation of the algorithm is with the tuple $(X,\rho,k,d,\emptyset)$. 

\vspace{5 pt}

 \begin{mybox}
\underline{{\sf Alg-2}$(X,\rho,k,d,Z)$}:
\begin{enumerate}
    \item If $k<0$ or $|Z|>d+1$ then return ``no''.
    \item If $X$ is $d$-embeddable, then return ``yes''. 
    \item If there is $z\not\in Z$ such that $Z\cup \{z\}$ is not $|Z|$-embeddable,  then return {\sf Alg-2}$(X\setminus\{z\},\rho,k-1,d,Z)$.
 \item  If there is $z\not\in Z$ such that $Z\cup \{z\}$ is independent then, if at least one out of  {\sf Alg-2}$(X\setminus\{z\},\rho,k-1,d,Z)$ and {\sf Alg-2}$(X,\rho,k,d,Z\cup \{z\})$ returns ``yes'', then return ``yes''; otherwise, return ``no''.

 
 \item If there are points $z_1,z_2\notin Z$ such that $Z\cup \{z_1,z_2\}$ is not $(|Z|-1)$-embeddable then, if at least one out of  {\sf Alg-2}$(X\setminus\{z_1\},\rho,k-1,d,Z)$ and {\sf Alg-2}$(X\setminus\{z_2\},\rho,k-1,d,Z)$ returns ``yes'', then return ``yes''; otherwise, return ``no''.

\end{enumerate}
 \end{mybox}

Note that in Line 4 of {\sf Alg-2}, we recurse into two exhaustive cases capturing whether or not $z$ is in the solution; in the former case, we delete it and in the later case, add it to $Z$. In Line 5 of the same algorithm, we recurse into two exhaustive cases capturing whether $z_1$ or $z_2$ is in the solution. 

Having completed the description of {\sf Alg-2}, we next analyze its running time and prove its correctness.
Since there are at most two recursive calls made by each invocation to {\sf Alg-2} and the measure $k+d+1-|Z|$ strictly decreases in each recursive call, the running time of {\sf Alg-2} is bounded by $2^{k+d}n^{O(1)}$. We next prove the correctness of {\sf Alg-2}.
We first argue that each step is correct and then argue that the case distinction is exhaustive. 
Lines 1 and 2 are trivially correct. Line 3 is correct as any outlier set disjoint from $Z$ must contain $z$.
This is due to the fact that otherwise, $(Z\cup \{z\},\rho)$ would be $d$-embeddable and by \Cref{lem:incrementDimension} then also $|Z|$-embeddable.  The two cases in Line 4 are by definition exhaustive. The branching in Line 5 is exhaustive due to \Cref{equivalence} and the fact that Line 4 is not executed.  
Finally, if none of the first four steps are executed (in particular, we have not solved the instance), then Line 5 must be executed due to \Cref{equivalence}. 

Since both algorithms are exhaustive search algorithms, it is straightforward to handle weights: when adding a point $x$ to the solution (Line 4 in {\sf Alg-2} and Lines 3-5 in {\sf Alg-2}), along with reducing $k$ by one, we can also reduce the weight budget $W$ by $\woutl(x)$ and add a check to both algorithms that ensures that $W$ is non-negative before returning ``yes''.
\end{proof}

\section{FPT-Approximation for Unweighted {\EEO}}\label{sec:approx}
In this section, we give an FPT-time 2-approximation for {\sc Unweighted} {\EEO} ({\UEEO}) parameterized by the dimension $d$. Recall that in the optmization version of this problem, the input is $(\Xcal=(X,\dist),d)$ where $(X,\dist)$ is a distance space and the goal is to output a $d$-outlier set of minimum size.


Let us first give a rough outline of our algorithm, motivating the central ideas.
Let $(\Xcal=(X,\dist),d)$ be an instance of {\UEEO}. Furthermore let $S_{\sf opt}$ be a minimum sized $d$-outlier set and $X^\star=X\setminus S_{\sf opt}$. Our algorithm is based on the following two ideas.

\subparagraph*{Idea 1:}  
Let us assume that $(X^\star,\dist)$ is strongly $d$-embeddable. Furthemore, let $U^\star\subseteq X^\star$ be a metric basis of $X^\star$ of size $d+1$. Let ${\cal C}_{\sf comp}$ be the set of elements $y \in X \setminus U^\star$ such that $(U^\star\cup \{y\},\dist)$ is  $d$-embeddable. Furthermore, let ${\cal C}_{\sf def}$ be the set of elements $y \in X \setminus U^\star$ such that $(U^\star\cup \{y\},\dist)$ is not $d$-embeddable. Observe that since $U^\star\subseteq X^\star$, 
we have that every element in ${\cal C}_{\sf def}$ belongs to $S_{\sf opt}$. On the other hand, each element in ${\cal C}_{\sf comp}$ can be $d$-embedded together with $U^\star$. However, it is possible that there could be two elements $x,y \in {\cal C}_{\sf comp} $ such that their positions in $\mathbb{R}^d$ are uniquely determined by $U^\star$, say    $\varphi(x)$ and $\varphi(y)$, respectively, but $\dist(x,y)\neq \|\varphi(x)-\varphi(y)\|_2$. We call such a pair $x,y$ an {\em incompatible pair}. 
This implies that since $U^\star\subseteq X^\star$, both $x$ and $y$ together cannot be in $X^\star$. This implies that either $x$ or $y$ belongs to $S_{\sf opt}$. This reasoning leads to an instance of the {\sc Vertex Cover} problem. This connection between {\EEO} and {\sc Vertex Cover} was also used in previous work \cite{SidiropoulosWW17}. Here, an input is a graph $G$ and the objective is to delete a minimum number of vertices, say $Q$ such that the graph has no edges left (that is $Q$ intersects every edge of $G$). We will be interested in solving the {\sc Vertex Cover} problem on a graph $G_{\sf comp}$, where $V(G_{\sf comp})= {\cal C}_{\sf comp} $, and there is an edge between two elements $x,y \in {\cal C}_{\sf comp} $  if they are an incompatible pair.
Note that $S_{\sf opt} \cap {\cal C}_{\sf comp} $ is a vertex cover of $G_{\sf comp}$. It is well known that {\sc Vertex Cover} admits a polynomial-time factor-$2$ approximation algorithm. For instance, one could take the endpoints of the edges in a maximal matching.  So,  using this algorithm we can obtain a set $Q'$ of size at most $2 |S_{\sf opt} \cap {\cal C}_{\sf comp}|$. This implies that ${\cal C}_{\sf def} \cup Q'$ is an outlier set of size at most $2 |S_{\sf opt}|$. 

What is not clear is how to obtain a metric basis of $X^\star$ that can play the role of $U^\star$ in the above discussion.

\subparagraph*{Idea 2:}
To address the difficulty of directly computing $U^\star$, we give a randomized algorithm that produces a sequence of independent sets $U_0,\dots,U_{d+1}$ along with a sequence  $A_1,\dots,A_{d+2}$ of $d$-outlier sets, where each $A_i$ is ``associated with'' $U_{i-1}$. We then show that with sufficiently high probability, either $U_{d+1}$ is the required $U^\star$ or one of the produced feasible solutions is the required factor-2 approximation. Roughly speaking, for each $U_{i-1}$, we compute $A_i$ as follows.
\begin{itemize}\item Any element that cannot be embbedded along with $U_{i-1}$ into $\mathbbr{d}$  is added to $A_i$. This is the set ${\cal C}_{\sf def}$ described in the previous paragraph.
\item Among the rest, we take the set ${\cal C}_{\sf comp}$ of compatible elements as described in the previous paragraph,  compute a 2-approximation for the associated {\sc vertex cover} instance and add it to $A_i$. 
\item Finally, we consider the remaining elements (besides ${\cal C}_{\sf def}$ and ${\cal C}_{\sf comp}$) call this set $W$, each of which when added to $U_{i-1}$ results in an independent set. At this point, we have two cases: either at least half of $W$ is in the solution (in which case we take all of this set in to the solution at a cost of factor 2) or less than half of $W$ is in the solution, in which case picking an element uniformly at random and adding it to $U_{i-1}$ gives us, with good probability, a strictly larger independent set $U_i$ in $X^\star$ and we move on to computing $U_{i+1}$.
    \end{itemize}

    Let us now formalize these ideas. 



\begin{theorem}\label{thm:2apprx}
    {\sc Unweighted} \EEO{} can be~$2$-approximated in~$2^d \cdot n^{\Oh(1)}$ time by a randomized algorithm with one-sided constant probability of error.
\end{theorem}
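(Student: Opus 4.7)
Here is the plan. Maintain an independent set $U_0\subseteq U_1\subseteq\cdots$, starting with $U_0=\emptyset$, and in iteration $i\geq 1$ compute a candidate outlier set $A_i$ together with a random extension $U_i$ as follows. By \Cref{prop:realization}, partition $X\setminus U_{i-1}$ in polynomial time into three sets: ${\cal C}_{\sf def}^{(i)}$, the elements $y$ for which $(U_{i-1}\cup\{y\},\rho)$ is not $d$-embeddable; $W_i$, the elements $y$ for which $U_{i-1}\cup\{y\}$ is independent; and ${\cal C}_{\sf comp}^{(i)}$, everything else. On ${\cal C}_{\sf comp}^{(i)}$ build the \emph{incompatibility graph} $G_i$ whose edges are the pairs $\{x,y\}$ with $(U_{i-1}\cup\{x,y\},\rho)$ not $(|U_{i-1}|-1)$-embeddable (testable via \Cref{thm:Blumental}), compute a $2$-approximate vertex cover $Q_i$ of $G_i$ by the standard maximal-matching algorithm, and set $A_i:={\cal C}_{\sf def}^{(i)}\cup Q_i\cup W_i$. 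Using \Cref{prop:matroid,prop:unique}, once a realization of $U_{i-1}$ is fixed each element of ${\cal C}_{\sf comp}^{(i)}$ has a unique forced position in the affine hull of that realization, and removing one endpoint of every edge of $G_i$ ensures these positions are pairwise consistent. Thus $X\setminus A_i=U_{i-1}\cup({\cal C}_{\sf comp}^{(i)}\setminus Q_i)$ is always $d$-embeddable, so $A_i$ is feasible regardless of any assumption. If $W_i\neq\emptyset$ the algorithm samples $w\in W_i$ uniformly at random and sets $U_i:=U_{i-1}\cup\{w\}$; otherwise it halts. Since $|U_i|\leq d+1$ by \Cref{prop:matroid}, the loop ends after at most $d+2$ iterations, and the algorithm returns the smallest $A_i$ produced.

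For the approximation analysis, fix any optimum outlier set $S^*$ with $X^*:=X\setminus S^*$ and track the event $E_i$ that $U_{i-1}\subseteq X^*$. Conditioned on $E_i$: (i) ${\cal C}_{\sf def}^{(i)}\subseteq S^*$, as any $y\in X^*$ is $d$-embeddable together with $U_{i-1}\subseteq X^*$; and (ii) $S^*\cap{\cal C}_{\sf comp}^{(i)}$ is a vertex cover of $G_i$, because any two $x,y\in X^*\cap{\cal C}_{\sf comp}^{(i)}$ each individually lie in the $(|U_{i-1}|-1)$-dimensional subspace forced by $U_{i-1}$, making $(U_{i-1}\cup\{x,y\},\rho)$ necessarily $(|U_{i-1}|-1)$-embeddable. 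Hence $|Q_i|\leq 2|S^*\cap{\cal C}_{\sf comp}^{(i)}|$. If additionally $|S^*\cap W_i|\geq |W_i|/2$ (including the terminal case $W_i=\emptyset$), then
\begin{equation*}
|A_i|\leq|{\cal C}_{\sf def}^{(i)}|+2|S^*\cap{\cal C}_{\sf comp}^{(i)}|+2|S^*\cap W_i|\leq 2|S^*|,
\end{equation*}
so the algorithm succeeds at this iteration.

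Otherwise $E_i$ holds but $|S^*\cap W_i|<|W_i|/2$, so the uniformly random $w\in W_i$ lands in $X^*$ with probability strictly greater than $1/2$; hence $E_{i+1}$ holds with conditional probability $>1/2$. Across the at most $d+1$ random draws in a single run, $E_i$ therefore holds throughout with probability at least $2^{-(d+1)}$, and when that happens some $A_i$ satisfies $|A_i|\leq 2|S^*|$. Repeating the entire procedure $\Theta(2^{d+1})$ times independently and returning the smallest $A_i$ across all runs drives the failure probability below any prescribed constant. Each run is polynomial by \Cref{prop:realization} and the polynomial-time vertex-cover approximation, yielding the claimed $2^d\cdot n^{\Oh(1)}$ bound.

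The main obstacle I anticipate is the careful justification of the two geometric claims in the analysis: that $X\setminus A_i$ is $d$-embeddable, and that $S^*\cap{\cal C}_{\sf comp}^{(i)}$ is a vertex cover of $G_i$ under $E_i$. Both rely on the uniqueness of the extension of a realization of $U_{i-1}$ to ${\cal C}_{\sf comp}^{(i)}$ via \Cref{prop:matroid,prop:unique}: the first requires that the removed edges of $G_i$ are exactly the obstructions to pairwise consistency of the forced positions, while the second requires that two $X^*$-elements of ${\cal C}_{\sf comp}^{(i)}$ are automatically consistent because both embed into the same $(|U_{i-1}|-1)$-dimensional flat determined by $U_{i-1}$. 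Once these are established using \Cref{thm:Blumental}, the randomized factor-$2$ argument combining maximal-matching vertex cover with the coin-flip averaging on $W_i$ goes through as outlined.
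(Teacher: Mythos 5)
Your proposal is correct and follows essentially the same approach as the paper's proof: iteratively grow a random independent set $U_{i-1}$, at each stage trichotomize $X\setminus U_{i-1}$ into deficient points (forced outliers), compatible points (on which a vertex-cover $2$-approximation is run), and extension candidates (your $W_i$, which the paper calls ${\cal C}^i_{\sf incomp}$), and argue via the coin-flip averaging on $W_i$ that some $A_i$ is a $2$-approximation with probability at least $2^{-(d+1)}$. The only cosmetic difference is that you make the halting condition ($W_i=\emptyset$) and the unconditional feasibility of every $A_i$ explicit, whereas the paper folds these into its two-case analysis of the first index $\ell$ where the averaging condition fails; the underlying decomposition, lemmas (\Cref{prop:matroid}, \Cref{prop:unique}, \Cref{thm:Blumental}), and probability bound are identical.
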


\begin{proof}

Let $S_{\sf opt}$ be a hypothetical minimum sized $d$-outlier set. Let $X^\star = X \setminus S_{\sf opt}$. 
Let us assume that $(X^\star,\dist)$ is strongly $d$-embeddable. Else, it is strongly $d'$-embeddable for some $d'\leq d$, in which case we guess $d'$ and set $d:=d'$. 
In what follows, we give a randomized algorithm that aims to obtain a metric basis of 
$(X^\star,\dist)$ if certain pre-conditions are satisfied. 
\longversion{We call this algorithm {\sf Alg-3}.}

\medskip

\begin{mybox}
\underline{{\sf Alg-3}$(X,\rho,d)$}:

\begin{enumerate}\item Set $U_{0}:=\emptyset$. 

\item For $i=1$ to $d+2$ do as follows. 
\begin{enumerate}
\item Construct ${\cal C}_{\sf comp}^i$, the set of elements $y \in X \setminus U_{i-1}$ such that $(U_{(i-1)} \cup \{y\},\dist)$ is  $(i-2)$-embeddable. 
 \item Construct ${\cal C}_{\sf def}^i$, the set of elements $y \in X \setminus U_{i-1}$ such that $(U_{(i-1)} \cup \{y\},\dist)$ is  not $d$-embeddable.
 \item Construct ${\cal C}_{\sf incomp}^i$, the set of elements $y \in X \setminus (U_{i-1}\cup {\cal C}_{\sf comp}^i\cup {\cal C}_{\sf def}^i)$ such that $(U_{(i-1)}\cup \{y\},\dist)$ is not $(i-2)$-embeddable. 
\item Select an element $u_i$ uniformly at random from ${\cal C}_{\sf incomp}^i$. 
\item Set $U_i:=U_{(i-1)}\cup \{u_i\}$. 

\end{enumerate}
\end{enumerate}
\end{mybox}







\longversion{We next establish some properties of {\sf Alg-3}. }

\begin{claim}\deferProof{}\label{clm:prepare}
For every $i\in \{0,\dots,d+1\}$, the following holds:
\begin{enumerate}
\item ${\cal C}_{\sf def}^i$ and ${\cal C}_{\sf comp}^i$ are disjoint.
\item Every $d$-outlier set of $(X,\rho)$ disjoint from $U_{i}$ contains ${\cal C}_{\sf def}^{i+1}$. 
\item  If for all $j\leq i$,  ${\cal C}_{\sf incomp}^j\neq \emptyset$, then $U_i$ is an independent set of size $i$. 
\end{enumerate}
\end{claim}
\longversion{
\begin{proof}[Proof of \Cref{clm:prepare}]
The first two statements are trivial. We prove the final statement using induction on $i$. The base case when $i=0$  holds trivially as $U_{0}=\emptyset$. Assume that $U_i$ is an independent set of size $i$ and let us argue this for $i+1$. By definition, for every $y\in {\cal C}_{\sf incomp}^{i+1}$, $(U_{i}\cup \{y\},\dist)$ is $d$-embeddable. Since  $U_{i}$ is an independent set of size $i$ by the induction hypothesis, Lemma~\ref{lem:incrementDimension} implies that $U_{i+1}$ is $i$-embeddable. However, by our construction we know that for every element $y \in {\cal C}_{\sf incomp}^{i+1}$ we have that $(U_{i}\cup \{y\},\dist)$ is not $(i-1)$-embeddable. In particular, this implies that $(U_{i}\cup \{u_{i+1}\},\dist)$ is not $(i-1)$-embeddable. Thus, we conclude that $U_{i+1}$ is strongly $i$-embeddable, that is, $U_{i+1}$ is an independent set. By construction, it has size one more than $|U_i|$, so it has size $i+1$ as claimed.  
\end{proof}
}

We next argue that for each $i\in [d+1]$, the independent set $U_i$ is preserved in $X^\star = X \setminus S_{\sf opt}$, with sufficiently high probability if certain conditions are met.

\begin{claim}\deferProof{}\label{claim:approxAlgo}
Let $i\in [d+1]$ such that for all $j\leq i$,  $|{\cal C}_{\sf incomp}^j \cap S_{\sf opt}| <  \frac{|{\cal C}_{\sf incomp}^j|}{2}$.
Then, $\Pr[U_i \subseteq X^\star] \geq  \frac{1}{2^i}$ and $U_i$ is an independent set of size $i$. 
\end{claim}

\longversion{
\begin{proof}[Proof of \Cref{claim:approxAlgo}]
The fact that $U_i$ is an independent set of size $i$ follows from \Cref{clm:prepare}~(3). Note that since we have a strict inequality in the premise of the statement, this implies that ${\cal C}_{\sf incomp}^j$ is non-empty for every $j\leq i$, thus satisfying the premise of  \Cref{clm:prepare}~(3).

It remains to prove that 
$\Pr[U_i \subseteq X^\star] \geq  \frac{1}{2^i}$. 
  We prove this using induction on $i$.  It is clear that when $i=1$, since at most half of ${\cal C}_{\sf incomp}^1$ is in the solution $S_{\sf opt}$, we have that  $ \Pr[U_1 \subseteq X^\star] \geq \frac{1}{2}$. Now, suppose that $i>1$.

  We have that,  $$ \Pr[U_i \subseteq X^\star] =  \Pr[U_{i-1} \subseteq X^\star \wedge u_i \in X^\star].$$  
   
   Recall that $u_i\in X\setminus U_{i-1}$.  Since the events $U_{i-1} \subseteq X^\star$ and $u_i \in X^\star$ are independent, we have the following. 
   $$
   \Pr[U_{i-1} \subseteq X^\star] \cdot \Pr[u_i \in X^\star]\geq \frac{1}{2^{i-1}} \cdot \frac{1}{2}= \frac{1}{2^i}.$$ 
  The bound on $\Pr[U_{i-1} \subseteq X^\star]$ follows from the induction hypothesis and the bound on  $\Pr[u_i \in X^\star]$ follows from the premise that $$|{\cal C}_{\sf incomp}^i \setminus S_{\sf opt}| \geq  \frac{|{\cal C}_{\sf incomp}^i|}{2}. $$
  This concludes the proof of the claim. 
\end{proof}
}


Next, we build a series of $d$-outlier sets  $A_1,\dots,A_{d+2}$ such that each $A_i$ is constructed based on $U_{i-1}$ and the sets constructed in the $i^{\rm  th}$ iteration of {\sf Alg-3}.

This is done in the following algorithm, call it {\sf Alg-4}\footnote{It is straightforward to ``embed'' {\sf Alg-4} into {\sf Alg-3} and give a single algorithm. However, we feel it is more insightful for the reader to see the two subroutines separately.}, which has access to the sets constructed by Alg-3.

 \begin{mybox}
\underline{{\sf Alg-4}$(X,\rho,d)$}:
\begin{enumerate}
\item For $i=1$ to $d+2$ do as follows. 
\begin{enumerate}
\item Create an instance of the {\sc Vertex Cover} problem on graph $G_{\sf comp}^i$, where $V(G_{\sf comp}^i)= {\cal C}_{\sf comp} ^i$, and there is an edge between two elements $x,y \in {\cal C}_{\sf comp} ^i$  if they are an incompatible pair. 
Let $O_i$ denote a minimum sized vertex cover of $G_{\sf comp}$. 
\item Using a known factor-$2$ approximation for {\sc Vertex Cover}, obtain a set 
 $Q_i$ such that $|Q_i|\leq 2 |O_i|$. 
\item $A_i:= Q_i \cup {\cal C}^{i}_{\sf incomp}\cup {\cal C}^{i}_{\sf def}$.
\end{enumerate}
\item Compute $ i^{\star} = \arg\min_{i \in [d+2]} |A_i|$. 
\item Return $A_{i^{\star}}$. 
\end{enumerate}
 \end{mybox}

\begin{claim}\deferProof{}\label{claim:approxAnalysis} Let $\ell \in [d+1]$ be the least integer such that  for all $j<\ell$, it holds that  $$|{\cal C}_{\sf incomp}^j \cap S_{\sf opt}| <  \frac{|{\cal C}_{\sf incomp}^j|}{2}. $$
Then, $A_\ell$ is a factor-2 approximation with probability at least $\frac{1}{2^{\ell}}$.
\end{claim}

\longversion{
\begin{proof}

We  have two cases for $\ell$ depending on whether or not the inequality in the premise of the claim 
holds also at $\ell$. \\ 

\noindent
{\bf Case 1:}   $$|{\cal C}_{\sf incomp}^\ell \cap S_{\sf opt}| <   \frac{|{\cal C}_{\sf incomp}^\ell|}{2}. $$

In this case, $\ell$ must be $d+1$.  By Claim~\ref{claim:approxAlgo}, the probability that $U_{d+1} \subseteq X^\star$ and  $U_{d+1}$ is an independent set of size $d+1$ is at least $\frac{1}{2^{d+1}}$.

By construction (see {\sf Alg-3}), ${\cal C}_{\sf comp}^{d+2}$ and ${\cal C}_{\sf def}^{d+2}$ partition the set $X\setminus U_{d+1}$ (since at $i=d+2$, $d$-embeddability is the same as $(i-2)$-embeddability). 
Moreover, from \Cref{clm:prepare}~(2), $S_{\sf opt}$ contains ${\cal C}^{d+2}_{\sf def}$, which is also contained in $A_{d+2}$ by definition. This implies that the size of $A_{d+2}\setminus {\cal C}^{d+2}_{\sf def}$ is $|Q_{d+2}|    \leq  2|O_{d+2}|$. Hence, we conclude that $|A_{d+2}| \leq  2 |S_{\sf opt}|$. 

\medskip
\noindent
{\bf Case 2:}  $$|{\cal C}_{\sf incomp}^\ell \cap S_{\sf opt}| \geq   \frac{|{\cal C}_{\sf incomp}^\ell|}{2}. $$

  By Claim~\ref{claim:approxAlgo}, the probability that $U_{\ell-1} \subseteq X^\star$ and  $U_{\ell-1}$ is an independent set of size $\ell-1$ is at least $\frac{1}{2^{\ell-1}}\geq \frac{1}{2^{d+1}}$. 



Consider the approximate solution $A_\ell$ we constructed. Observe that $S_{\sf opt} \cap {\cal C}_{\sf comp}^\ell $ is a vertex cover of $G_{\sf comp}^\ell$. Moreover, from \Cref{clm:prepare}~(2), $S_{\sf opt}$ contains ${\cal C}^{\ell}_{\sf def}$, which is also contained in $A_\ell$ by definition.

This implies that the size of $A_\ell\setminus {\cal C}^{\ell}_{\sf def}$ is:
\begin{eqnarray*}
|Q_\ell  \cup {\cal C}_{\sf incomp}|   & \leq & 2|O_\ell| + 2 |{\cal C}_{\sf incomp}^\ell \cap S_{\sf opt}|   \\
& \leq & 2 | S_{\sf opt} \cap {\cal C}_{\sf comp}^\ell| + 2 |{\cal C}_{\sf incomp}^\ell \cap S_{\sf opt}| \\
& = & 2 |S_{\sf opt}\setminus {\cal C}^{\ell}_{\sf def}| \text{ (Since ${\cal C}_{\sf comp}^\ell \cap {\cal C}_{\sf incomp}^\ell = \emptyset$)}
\end{eqnarray*}

Hence, we conclude that $|A_\ell| \leq  2 |S_{\sf opt}|$.  
\end{proof}
}

Since we compute \( i^{\star} = \arg\min_{i \in [d+2]} |A_i| \) and return $A_{i^{\star}}$,  by \Cref{claim:approxAnalysis} we are guaranteed that $|A_{i^{\star}}| \leq  2 |S_{\sf opt}| $ with probability at least $\frac{1}{2^{d+1}}$.


\smallskip
\noindent 
{\bf Running Time Analysis.} \longversion{ Clearly, the algorithm runs in polynomial time. } We have already argued that we succeed with probability $\frac{1}{2^{d+1}}$.  Thus, we can boost the success probability by independently running our polynomial-time algorithm $2^{d+1}$ times and returning the minimum size solution among these runs. Thus, the probability that the algorithm fails in all of the independent runs is upper bounded by 
\[\left(1-\frac{1}{2^{d+1}}\right)^{2^{d+1}} \leq 1- \frac{1}{e}.\]
Thus the algorithm succeeds with probability $1-(1/e) \geq 1/2$. 

Moreover, the total running time is upper bounded by $2^{d} n^{\Oh(1)}$. 
\end{proof}


\section{Lower bounds}\label{sec:lower}
In this section, we complement our algorithmic results by proving computational lower bounds. These lower bounds also motivate our choice of parameterizations. First, we show that \WEEO is \classParaNP-hard when parameterized by $d$. More precisely, we show that 
even the {\em unweighted} versions of both \EEO and \EEDE are \classNP-hard even for $d=1$.

\begin{theorem}\deferProof{}\label{thm:outlier-d}
\EEO is strongly \classNP-hard even for instances with unit weights, integer distances, and $d=1$.
\end{theorem}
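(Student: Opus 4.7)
The plan is to reduce from \textsc{Vertex Cover}, which is strongly $\classNP$-hard even on unweighted graphs, to \EEO{} with $d=1$ and unit weights. Given a graph $G=(V,E)$ with $V=\{v_1,\dots,v_n\}$ and an integer $k$, I would construct a distance space $\Xcal=(X,\dist)$ with two kinds of points: a \emph{reference gadget} that rigidly pins down a layout on the real line, together with one \emph{vertex point} per $v_i\in V$.

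The reference gadget would consist of $k+1$ copies of each of three anchor points $a,b,c$. I would set inter-type distances $\dist(a,b)=1$, $\dist(b,c)=2$, $\dist(a,c)=3$, and distance $0$ between copies of the same type. Since the only way to $1$-embed three points with pairwise distances $1,2,3$ is to place them at positions $\{0,1,3\}$ (up to reflection and translation), every copy gets pinned to its designated position in any $1$-embedding that retains at least one representative of each type. For each vertex point $v_i$, I would set its distance to every copy of $a$, $b$, $c$ to be $10i$, $10i-1$, $10i-3$, respectively; the three constraints together uniquely force $v_i$ to lie at position $10i$ (up to the global reflection that also reflects the references). Consequently, $|\mathrm{pos}(v_i)-\mathrm{pos}(v_j)|=10|i-j|$ is forced for every pair of surviving vertex points.

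For each pair $\{v_i,v_j\}$ with $i\neq j$, I would set $\dist(v_i,v_j):=10|i-j|$ if $v_iv_j\notin E$ and $\dist(v_i,v_j):=10|i-j|+1$ if $v_iv_j\in E$. The central claim I would establish is that for every $S\subseteq X$ with $|S|\leq k$, the subspace $(X\setminus S,\dist)$ is $1$-embeddable if and only if $S\cap V$ is a vertex cover of $G$. The easy direction is that deleting a vertex cover removes at least one endpoint of each perturbed edge, so the remaining distances are consistent with the layout in which the references sit at $\{0,1,3\}$ and each surviving $v_i$ sits at $10i$.

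The main obstacle is the converse: I must rule out the possibility that the adversary spends its budget on reference points in order to relax the layout constraints. This is precisely the role of the multiplicity $k+1$---any $S$ with $|S|\leq k$ leaves at least one copy of each anchor type alive, so the reference triangle (and hence the position of every surviving $v_i$) remains rigidly determined, and any surviving edge would then cause a distance mismatch of exactly $1$, contradicting $1$-embeddability. Since the construction produces $O(n+k)$ points with integer distances of magnitude $O(n)$, the reduction is polynomial and uses only polynomially bounded numerical values, yielding \emph{strong} $\classNP$-hardness.
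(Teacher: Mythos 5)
Your proposal is correct and follows essentially the same strategy as the paper: both reduce from \textsc{Vertex Cover}, use a collection of reference points with multiplicity exceeding the budget $k$ so that surviving references rigidly determine the position of every surviving vertex point on the line, and encode each edge as a forced distance inconsistency between its two endpoints. The only differences are cosmetic choices of gadget (the paper uses $k+2$ collinear points $p_1,\dots,p_{k+2}$ with $\dist(p_i,p_j)=|i-j|$ and sets edge distances to $0$, whereas you use $k+1$ copies of a three-point triangle and perturb edge distances by $+1$), which does not change the argument in any substantive way.
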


\longversion{
\begin{proof}
We follow~\cite{SidiropoulosWW17} and reduce from the \textsc{Vertex Cover} problem which is well-known to be \classNP-complete~\cite{GareyJ79}. We recall that in \textsc{Vertex Cover}, the task is, given a graph $G$ and a positive integer $k$, to decide whether there is a vertex cover of size at most $k$, that is, a set of vertices $S$ of size at most $k$ such that every edge of $G$ has at least one of its endpoints in $S$. 

Let $G$ be an $n$-vertex graph and let $k\leq n$ be a positive integer. Denote by $v_1,\ldots,v_n$ the vertices of $G$. We construct the distance space $(X,\dist)$ as follows.
\begin{itemize}
\item Construct $k+2$ points $p_1,\ldots,p_{k+2}$ and set $\dist(p_i,p_j)=|i-j|$ for all $i,j\in \{1,\ldots,k+2\}$.
\item Construct $n$ points $x_1,\ldots,x_n$ which corresponds to the vertices of $G$.
\item For each $i\in\{1,\ldots,k+2\}$ and $j\in\{1,\ldots,n\}$, set $\dist(p_i,x_j)=i+j$.
\item For all distinct $i,j\in \{1,\ldots,n\}$, set 
$\dist(x_i,x_j)=
\begin{cases}
|i-j|&\mbox{if }v_iv_j\notin E(G),\\
0&\mbox{if }v_iv_j\in E(G).
\end{cases}
$
\end{itemize}
Trivially, the construction is polynomial time and $\dist(u,v)\leq 2n$ for all points $u,v$. We claim that $G$ has a vertex cover of size at most $k$ if and only if $(X=\{p_1,\ldots,p_{k+2}\}\cup\{x_1,\ldots,x_n\},\dist)$ admits an embedding into $\mathbb{R}$ after removal of at most $k$ points.

Assume that $G$ has a vertex cover $S$ of size at most $k$. We define the set of outliers 
$O=\{x_i\mid 1\leq i\leq n\text{ and }v_i\in S\}$. Consider the following mapping $\varphi\colon X\setminus O\rightarrow \mathbb{R}$:
\begin{itemize}
\item $\varphi(p_i)=-i$ for all $i\in\{1,\ldots,k+2\}$,
\item $\varphi(x_i)=i$ for all $x_i\in \{x_1,\ldots,x_n\}\setminus O$.
\end{itemize}
Because $G-S$ is an edgeless graph, we immediately obtain that $\varphi$ is an isometric embedding.

For the opposite direction, assume that there is a set of at most $k$ outliers $O$ such that there is an isometric embedding $\varphi\colon X\setminus O\rightarrow \mathbb{R}$. Because $|O|\leq k$, there are distinct $s,t\in\{1,\ldots,k+2\}$ such that $p_s,p_t\notin O$. Without loss of generality we can assume that $\varphi(p_s)=-s$ and $\varphi(p_t)=-t$. Then for any $x_i\in \{x_1,\ldots,x_n\}\setminus O$, $\varphi(x_i)=i$. Then for any two distinct $x_i,x_j\in \{x_1,\ldots,x_n\}\setminus O$, $|\varphi(x_i)-\varphi(x_j)|=|i-j|$, that is, $v_i$ and $v_j$ are not adjacent in $G$. Thus, $S=\{v_i\mid 1\leq i\leq n\text{ and }x_i\in O\}$ is a vertex cover of $G$. As $|O|\leq k$, we conclude that $S$ is a vertex cover of size at most $k$. This concludes the proof.
\end{proof}
}

\begin{theorem}\deferProof{}\label{thm:mod-d}
\EEDE is strongly \classNP-hard for the instances with unit weights, integer distances, and $d=1$.
\end{theorem}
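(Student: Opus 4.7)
The plan is to prove \classNP-hardness by a polynomial-time reduction from \textsc{2-Cluster Editing}: given a graph $G=(V,E)$ on $V=\{v_1,\dots,v_n\}$ and an integer $\ell$, decide whether at most $\ell$ edge additions/deletions suffice to turn $G$ into a disjoint union of at most two cliques. This problem is strongly \classNP-hard (Shamir--Sharan--Tsur). The reduction used for \Cref{thm:outlier-d} does not adapt verbatim, because modifying a distance is a per-pair operation (unlike outlier deletion, which wipes out \emph{all} incident constraints), so the construction has to be tuned to an edge-flipping problem rather than a vertex-hitting one.

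Given $(G,\ell)$, I would build the distance space $\Xcal=(X,\dist)$ as follows. Set $K=\ell+3$ and $X=\{p_1,\dots,p_K\}\cup\{x_1,\dots,x_n\}$, with $\dist(p_a,p_b)=0$ for all $a\neq b$; $\dist(p_a,x_i)=i$ for all $a$ and $i$; $\dist(x_i,x_j)=|i-j|$ if $v_iv_j\in E(G)$, and $\dist(x_i,x_j)=i+j$ otherwise. Finally, take $\kmod=\ell$, $d=1$, and all weights equal to $1$. All distances are integers polynomial in $n+\ell$, so establishing \classNP-hardness of this instance automatically gives strong \classNP-hardness.

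The intended correspondence is: in any hypothetical solution, all anchor copies $p_a$ map to a common location (translate so this is $0$), and each $x_i$ lies at $\sigma_i\cdot i$ for some sign assignment $\sigma\colon[n]\to\{+,-\}$. For such a $\sigma$ the embedded distance $|\sigma_i\cdot i-\sigma_j\cdot j|$ equals $|i-j|$ when $\sigma_i=\sigma_j$ and $i+j$ when $\sigma_i\neq\sigma_j$, so a pair $\{x_i,x_j\}$ needs modification iff the $E(G)$-status of $v_iv_j$ disagrees with $\sigma_i=\sigma_j$. Thus the minimum number of modifications equals the minimum number of edge flips that turn $G$ into two cliques on $\sigma^{-1}(+)$ and $\sigma^{-1}(-)$, i.e., the \textsc{2-Cluster Editing} value.

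The hard part will be rigorously justifying that in \emph{every} solution with at most $\ell$ modifications the positions are forced as above, so that modifications reduce to flips between $|i-j|$ and $i+j$. The argument I envision leverages the redundancy $K=\ell+3$ in three steps: (i)~since all $\binom{K}{2}$ anchor--anchor distances are $0$, partitioning the anchors into two or more positions of sizes $k_1,\dots,k_c$ forces modifying $\binom{K}{2}-\sum_s\binom{k_s}{2}\ge K-1>\ell$ separating pairs, so all anchors must collapse to a single point; (ii)~once the anchors collapse (to $0$, say), the $K$ identical constraints $\dist(p_a,x_i)=i$ pin $|\varphi(x_i)|$ to $i$ unless all $K>\ell$ of them are modified, forcing $\varphi(x_i)\in\{+i,-i\}$; (iii)~once every $x_i$ is at $\pm i$, any modified distance $\dist'(x_i,x_j)$ must equal the embedded $|\sigma_i i-\sigma_j j|\in\{|i-j|,\,i+j\}$, which is the ``flip'' statement. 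Combining (i)--(iii) with the counting above gives both directions of the equivalence and completes the reduction.
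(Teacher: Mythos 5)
Your reduction is correct but takes a genuinely different route from the paper, which reduces from \textsc{Max Cut} rather than \textsc{2-Cluster Editing}. Structurally the two constructions are cousins: both use a block of anchor points at mutual distance~$0$ plus one vertex-point per graph vertex, and both force each $x_i$ into a binary sign choice. However, the rigidity mechanism differs. You take $K = \ell+3$ anchors and argue by counting in two stages (splitting the anchors would cost $\ge K-1>\ell$ anchor-anchor modifications; misplacing any single $x_i$ would cost all $K>\ell$ anchor-to-$x_i$ modifications), forcing all anchors to coincide and each $x_i$ to lie at $\pm i$. The paper instead uses $\binom{n}{2}-\ell+1$ anchors with a pigeonhole step: among $>\kmod$ anchors, at least one $p_h$ has every distance $\dist(p_h,x_i)$ untouched, and that single anchor already pins each $x_i$ to $\pm 1$ without needing the anchors to actually coincide. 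The paper also gets away with distances only in $\{0,1,2\}$ (so $\dist(p,x_i)=1$ and $\dist(x_i,x_j)\in\{1,2\}$), which makes the ``which pairs must be modified'' bookkeeping a clean cut-count; your construction uses position-dependent values $i$, $|i-j|$, $i+j$, which is still polynomially bounded and integral, so strong \classNP-hardness is preserved. A minor point worth pinning down in a full write-up: you should confirm that the exact formulation of 2-Cluster Editing you use (``disjoint union of \emph{at most} two cliques'') is the one covered by the Shamir--Sharan--Tsur hardness result, since your backward direction produces a possibly empty second cluster; the paper sidesteps this concern by reducing from \textsc{Max Cut} with a standard \cite{GareyJ79} citation. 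If you wanted, you could also mirror the paper's distance economy while keeping your source problem: setting $\dist(p_a,x_i)=1$ and $\dist(x_i,x_j)=0$ for edges and $2$ for non-edges makes the number of modified pairs exactly the 2-cluster-editing cost of the induced bipartition, so the construction simplifies without changing the argument.
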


\longversion{
\begin{proof}
We reduce from the \textsc{Max Cut} problem which is known to be \classNP-complete~\cite{GareyJ79}. In this problem, we are given a graph $G$ and a positive integer $\ell$, and the task is to decide whether $G$ admins a cut of size at least $\ell$, that is,  a partition $(A,B)$ of $V(G)$ such that the set of edges $E(A,B)$ having one endpoint in $A$ and the other in $B$ is of size at least $\ell$.

Let $G$ be an $n$-vertex graph and let $\ell\leq\binom{n}{2}$ be a positive integer. Denote by $v_1,\ldots,v_n$ the vertices of $G$. We construct the distance space $(X,\dist)$ as follows.
\begin{itemize}
\item Set $k=\binom{n}{2}-\ell$.
\item Construct $k+1$ points $p_0,\ldots,p_{k}$ and set $\dist(p_i,p_j)=0$ for all $i,j\in \{0,\ldots,k\}$.
\item Construct $n$ points $x_1,\ldots,x_n$ which corresponds to the vertices of $G$.
\item For each $i\in\{0,\ldots,k\}$ and $j\in\{1,\ldots,n\}$, set $\dist(p_i,x_j)=1$.
\item For all distinct $i,j\in \{1,\ldots,n\}$, set 
$\dist(x_i,x_j)=
\begin{cases}
2&\mbox{if }v_iv_j\in E(G),\\
1&\mbox{if }v_iv_j\notin E(G).
\end{cases}
$
\end{itemize}
It is straightforward that the construction is polynomial. Note that all the distances are at most two. We claim that $G$ has a cut of size at least $\ell$ if and only if $(X=\{p_0,\ldots,p_k\}\cup\{x_1,\ldots,x_n\},\dist)$ admits an embedding into $\mathbb{R}$ after modifying the value of $\dist(s,t)$ for at most $k$ pairs of points $s,t\in X$. 

Suppose that $G$ has a cut $(A,B)$ of size at least $\ell$. 
We define the mapping $\varphi\colon X\rightarrow \mathbb{R}$ as follows:
\begin{itemize}
\item $\varphi(p_i)=0$ for all $i\in\{0,\ldots,k\}$,
\item $\varphi(x_i)=-1$ for all $i\in \{1,\ldots,n\}$ such that $v_i\in A$,
\item $\varphi(x_i)=1$ for all $i\in \{1,\ldots,n\}$ such that $v_i\in B$.
\end{itemize}
Notice that for all $i,j\in \{0,\ldots,k\}$, $\varphi(p_i)-\varphi(p_j)=0=\dist(p_i,p_j)$. Also, for all $i\in\{1,\ldots,k\}$ and $j\in\{1,\ldots,n\}$, 
$|\varphi(p_i)-\varphi(x_j)|=1=\dist(p_i,x_j)$.
For $i,j\in \{1,\ldots,n\}$, we have that 
$|\varphi(x_i)-\varphi(x_j)|=2=\dist(x_i,x_j)$ if and only if $v_iv_j\in E(G)$ and the endpoints of the edge are in distinct sets of the partition $(A,B)$. Thus, to make $\varphi$ isometric, we have to modify the value $\dist(x_i,x_j)$ for $i,j\in \{1,\ldots,n\}$ such that $v_i$ and $v_j$ are either in the same set of the partition or $v_i$ and $v_j$ are in distinct sets and nonadjacent. The total number of such pairs $\{x_i,x_j\}$ is $\binom{n}{2}-|E(A,B)|\leq \binom{n}{2}-\ell=k$. 

For the opposite direction, assume that there is a mapping $\varphi\colon X\rightarrow \mathbb{R}$ that could be made isometric by modifying the value of $\dist(s,t)$ for at most $k$ pairs of points $s,t\in X$.
Because we can modify at most $k$ distances, there is $h\in\{1,\ldots,k\}$ such that $\dist(p_h,x_i)$ is not modified for all $i\in\{1,\ldots,n\}$. We assume without loss of generality that $\varphi(p_h)=0$. Then for any $i\in\{1,\ldots,n\}$, $\varphi(x_i)\in\{-1,1\}$. We define 
$A=\{v_i\mid 1\leq i\leq n\text{ and }\varphi(x_i)=-1\}$ and 
$B=\{v_i\mid 1\leq i\leq n\text{ and }\varphi(x_i)=1\}$.
Then for all distinct $i,j\in\{1,\ldots,n\}$, $|\varphi(x_i)-\varphi(x_j)|=\dist(x_i,x_j)$ if and only if 
$v_iv_j\in E(G)$ and the endpoints of this edge are in distinct sets of the partition $(A,B)$ of $V(G)$. Because at most $k$ distances are modified, we obtain that $G$ has a cut of size at least $\ell=\binom{n}{2}-k$. This concludes the proof. 
\end{proof}
}

Finally, we prove that \WEEO is \classW{1}-hard when parameterized by $\koutl+\kmod$ only. Moreover, the hardness holds even for the unweighted variant of \EEO. 
\longversion{For this, we have to restate the result of Fomin et al.~\cite{DBLP:journals/siamdm/FominGLS18} about the \textsc{Rank $h$-Reduction} problem. For us, it is convenient to define the problem as follows: given a matrix $M$ over $\mathbb{R}$ and positive integers $h$ and $k$, decide whether it is possible to obtain a matrix $M'$ from $M$ by deleting at most $k$ columns such that $\rank(M)-\rank(M')\geq h$.

\begin{proposition}[{\cite[Proposition~8.1]{DBLP:journals/siamdm/FominGLS18}}]\label{prop:rank}
\textsc{Rank $h$-Reduction} is \classW{1}-hard when parameterized by $h+k$ even when restricted to the instances with totally unimodular matrices.  
\end{proposition}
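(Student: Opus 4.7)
The plan is to give a parameterized reduction from \textsc{Multicolored Clique} (W[1]-hard parameterized by the target clique size~$\ell$) to \textsc{Rank $h$-Reduction} restricted to totally unimodular matrices. Concretely, I would construct in polynomial time a TU matrix $M$ and integers $h, k$ with $h+k$ bounded by a computable function of $\ell$, such that a given instance $(G,\ell)$ of \textsc{Multicolored Clique} with color classes $V_1,\dots,V_\ell$ is a yes-instance if and only if $(M, h, k)$ is a yes-instance of \textsc{Rank $h$-Reduction}.

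The matrix $M$ is built from two families of gadget blocks, combined via a TU-preserving composition (for instance as the vertex--edge incidence matrix of a carefully designed bipartite graph---which is automatically TU---or more generally as a network matrix arising from a directed forest): (i) a \emph{selection} block per color class $V_i$, whose only rank-$1$ drops achievable by a single-column deletion correspond to choosing a vertex $v_i\in V_i$; and (ii) a \emph{verification} block per color pair $\{i,j\}$, whose rank can be dropped by $1$ with a single column deletion precisely when the chosen pair $v_iv_j$ is an edge of $G$. Setting $k = h = \ell + \binom{\ell}{2}$ keeps $h+k = \Oh(\ell^2)$, so the reduction is a parameterized reduction. In the forward direction, a multicolored clique $\{v_1,\dots,v_\ell\}$ yields the obvious ``honest'' deletion: one column per selection block (encoding $v_i$) and one column per verification block (encoding the edge $v_iv_j$), achieving exactly the target rank drop $h$. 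In the backward direction, the gadget rigidity will force any feasible column-deletion of size at most $k$ achieving rank drop $\ge h$ to pick exactly one selection column per color class and, consistently, one verification column per color pair that witnesses an edge in $G$.

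The main obstacle is proving \emph{rank rigidity} of the gadgets under the TU constraint: since rank is a global quantity and TU matroids admit non-obvious circuits and cocircuits, one must rule out ``creative'' deletions that reduce the rank by borrowing contributions across several blocks without encoding a consistent vertex and edge selection. This requires a careful circuit/cocircuit analysis of the constructed matroid, typically formulated as a potential argument: each block is shown to have a fixed ``rank budget'' that only its designated honest deletion can release, while any deviation either wastes budget or fails to achieve the promised rank drop. Preserving total unimodularity itself is then routine, handled by composing standard TU primitives (bipartite incidence matrices, network matrices, interval matrices) via TU-preserving block operations, so the bulk of the technical effort lies in the rigidity proof rather than in verifying the TU property.
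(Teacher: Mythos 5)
Your proposal takes a fundamentally different route from the paper and, as written, leaves the hard part unfinished. The paper does not give a fresh reduction at all: it observes that Fomin, Golovach, Lokshtanov and Saurabh already proved \classW{1}-hardness of \textsc{Rank $h$-Reduction} (parameterized by $h+k$) for matrices over $\mathsf{GF}(2)$ that represent \emph{cographic} matroids, and then simply invokes two standard matroid-theoretic facts --- that cographic matroids are regular, and that a matroid is regular if and only if it admits a totally unimodular representation over $\mathbb{R}$ --- to transfer that hardness to TU matrices for free. No new gadgets, no new rigidity arguments; the entire proof is a citation plus a representation-theoretic observation (with a final remark that TU entries lie in $\{-1,0,1\}$).

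Your sketch, by contrast, sets out to re-derive the hardness directly from \textsc{Multicolored Clique}. That is not unreasonable in principle, but the proposal stops precisely where the work is: you name ``selection'' and ``verification'' blocks and assert that a ``careful circuit/cocircuit analysis'' or ``potential argument'' will establish rank rigidity, but you never exhibit the gadget matrices, verify that the composition is TU, or prove the backward direction ruling out cross-block deletions that drop rank without encoding a consistent vertex-and-edge choice. Those are not routine details to defer --- they are the entire substance of such a reduction, and making rank drops localize correctly inside a regular matroid is exactly the sort of thing that fails without a concrete construction. Until the gadgets and the rigidity lemma are actually written down, this is a proof strategy, not a proof. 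The much shorter path, and the one the paper takes, is to cite the existing cographic-matroid hardness and use regularity to switch representations; if you insist on a self-contained reduction, you must actually build and analyze the matrix.
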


Fomin et al.~\cite{DBLP:journals/siamdm/FominGLS18} proved \Cref{prop:rank} for matrices over $\mathsf{GF}(2)$ representing \emph{cographic matroids} (we refer to the book~\cite{oxley2006matroid} for the definition and basic properties). However, cographic matroids are known to be \emph{regular}, that is, representable over any field. Also, it is known that a matroid is regular if and only if it can be represented by a totally unimodular matrix over $\mathbb{R}$~\cite{oxley2006matroid}. This immediately implies our version of \Cref{prop:rank}. Note that each element of a totally unimodular matrix is in $\{-1,0,1\}$.  
}

\begin{theorem}\deferProof{}\label{thm:w-hard}
\EEO parameterized by $k$ is \classW{1}-hard for $n$-point instances with unit weights and integer distance matrices whose entries are $\Oh(n)$.
\end{theorem}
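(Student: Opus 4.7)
The plan is to give a parameterized reduction from \textsc{Rank $h$-Reduction}, which by \Cref{prop:rank} is \classW{1}-hard parameterized by $h+k$ even when the input matrix $M$ is totally unimodular (so its entries lie in $\{-1,0,1\}$). Given $(M,h,k)$ with $M$ an $N\times m$ totally unimodular matrix of rank $r$, I will produce an unweighted \EEO{} instance with outlier budget $\koutl:=k$ and target dimension $d:=r-h$. Since $\koutl=k\leq h+k$, the target parameter is bounded by the source parameter, so this is a valid FPT reduction establishing \classW{1}-hardness of \EEO{} parameterized by $\koutl$ alone.

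The construction rests on an anchor gadget. I interpret each column $M_{*i}$ as a vector in $\mathbb{R}^N$, introduce a point $v_i$ representing this column for each $i\in[m]$, and add $N+k+1$ identical anchor copies $a_0,\dots,a_{N+k}$ placed at the origin of $\mathbb{R}^N$. The distance space $\Xcal=(X,\dist)$ is defined by
\[
\dist^2(v_i,v_j)=\|M_{*i}-M_{*j}\|^2,\qquad \dist^2(a_s,v_i)=\|M_{*i}\|^2,\qquad \dist^2(a_s,a_t)=0.
\]
Because every entry of $M$ is in $\{-1,0,1\}$, each entry of the squared-distance matrix is a nonnegative integer bounded by $4N$; padding with $N+k+1$ anchor copies ensures $n=m+N+k+1\geq N$, so the required $\Oh(n)$ bound on matrix entries follows automatically.

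To verify correctness I would argue both directions. For the forward direction, if deleting a column set $S\subseteq[m]$ with $|S|\leq k$ reduces the rank of $M$ to at most $r-h$, then after removing the outliers $\{v_i:i\in S\}$ the remaining columns of $M$ span a linear subspace of dimension $\leq r-h$; any isometry from that subspace to $\mathbb{R}^{r-h}$ yields an embedding of the surviving points, with all anchors mapped to the origin. Conversely, since there are strictly more anchor copies than $\koutl$, any outlier set $O$ with $|O|\leq\koutl$ must leave some anchor $a_s$ intact; by \Cref{prop:unique} I may rigidly transform the embedding so that $a_s$ maps to $0\in\mathbb{R}^{r-h}$. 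The polarization identity combined with the prescribed distances then forces $\langle\varphi(v_i),\varphi(v_j)\rangle=\langle M_{*i},M_{*j}\rangle$ for all surviving $v_i,v_j$. Hence the Gram matrix of the surviving columns has rank $\leq r-h$, so the submatrix of $M$ obtained by dropping the column indices of the deleted $v_i$'s has rank $\leq r-h$. Because discarding anchors can never decrease the rank of the column set, an optimum outlier set may be assumed to consist entirely of $v_i$'s, yielding a rank-$h$-reduction of size $\leq k$.

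The main conceptual hurdle is the anchor gadget itself: the distance matrix of a point configuration only senses the \emph{affine} dimension of the configuration, whereas rank measures the \emph{linear} dimension of the column span; pinning at least one point to the origin collapses the two notions and lets rank-reduction be encoded as embedding dimension. Boosting the number of anchor copies from the $k+1$ required for correctness up to $N+k+1$ is a minor auxiliary trick that guarantees the $\Oh(n)$ bound on matrix entries rather than merely $\Oh(N)$. The remaining ingredients---computing $r$ to set $d$, checking that the construction is polynomial time, and unpacking \Cref{prop:rank}---are routine.
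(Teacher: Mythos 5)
Your proposal is correct and follows essentially the same reduction as the paper: both reduce from \textsc{Rank $h$-Reduction} using \Cref{prop:rank}, place the column vectors as points with the prescribed distances, add a cluster of coincident ``origin'' anchors (more than the outlier budget) so one must survive, and set $d=r-h$. The only cosmetic differences are that you pad with $N+k+1$ anchors to get the $\Oh(n)$ entry bound directly (the paper instead assumes a full-row-rank totally unimodular representation so that the row count is at most the column count), and in the converse direction you argue via the polarization identity and equality of Gram matrices, whereas the paper invokes \Cref{prop:unique} and uniqueness of the realization up to rigid motion; these are equivalent arguments.
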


\longversion{
\begin{proof}
We reduce from \textsc{Rank $h$-Reduction}. By \Cref{prop:rank}, the problem is \classW{1}-hard when parameterized by $k$ when restricted to the instances with matrices whose elements are in $\{-1,0,1\}$.
Let $(M,h,k)$ be such an instance and let $r=\rank(M)$. We assume that $M$ is a $(r\times n)$-matrix that has no zero columns, and we write $M=(\bfm_1,\ldots,\bfm_n)$ where $\bfm_1,\ldots,\bfm_n$ are the columns.  

We define $d=r-h$ and construct the distance space $(X,\dist)$ as follows.
\begin{itemize}
\item Construct $k+1$ points $p_0,\ldots,p_{k}$ and set $\dist(p_i,p_j)=0$ for all $i,j\in \{0,\ldots,k\}$.
\item Construct $n$ points $x_1,\ldots,x_n$ which corresponds to the columns of $M$.
\item For each $i\in\{0,\ldots,k\}$ and $j\in\{1,\ldots,n\}$, set $\dist(p_i,x_j)=\|\bfm_j\|_2$.
\item For all distinct $i,j\in \{1,\ldots,n\}$, set 
$\dist(x_i,x_j)=\|\bfm_i-\bfm_j\|_2$.
\end{itemize}
Clearly, the construction can be done in polynomial time. Notice that for any $u,v\in X=\{p_0,\ldots,p_k\}\cup\{x_1,\ldots,x_n\}$, $(\dist(u,v))^2\leq \max\{\max_{i\in\{1,\ldots,n\}}\|\bfm_i\|_2^2,\max_{i,j\in\{1,\ldots,n\}}\|\bfm_i-\bfm_j\|_2^2\}\leq 4n$ because each element of $M$ is in $\{-1,0,1\}$.

We claim that it is possible to obtain a matrix $M'$ with $\rank(M')\leq r-h$ by deleting at most $k$ columns of $M$ if and only if $(X,\dist)$ admits an embedding into $\mathbb{R}^d$ after deleting at most $k$ outliers. 

Suppose that there is a set $A$ of columns of $M$ of size at most $k$ such that the matrix $M'$ obtained from $M$ by deleting the columns of $A$ is of rank at most $d$. We set $O=\{x_i\mid 1\leq i\leq n\text{ and  }\bfm_i\in A\}$. We define the mapping $\varphi\colon X\setminus O\rightarrow \mathbb{R}^r$ as follows:
\begin{itemize}
\item $\varphi(p_i)=\mathbf{0}$ for all $i\in\{0,\ldots,k\}$,
\item $\varphi(x_i)=\bfm_i$ for all $x_i\in \{x_1,\ldots,x_n\}\setminus O$,
\end{itemize} 
where $\mathbf{0}$ denotes the zero vector. By the definition of $\dist$, we have that $\varphi$ is isometric. Because $\rank(M')\leq d$, the vectors $\bfm_i$ corresponding to the columns of $M'$ are in a $d$-dimensional subspace $L$. Then the points of the set $\{x_1,\ldots,x_n\}\setminus O$ are embedded in $L$ by $\varphi$. Furthermore, because $\mathbf{0}$ is in $L$, the points $p_0,\ldots,p_k$ are also mapped to $L$.
Thus, $\varphi$ isometrically maps $(X\setminus O,\dist)$ into $L$. This means that $(X\setminus O,\dist)$ is embeddable into $\mathbb{R}^d$.  

For the opposite implication, assume that there is a set $O$ of at most $k$ outliers such that $(X\setminus O,\dist)$. Because $O$ is of size at most $k$, there is $h\in\{0,\ldots,k\}$ such that $p_h\notin O$. Then  $(\{p_h\}\cup(\{x_1,\ldots,x_n\}\setminus O),\dist)$ is embeddable in $\mathbb{R}^d$. As $r\geq d$, the distance space is also embeddable in $\mathbb{R}^r$.
Consider the following mapping $\varphi\colon \{p_h\}\cup(\{x_1,\ldots,x_n\}\setminus O)\rightarrow \mathbb{R}^r$:
\begin{itemize}
\item $\varphi(p_h)=\mathbf{0}$,
\item $\varphi(x_i)=\bfm_i$ for all $x_i\in \{x_1,\ldots,x_n\}\setminus O$.
\end{itemize} 
This embedding is isometric by the definition $\dist$. Then, by~\Cref{prop:unique}, $\varphi$ gives a unique realization of $(\{p_h\}\cup(\{x_1,\ldots,x_n\}\setminus O),\dist)$ in $\mathbb{R}^r$ up to rigid transformations of $\mathbb{R}^r$. We have that $p_h$ is mapped to $\mathbf{0}$ and the points of 
$\{x_1,\ldots,x_n\}\setminus O$ are mapped to $\{\bfm_i\colon 1\leq i\leq n\text{ and }x_i\notin O\}$. Then because $(\{p_h\}\cup(\{x_1,\ldots,x_n\}\setminus O),\dist)$ is embeddable in $\mathbb{R}^d$, we have that the points $\mathbf{0}$ and $\bfm_i$ for $i\in\{1,\ldots,n\}$ such that $x_i\notin O$ are in a $d$-dimensional subspace. This means that $\rank(M')\leq d$ where $M'$ is the matrix obtained from $M$ by deleting columns $\bfm_i$ for $i\in\{1,\ldots,n\}$ such that $x_i\in O$. Since $|O|\leq k$, we got a matrix $M'$ with $\rank(M)\leq \rank(M)-h$ by deleting at most $k$ columns. This completes the proof.
\end{proof}
}

\bibliography{book_pc.bib}

\end{document}